\newcommand{\cmp}{Comm. Math. Phys.~}
\newcommand{\jfa}{J. Funct. Anal.~}
\newcommand{\jpa}{J. Phys. A~}
\newcommand{\pra}{Phys. Rev. A~}
\definecolor{myurlcolor}{rgb}{0,0,0.7}
\newcommand{\blue}{\textcolor{blue}}
\newcommand{\proj}[1]{| #1\rangle\!\langle #1 |}
\newcommand{\tinyspace}{\mspace{1mu}}
\newcommand{\op}[1]{\operatorname{#1}}
\newcommand{\abs}[1]{\left\lvert\tinyspace #1 \tinyspace\right\rvert}
\newcommand{\norm}[1]{\left\lVert\tinyspace #1 \tinyspace\right\rVert}
\renewcommand{\det}{\operatorname{det}}
\renewcommand{\t}{{\scriptscriptstyle\mathsf{T}}}
\newcommand{\setft}[1]{\mathrm{#1}}
\newcommand{\density}[1]{\setft{D}\left(#1\right)}
\newcommand{\sep}[1]{\setft{Sep}\left(#1\right)}
\newcommand{\im}{\op{im}}
\newcommand{\sign}{\op{sign}}
\def\SO{\mathsf{SO}}
\def\SU{\mathsf{SU}}
\def\h{\mathfrak{h}}
\def\k{\mathfrak{k}}
\def\liet{\mathfrak{t}}
\def\su{\mathfrak{su}}
\def\Ad{\mathrm{Ad}}
\def\vol{\mathrm{vol}}
\def\dh{\mathrm{DH}}
\def\sep{\mathrm{sep}}
\def\haar{\mathrm{Haar}}
\def\Res{\mathrm{Res}}
\def \dif {\mathrm{d}}
\def \diag {\mathrm{diag}}
\def \vol {\mathrm{vol}}
\def \re {\mathrm{Re}}
\def \im {\mathrm{Im}}
\def\I{\mathbb{1}}
\def\zero{\mathbf{0}}
\def\Pol{\mathrm{Pol}}
\newenvironment{mylist}[1]{\begin{list}{}{
    \setlength{\leftmargin}{#1}
    \setlength{\rightmargin}{0mm}
    \setlength{\labelsep}{2mm}
    \setlength{\labelwidth}{8mm}
    \setlength{\itemsep}{0mm}}}
    {\end{list}}
\def\ot{\otimes}
\newcommand{\Inner}[2]{\left\langle #1 , #2\right\rangle}
\newcommand{\Innerm}[3]{\left\langle #1 \left| #2 \right| #3 \right\rangle}
\newcommand{\defeq}{\stackrel{\smash{\textnormal{\tiny def}}}{=}}
\newcommand{\Herm}{\mathrm{Herm}}
\newcommand{\pa}[1]{(#1)}
\newcommand{\Pa}[1]{\left(#1\right)}
\newcommand{\Br}[1]{\left[#1\right]}
\newcommand{\set}[1]{\{#1\}}
\newcommand{\Set}[1]{\left\{#1\right\}}
\DeclareMathOperator{\trace}{Tr}
\newcommand{\ptr}[2]{\trace_{#1}\pa{#2}}
\newcommand{\Ptr}[2]{\trace_{#1}\Pa{#2}}
\newcommand{\Tr}[1]{\Ptr{}{#1}}
\newcommand{\Abs}[1]{\left|\tinyspace#1\tinyspace\right|}
\def\cC{\mathcal{C}}\def\cE{\mathcal{E}}
\def\cF{\mathcal{F}}\def\cG{\mathcal{G}}\def\cH{\mathcal{H}}\def\cI{\mathcal{I}}
\def\cO{\mathcal{O}}
\def\cU{\mathcal{U}}
\def\bsA{\boldsymbol{A}}\def\bsB{\boldsymbol{B}}\def\bsC{\boldsymbol{C}}\def\bsD{\boldsymbol{D}}\def\bsE{\boldsymbol{E}}
\def\bsH{\boldsymbol{H}}
\def\bsM{\boldsymbol{M}}\def\bsO{\boldsymbol{O}}
\def\bsT{\boldsymbol{T}}
\def\bsU{\boldsymbol{U}}\def\bsV{\boldsymbol{V}}\def\bsX{\boldsymbol{X}}\def\bsY{\boldsymbol{Y}}
\def\bsZ{\boldsymbol{Z}}
\def\bsa{\boldsymbol{a}}\def\bsb{\boldsymbol{b}}\def\bse{\boldsymbol{e}}
\def\bsh{\boldsymbol{h}}
\def\bst{\boldsymbol{t}}
\def\bsx{\boldsymbol{x}}
\def\rD{\mathrm{D}}
\def\rH{\mathrm{H}}
\def\rK{\mathrm{K}}
\def\rS{\mathrm{S}}
\def\bbC{\mathbb{C}}
\def\bbR{\mathbb{R}}\def\bbT{\mathbb{T}}
\def\sfU{\mathsf{U}}
\newtheorem{thrm}{Theorem}[section]
\newtheorem{prop}[thrm]{Proposition}
\theoremstyle{definition}
\newtheorem{definition}[thrm]{Definition}
\newtheorem{remark}[thrm]{Remark}
\newtheorem{exam}[thrm]{Example}
\numberwithin{equation}{section}
\newcounter{questionnumber}
\begin{document}

\title{One application of Duistermaat-Heckman measure in quantum information theory}

\author{\blue{Lin Zhang}\footnote{E-mail: godyalin@163.com}, \quad\blue{Xiaohan Jiang},\quad \blue{Bing Xie}\\
  {\it\small School of Science, Hangzhou Dianzi University, Hangzhou 310018, PR~China}}
\date{}
\maketitle

\begin{abstract}
While the exact separability probability of 8/33 for two-qubit
states under the Hilbert-Schmidt measure has been reported by Huong
and Khoi
[\href{https://doi.org/10.1088/1751-8121/ad8493}{J.Phys.A:Math.Theor.{\bf57},
445304(2024)}], detailed derivations remain inaccessible for general
audiences. This paper provides a comprehensive, self-contained
derivation of this result, elucidating the underlying geometric and
probabilistic structures. We achieve this by developing a framework
centered on the computation of Hilbert-Schmidt volumes for key
components: the quantum state space, relevant flag manifolds, and
regular (co)adjoint orbits. Crucially, we establish and leverage the
connection between these Hilbert-Schmidt volumes and the symplectic
volumes of the corresponding regular co-adjoint orbits, formalized
through the Duistermaat-Heckman measure. By meticulously
synthesizing these volume computations --- specifically, the ratios
defining the relevant probability measures --- we reconstruct and
rigorously verify the 8/33 separability probability. Our approach
offers a transparent pathway to this fundamental constant, detailing
the interplay between symplectic geometry, representation theory,
and quantum probability.
\end{abstract}

\newpage\tableofcontents\newpage

\section{Introduction}


Quantum entanglement, a hallmark of quantum mechanics, is a
fundamental resource for quantum information processing.
Distinguishing entangled states from separable (non-entangled)
states is therefore a central problem in quantum information theory.
For bipartite systems, particularly the simplest non-trivial case of
two-qubits, determining the likelihood of encountering a separable
state within the ensemble of all possible states is a question of
significant theoretical interest. This is formalized as the
\emph{separability probability}
\cite{Zyczkowski1998a,Zyczkowski1998b}.

Under the \emph{Hilbert-Schmidt measure} --- a natural metric
induced by the Hilbert-Schmidt distance on the space of density
matrices \cite{Zyczkowski2002}
--- the exact value of the separability probability for two-qubit
states has been a subject of extensive numerical investigation and
conjecture. Early numerical studies strongly suggested the fraction
8/33 as the exact value (with 29/64 for the real two-qubit case)
\cite{Slater2007,Slater2012,Slater2013}. This conjecture was
rigorously confirmed recently by Huong and Khoi
\cite{Huong2024}(separately, the fraction 29/64 was analytically
proved by Lovas and Andai \cite{Lovas2017}). However, while their
result establishes the exact separability probability, the detailed
mathematical derivation presented in \cite{Huong2024} remains
inaccessible to a significant portion of the community.

This paper aims to bridge this gap in accessibility and
understanding. Our primary contribution is to provide a detailed,
self-contained, and pedagogically clear derivation of the
8/33\footnote{The derivation of the general value 8/33 is more
complex than that of the specific real two-qubit case 29/64.}
separability probability for two-qubit states under the
Hilbert-Schmidt measure. We achieve this by leveraging a powerful
geometric framework centered on the computation of volumes
\cite{Andai2006} in the relevant state spaces and their
substructures. The core of the used approach lies in the systematic
computation of Hilbert-Schmidt/symplectic volumes associated with
key geometric objects:
\begin{enumerate}[(1)]
\item \textbf{Flag manifold:} This is a specific instance constructed by taking the quotient of the unitary group $\mathsf{U}(N)$ by its maximal torus.
\item \textbf{A regular adjoint orbit}:  It is an orbit of a fixed diagonal matrix with distinct diagonal entries under the conjugate action of unitary
group $\sfU(N)$.  Note that this fixed diagonal matrix can be
translated to a regular element in the positive Weyl chamber for the
unitary group $\sfU(N)$.
\item \textbf{The space of all qudit states:} This is the compact
manifold of all density matrices acting on the same underlying space
$\bbC^N$.
\item \textbf{A regular co-adjoint orbit:} It is a orbit of a
regular element in the positive Weyl chamber under the co-adjoint
action of the unitary group $\sfU(N)$.
\end{enumerate}
A critical insight underpinning the derivation is the profound
connection between the Hilbert-Schmidt volumes of adjoint orbits and
the symplectic volumes of their corresponding regular co-adjoint
orbits. This connection is rigorously established through the
\emph{Duistermaat-Heckman (DH) measure} \cite{DH1982}, a fundamental
tool in symplectic geometry and geometric quantization that
intrinsically relates these volume measures. The DH measure's
properties are well-documented, with recent generalizations further
extending its scope \cite{Crooks2023,Crooks2024}. Crucially, its
density relative to the Lebesgue measure on the dual Lie algebra is
piecewise polynomial, computable via the Boylal-Vergne-Paradan jump
formula \cite{Boysal2009}. Moreover, the DH formalism has proven
instrumental in analyzing joint eigenvalue distributions of marginal
states in random multipartite quantum systems \cite{Christandl2014},
demonstrating its utility in quantum information theory.


The remainder of this paper is structured as follows: In
Section~\ref{sect:2}, we review necessary background on the
Hilbert-Schmidt metric, the geometry of the quantum state space, and
separability criteria. We provide details about the computation of
the Hilbert-Schmidt volume of the full quantum state space in
Section~\ref{sect:3} by calculating volume of flag manifolds and
(co)adjoint orbit. Section~\ref{sect:4} explores the connection to
symplectic geometry via co-adjoint orbits and the
Duistermaat-Heckman measure, deriving the corresponding symplectic
volumes and linking them back to the Hilbert-Schmidt framework. In
Section~\ref{sect:5}, we synthesizes the volume calculations from
the previous sections to compute the volume of the separable set and
derives the exact separability probability. Finally, in
Section~\ref{sect:6}, we conclude with a discussion of the
significance of the result and potential extensions.

\section{Preliminaries and notations}\label{sect:2}

Let the set of all $N\times N$ complex Hermitian matrices be denoted
by $\Herm(\bbC^N)$. Furthermore, $\Herm_{\mathrm{s}}(\bbC^N)$ denote
the set of all elements in $\Herm(\bbC^N)$ with simple spectrum
(i.e., those $N\times N$ complex Hermitian matrices with $N$
distinct eigenvalues).

\begin{prop}[\cite{Zhang2015,Deift2009}]\label{prop:1}
It holds that the subset $\Herm_{\mathrm{s}}(\bbC^N)$ is an open and
dense of full measure in $\Herm(\bbC^N)$. That is, the Lebesgue
measure of the complement is zero, i.e., the Lebesgue measure of
$\Herm(\bbC^N)\backslash \Herm_{\mathrm{s}}(\bbC^N)$ is zero.
\end{prop}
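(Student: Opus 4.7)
The plan is to realize the ``bad set'' $\Herm(\bbC^N)\setminus \Herm_{\mathrm{s}}(\bbC^N)$ as the zero locus of a single non-trivial real polynomial in the entries of the matrix, and then invoke the standard fact that the zero set of a non-zero polynomial on $\bbR^n$ is closed and has Lebesgue measure zero. Since $\Herm(\bbC^N)$ is a real vector space of dimension $N^2$ (identified via its matrix entries $A_{ii}\in\bbR$ and $\re A_{ij},\im A_{ij}\in\bbR$ for $i<j$), the Lebesgue measure on $\Herm(\bbC^N)$ is unambiguous, and showing that the complement of $\Herm_{\mathrm{s}}(\bbC^N)$ is contained in such a zero locus proves both openness/density and the measure-zero statement in one stroke.

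The main step is to exhibit the discriminant as a polynomial. For $A\in\Herm(\bbC^N)$ with (real) eigenvalues $\lambda_1,\ldots,\lambda_N$, I would consider
\begin{equation*}
\Delta(A)\;=\;\prod_{1\le i<j\le N}(\lambda_i-\lambda_j)^2.
\end{equation*}
This is a symmetric polynomial in $\lambda_1,\ldots,\lambda_N$, hence by the fundamental theorem of symmetric polynomials it is a polynomial in the elementary symmetric functions $e_k(\lambda)$. But those $e_k(\lambda)$ are, up to sign, the coefficients of the characteristic polynomial $\det(tI-A)$, which are themselves polynomials (with integer coefficients) in the real and imaginary parts of the entries of $A$. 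Composing, $\Delta$ is a real polynomial in the $N^2$ real coordinates on $\Herm(\bbC^N)$, and by construction $\Delta(A)\ne 0$ if and only if $A\in\Herm_{\mathrm{s}}(\bbC^N)$.

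Next I would check that $\Delta$ is not the zero polynomial. This is immediate by exhibiting one matrix with distinct eigenvalues, for instance $A_0=\mathrm{diag}(1,2,\ldots,N)$, which gives $\Delta(A_0)=\prod_{i<j}(i-j)^2>0$. Then I would invoke the classical lemma that if $p\colon\bbR^n\to\bbR$ is a polynomial which is not identically zero, its zero set $\{p=0\}$ is closed and has Lebesgue measure zero; a standard proof is by induction on $n$ using Fubini (on each line parallel to a coordinate axis, $p$ restricts to a non-trivial polynomial in one variable with only finitely many zeros, provided that axis is not contained in $\{p=0\}$, and the exceptional directions form a lower-dimensional set to which induction applies).

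Applying this to $\Delta$ with $n=N^2$, the set $\Herm(\bbC^N)\setminus\Herm_{\mathrm{s}}(\bbC^N)=\{\Delta=0\}$ is closed with Lebesgue measure zero, which yields simultaneously that $\Herm_{\mathrm{s}}(\bbC^N)$ is open, that its complement has measure zero, and (since a measure-zero closed set in $\bbR^{N^2}$ cannot contain a non-empty open set) that it is dense. No serious obstacle is expected; the only delicate point is the clean verification that the symmetric function $\Delta$ really descends to a polynomial in the matrix entries via the characteristic coefficients, which I will state carefully rather than derive in detail.
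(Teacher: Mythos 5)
Your proof is correct. Note, however, that the paper gives no argument at all for this proposition --- its ``proof'' consists of the single sentence that the technical proof is omitted, deferring to the cited references --- so there is no in-text argument to compare yours against. Your route is the canonical one (and essentially the one found in the references the paper cites): identify $\Herm(\bbC^N)$ with $\bbR^{N^2}$, observe that the discriminant $\Delta(A)=\prod_{i<j}(\lambda_i-\lambda_j)^2$ descends, via symmetric functions and the characteristic-polynomial coefficients (using $A_{ji}=\overline{A_{ij}}$ to land in the real coordinates), to a real polynomial on $\bbR^{N^2}$ vanishing exactly off $\Herm_{\mathrm{s}}(\bbC^N)$, check non-triviality on $\mathrm{diag}(1,\ldots,N)$, and invoke the standard Fubini-induction lemma that the zero set of a non-trivial polynomial is closed and Lebesgue-null. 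All three claims (openness, density, full measure) then follow at once, with density coming from the fact that a null closed set contains no non-empty open set. The only step you state loosely is the measure-zero lemma itself, but your sketch (split off the top-degree coefficient in one variable, apply induction to where it vanishes, and use Fubini on the complementary slices) is exactly the standard proof, so nothing is missing.
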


\begin{proof}
The technical proof is omitted here.
\end{proof}

We will denote by $\cH_N=\cH_1\ot\cH_2$ the $N$-dimensional Hilbert
space that describes the composite system of two components
$\cH_1,\cH_2$ with $\dim\cH_1=n, \dim\cH_2=m$ and $N=nm$. The set
$\rD(\bbC^n\ot\bbC^m)$ of bipartite states on $\cH_N$ can be
represented by complex $N\times N$ positive semi-definite matrices
of unit trace. Let
\begin{eqnarray}
\rD_{\mathrm{s}}(\bbC^n\ot\bbC^m):=\rD(\bbC^n\ot\bbC^m)\cap
\Herm_{\mathrm{s}}(\bbC^n\ot\bbC^m).
\end{eqnarray}
From Proposition~\ref{prop:1}, we see easily that
$\rD_{\mathrm{s}}(\bbC^n\ot\bbC^m)$ is also open and dense of full
measure in $\rD(\bbC^n\ot\bbC^m)$. A bipartite state $\rho_{AB}$ in
$\rD(\bbC^n \ot\bbC^m)$ is called \emph{separable} if and only if it
can be written as a convex combination of local product states:
$\rho_{AB} = \sum^K_{k=1} p_k \rho_{A,k} \ot \rho_{B,k}$ for some
$\rho_{A,k} \in \rD(\bbC^n)$, $\rho_{B,k} \in \rD(\bbC^m)$, and a
probability vector $(p_1,\ldots,p_K)$.

A natural parametrization of $\rho\in\rD(\bbC^n\ot\bbC^m)$ makes use
of the traceless Hermitian generators of $\SU(n)$ and $\SU(m)$:
\begin{eqnarray}\label{eq:parametrization}
\rho =
\frac1{nm}\Pa{\I_{nm}+\sum^{n^2-1}_{i=1}a_i\bsA_i\ot\I_m+\I_n\ot
\sum^{m^2-1}_{j=1}b_j\bsB_j +
\sum^{n^2-1}_{k=1}\sum^{m^2-1}_{l=1}c_{kl}\bsA_k\ot\bsB_l}
\end{eqnarray}
where
\begin{itemize}
\item all $a_i,b_j,c_{kl}$ are in $\bbR$,
\item $\bsA_k$ and $\bsB_l$ are the Hermitian generators of the
$\SU(n)$ and $\SU(m)$, respectively, chosen to satisfy
$\Inner{\bsA_i}{\bsA_j}=2\delta_{ij}=\Inner{\bsB_i}{\bsB_j}$,
\item two reduced states are: $\rho_{\bbC^n}=\frac1n\Pa{\I_n+\sum^{n^2-1}_{i=1}a_i\bsA_i}$
and $\rho_{\bbC^m}=\frac1m\Pa{\I_m+\sum^{m^2-1}_{j=1}b_j\bsB_j}$,
\item the vector
\begin{eqnarray*}
\bsx=\Pa{a_1,\ldots,a_{n^2-1},b_1,\ldots,b_{m^2-1},c_{11},\ldots,c_{n^2-1,m^2-1}}^\t\in\bbR^{(nm)^2-1}
\end{eqnarray*}
completely determines the state $\rho\in\rD(\bbC^n\ot\bbC^m)$ and
vice versa. The positivity of density matrices restricts the
possible vectors $\bsx$ to a proper subset $D^{(n\times m)}$ of
$\bbR^{(nm)^2-1}$.
\end{itemize}
According to the parametrization Eq.~\eqref{eq:parametrization} of
bipartite states, the Hilbert-Schmidt distance
$d_{\rH\rS}(\rho',\rho):=\sqrt{\Inner{\rho'-\rho}{\rho'-\rho}_{\rH\rS}}$
(see below Eq.~\eqref{eq:HS}) induces a flat metric on $D^{(n\times
m)}$ because
\begin{eqnarray}
d_{\rH\rS}(\rho',\rho) =
\frac1{nm}\Pa{2m\sum^{n^2-1}_{i=1}(a'_i-a_i)^2 +
2n\sum^{m^2-1}_{j=1}(b'_j-b_j)^2 +
4\sum^{n^2-1}_{k=1}\sum^{m^2-1}_{l=1}(c'_{kl}-c_{kl})^2}^\frac12.
\end{eqnarray}
In particular, for $n=m=2$, we get that
\begin{eqnarray}
d_{\rH\rS}(\rho',\rho) = \frac12\Pa{\sum^3_{i=1}(a'_i-a_i)^2 +
\sum^3_{j=1}(b'_j-b_j)^2 +
\sum^3_{k=1}\sum^3_{l=1}(c'_{kl}-c_{kl})^2}^\frac12 =
\frac12d_{\mathrm{Euclid}}(\bsx',\bsx),
\end{eqnarray}
where $\bsx'=(\bsa',\bsb',c'_{kl})$ and $\bsx=(\bsa,\bsb,c_{kl})$.
In such case, up to an overall constant $\frac12$, the following
mapping is bijective and isometric:
\begin{eqnarray}
(\density{\bbC^2\ot\bbC^2},d_{\rH\rS})\cong
(D^{(2\times2)},d_{\mathrm{Euclid}}).
\end{eqnarray}

\section{Riemannian volumes of manifolds}\label{sect:3}

Recall that an $N$-dimensional oriented manifold $\bsM$ with a
pseudo-Riemannian metric $g$ has a standard volume form $\omega$,
known as the Riemannian volume form, whose expression in an oriented
chart $(x^1,\ldots,x^N)$ is given by
\begin{eqnarray}
\omega = \sqrt{\det(g)}\dif x^1\wedge\cdots\wedge\dif x^N,
\end{eqnarray}
corresponding to the line element (aka, arc length differential or
differential of arc length) $\dif s^2 = \sum^N_{i,j=1}\dif x^i
g_{ij}\dif x^j$. If $D$ is a domain of integration on $\bsM$, then
\begin{eqnarray}
\vol_g(D) :=\int_{D}\omega
\end{eqnarray}
is called the \emph{Riemannian volume} of $D$.

In particular, on $\rD_{\mathrm{s}}(\bbC^N)$, we have the
Hilbert-Schmidt inner product, which is defined by
\begin{eqnarray}\label{eq:HS}
\Inner{\bsX}{\bsY} :=\Tr{\bsX^\dagger\bsY}.
\end{eqnarray}
Differentiating this inner product yields a metric on
$\rD_{\mathrm{s}}(\bbC^N)$ which we denote by $g_{\rH\rS}$. We shall
denote by $\vol_{\rH\rS}$ the Riemannian volume form associated with
$g_{\rH\rS}$ and refer to the volume measured by $\vol_{\rH\rS}$ as
the HS volume \cite{Zhang2018}.

\subsection{Hilbert-Schmidt volumes of flag manifolds}

A complex matrix $\bsZ$ can always be written as $\bsZ =
\bsX+\mathrm{i}\bsY$, where $\bsX=\re(\bsZ)$ and $\bsY=\im(\bsZ)$
are real matrices. Denote
\begin{eqnarray}
[\dif\bsZ]:=[\dif\bsX][\dif\bsY],
\end{eqnarray}
where $[\dif \bsX]$ means the product of independent differentials
in $\bsX$, and $[\dif\bsY]$ has a similar meaning.

Assume that $\bsU=(u_{ij})\in\sfU(N)$, where $u_{ij}\in\bbC$. Let
$u_{kk}=\abs{u_{kk}}e^{\mathrm{i}\theta_k}$ be its polar form, where
$\theta_k\in[-\pi,\pi)$. Then
\begin{eqnarray}
\bsU =
\Pa{\bsU\bsT^{-1}}\bsT,\quad\bsT=\diag\Pa{e^{\mathrm{i}\theta_1},\ldots,e^{\mathrm{i}\theta_N}}.
\end{eqnarray}
This means that any $\bsU\in\sfU(N)$ can be factorized into a
product of a unitary matrix with diagonal entries being nonnegative
and a diagonal unitary matrix. Thus, we have the following
diffeomorphism:
\begin{eqnarray}
\sfU(N)\cong \sfU(N)/\bbT^N\times \bbT^N,
\end{eqnarray}
where
$\bbT^N:=\Set{\diag\Pa{e^{\mathrm{i}\theta_1},\ldots,e^{\mathrm{i}\theta_N}}:\theta_k\in[-\pi,\pi),k=1,\ldots,N}$,
which is the maximal torus of $\sfU(N)$. For any measurable $f$ over
$\sfU(N)$,
\begin{eqnarray}
\int_{\sfU(N)}f(\bsU)[\bsU^\dagger\dif \bsU] =
\int_{\sfU(N)/\bbT^N}\int_{\bbT^N}f(\bsV\bsT)[\bsV^\dagger\dif
\bsV][\bsT^\dagger\dif\bsT],
\end{eqnarray}
where $\bsU=\bsV\bsT$ for $\bsV\in\sfU(N)/\bbT^N$ and
$\bsT\in\bbT^N$, and $[\bsU^\dagger\dif \bsU]=[\bsV^\dagger\dif
\bsV][\bsT^\dagger\dif \bsT]$, where $[\bsU^\dagger\dif \bsU]$ is a
left-invariant matrix-valued differential form, called Maurer-Cartan
form. Since $v_{kk}\in\bbR_{>0}$, and moreover $v_{kk}$ is not an
independent variable, it follows that
\begin{eqnarray}
[\bsV^\dagger\dif \bsV] &=& \prod_{1\leqslant i<j\leqslant
N}\re(\bsV^\dagger\dif\bsV)_{ij})\im(\bsV^\dagger\dif\bsV)_{ij}),\\{}
[\bsU^\dagger\dif \bsU] &=&
\prod^N_{k=1}\im((\bsU^\dagger\dif\bsU)_{kk})\times\prod_{1\leqslant
i<j\leqslant
N}\re(\bsU^\dagger\dif\bsU)_{ij}\im(\bsU^\dagger\dif\bsU)_{ij}
\end{eqnarray}
The Euclid volume of the flag manifold $\sfU(N)/\bbT^N$ is given by
\cite{Zhang2015}:
\begin{eqnarray}
\vol_{\mathrm{Euclid}}(\sfU(N)/\bbT^N) = \int_{\sfU(N)/\bbT^N}
[\bsV^\dagger\dif\bsV] =
\frac{\pi^{\binom{N}{2}}}{\prod^N_{k=1}\Gamma(k)}.
\end{eqnarray}
\begin{prop}[\cite{Zhang2015}]
The Hilbert-Schmidt volume of the flag manifold
$\mathsf{U}(N)/\bbT^N$ is given by
\begin{eqnarray}
\vol_{\rH\rS}(\mathsf{U}(N)/\bbT^N) = 2^{\binom{N}{2}}
\vol_{\mathrm{Euclid}}(\mathsf{U}(N)/\bbT^N) =
\frac{(2\pi)^{\binom{N}{2}}}{\prod^N_{k=1}\Gamma(k)},
\end{eqnarray}
\end{prop}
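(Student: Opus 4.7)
The plan is to derive the claimed factor $2^{\binom{N}{2}}$ as a Jacobian-type comparison between the Riemannian volume form determined by the Hilbert--Schmidt metric and the ``bare'' differential form $[\bsV^\dagger \dif \bsV]$ whose integral, by the displayed formula immediately preceding the statement, already yields the Euclidean volume $\pi^{\binom{N}{2}}/\prod_{k=1}^{N}\Gamma(k)$. Since the proposition asserts $\vol_{\rH\rS}=2^{\binom{N}{2}}\vol_{\mathrm{Euclid}}$, essentially all of the work is in this pointwise comparison of volume forms on tangent spaces.

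First I would exploit left-invariance of both the HS metric and the Maurer--Cartan form. Writing $\bsV^\dagger\dif\bsV=\Omega$ with $\Omega$ anti-Hermitian, a tangent vector at $\bsV\in\sfU(N)$ is $\dif\bsV=\bsV\Omega$, and the HS line element is
\begin{eqnarray*}
\dif s^2_{\rH\rS}=\Tr(\dif\bsV^\dagger\dif\bsV)=\Tr(\Omega^\dagger\Omega)
=\sum_{k=1}^{N}\im(\Omega_{kk})^2+2\sum_{1\leqslant i<j\leqslant N}\Pa{\re(\Omega_{ij})^2+\im(\Omega_{ij})^2},
\end{eqnarray*}
where I have used $\Omega_{kk}\in i\bbR$ and $\Omega_{ji}=-\overline{\Omega_{ij}}$.

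Next I would descend to $\sfU(N)/\bbT^N$. The vertical directions for this torus fibration are precisely the diagonal entries $\im(\Omega_{kk})$, so horizontal tangent vectors are parametrized by the off-diagonal real and imaginary parts $x_{ij}=\re(\Omega_{ij})$, $y_{ij}=\im(\Omega_{ij})$ for $i<j$. The quotient HS metric is therefore
\begin{eqnarray*}
\dif s^2_{\rH\rS}\bigr|_{\sfU(N)/\bbT^N}=\sum_{i<j}\Pa{2\,\dif x_{ij}^2+2\,\dif y_{ij}^2},
\end{eqnarray*}
so in the global ``coordinates'' $(x_{ij},y_{ij})_{i<j}$ induced by the Maurer--Cartan form the metric tensor is $2\,\bsI$ on a $2\binom{N}{2}=N(N-1)$ dimensional space. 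Hence $\sqrt{\det g_{\rH\rS}}=2^{N(N-1)/2}=2^{\binom{N}{2}}$, and the Riemannian volume form is
\begin{eqnarray*}
\omega_{\rH\rS}=2^{\binom{N}{2}}\prod_{i<j}\dif x_{ij}\,\dif y_{ij}
=2^{\binom{N}{2}}[\bsV^\dagger\dif\bsV].
\end{eqnarray*}
Integrating over $\sfU(N)/\bbT^N$ and substituting the Euclidean volume formula recalled above then gives
\begin{eqnarray*}
\vol_{\rH\rS}(\sfU(N)/\bbT^N)=2^{\binom{N}{2}}\vol_{\mathrm{Euclid}}(\sfU(N)/\bbT^N)
=\frac{(2\pi)^{\binom{N}{2}}}{\prod_{k=1}^{N}\Gamma(k)}.
\end{eqnarray*}

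The computation is short, but the delicate point, and the place I would be most careful, is justifying that the coordinate system built from the Maurer--Cartan entries $(x_{ij},y_{ij})_{i<j}$ is exactly the one used in defining $[\bsV^\dagger\dif\bsV]$, so that no extra Jacobian creeps in when passing between the two volume forms. This relies on the fact that $[\bsV^\dagger\dif\bsV]$ is left-invariant, that the off-diagonal entries are genuinely independent (while the diagonal entries vanish on the quotient because $v_{kk}\in\bbR_{>0}$, as noted just above the statement), and that the identification of the tangent space of the flag manifold with the off-diagonal anti-Hermitian matrices via $\Omega\mapsto\bsV\Omega$ is an isometry for the HS structure. Once these invariance/identification checks are in place, the factor $2^{\binom{N}{2}}$ is simply $\sqrt{\det(2\bsI_{N(N-1)})}$ and the proof is complete.
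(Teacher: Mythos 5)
Your proposal is correct and follows essentially the same route as the paper: express the HS line element through the Maurer--Cartan form $\bsV^\dagger\dif\bsV$, observe that the diagonal entries drop out on the flag manifold (you phrase this as the vertical directions of the torus fibration, the paper as the diagonal being simultaneously real and purely imaginary on the chosen section), read off the metric $2\,\bsI$ in the off-diagonal coordinates, and extract the Jacobian factor $\sqrt{\det(2\,\bsI)}=2^{\binom{N}{2}}$ multiplying $\vol_{\mathrm{Euclid}}(\sfU(N)/\bbT^N)=\pi^{\binom{N}{2}}/\prod_{k=1}^{N}\Gamma(k)$. The identification checks you flag at the end are exactly the facts the paper also relies on implicitly, so there is no gap.
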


\begin{proof}
In fact, the line element for the Hilbert-Schmidt inner metric is
given by
$$
\dif s^2_{\rH\rS} = \Inner{\dif\bsV}{\dif\bsV}=
\Inner{\bsV^\dagger\dif\bsV}{\bsV^\dagger\dif\bsV},\quad
\bsV\in\sfU(N)/\bbT^N.
$$
It is clearly seen that $\bsV^\dagger\dif\bsV$ is skew-Hermitian and
its diagonal part is real. Thus the diagonal part of
$\bsV^\dagger\dif\bsV$ must be vanished, and
\begin{eqnarray*}
\dif s^2_{\rH\rS} &=& \sum^N_{k=1}[(\bsV^\dagger\dif\bsV)_{kk}]^2 +
2\sum_{1\leqslant i<j\leqslant
N}\Pa{[\re(\bsV^\dagger\dif\bsV)_{ij}]^2 +
[\im(\bsV^\dagger\dif\bsV)_{ij}]^2} \\
&=& 2\sum_{1\leqslant i<j\leqslant
N}\Pa{[\re(\bsV^\dagger\dif\bsV)_{ij}]^2 +
[\im(\bsV^\dagger\dif\bsV)_{ij}]^2}.
\end{eqnarray*}
The corresponding HS volume element is given by
\begin{eqnarray*}
\dif V_{\rH\rS} &=& \prod_{1\leqslant i<j\leqslant
N}\Pa{\sqrt{2}\re(\bsV^\dagger\dif\bsV)_{ij}\sqrt{2}\im(\bsV^\dagger\dif\bsV)_{ij}}\\
&=& 2^{\binom{N}{2}}\prod_{1\leqslant i<j\leqslant
N}\Pa{\re(\bsV^\dagger\dif\bsV)_{ij}\im(\bsV^\dagger\dif\bsV)_{ij}}
= 2^{\binom{N}{2}}[\bsV^\dagger\dif\bsV],
\end{eqnarray*}
which leads to the conclusion
\begin{eqnarray*}
\vol_{\rH\rS}\Pa{\sfU(N)/\bbT^N} &=& \int_{\sfU(N)/\bbT^N}\dif
V_{\rH\rS} = 2^{\binom{N}{2}}\int_{\sfU(N)/\bbT^N}[\bsV^\dagger\dif\bsV] \\
&=& 2^{\binom{N}{2}} \vol_{\mathrm{Euclid}}\Pa{\sfU(N)/\bbT^N}\\
&=&
2^{\binom{N}{2}}\frac{\pi^{\binom{N}{2}}}{\prod^N_{k=1}\Gamma(k)} =
\frac{(2\pi)^{\binom{N}{2}}}{\prod^N_{k=1}\Gamma(k)}.
\end{eqnarray*}
This completes the proof.
\end{proof}

\subsection{Hilbert-Schmidt volumes of regular adjoint orbits}

Denote the adjoint orbit $\cU_{\bsx}$ for
$\bsx=(x_1,\ldots,x_N)\in\bbR^N$, where $x_1>\cdots>x_N$, as
\begin{eqnarray}
\cU_{\bsx}:=
\Set{\bsU\diag(x_1,\ldots,x_N)\bsU^\dagger:\bsU\in\sfU(N)}.
\end{eqnarray}
Let $V_N(\bsx):=\prod_{1\leqslant i<j\leqslant N}(x_i-x_j)$. The
stablizer group of $\bsx$ is just the flag manifold
$\sfU(N)/\bbT^N$. Apparently, $\cU_{\bsx}\cong
\set{\bsx}\times\sfU(N)/\bbT^N$, where $\bbT^N=\sfU(1)^{\times N}$
is $N$-torus. Suppose that the unitary group $\sfU(N)$ is equipped
with Haar probability measure $\mu_{\haar}$. Consider the following
map
\begin{eqnarray*}
&\pi:& (\sfU(N),\mu_\haar)\longmapsto (\cU_{\bsx},\nu:=\pi_*\mu_\haar),\\
&&\bsU\mapsto\bsU\diag(x_1,\ldots,x_N)\bsU^\dagger.
\end{eqnarray*}
The push-forward measure of $\mu_\haar$ along the map $\pi$ is
$\nu:=\pi_*\mu_\haar$, called the \emph{orbital measure}
\cite{Olshanski2013}. So the integral along the orbit is given by
the following formula
\begin{eqnarray*}
&&\int_{\cU_{\bsx}}f\dif\nu = \Inner{\nu}{f} =
\Inner{\pi_*\mu_\haar}{f} = \Inner{\mu_{\haar}}{\pi^*f} \\
&&=
\int_{\sfU(N)}f\Pa{\bsU\diag(x_1,\ldots,x_N)\bsU^\dagger}\dif\mu_{\haar}(\bsU).
\end{eqnarray*}

%

The following result \cite{Collins2023} is mentioned without proof.
Here we provide detailed proof for the reference.
\begin{prop}\label{prop:HSvolumeoforbit}
The Hilbert-Schmidt volume of the adjoint orbit
$\cU_{\bsx}\cong\sfU(N)/\bbT^N$ is given by
\begin{eqnarray}
\vol_{\rH\rS}(\cU_{\bsx}) =
V_N(\bsx)^2\vol_{\rH\rS}(\sfU(N)/\bbT^N).
\end{eqnarray}
\end{prop}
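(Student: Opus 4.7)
The plan is to pull back the Hilbert-Schmidt metric from $\cU_{\bsx}$ to $\sfU(N)/\bbT^N$ via the parametrization $\bsV\mapsto\bsV\bsD\bsV^\dagger$ with $\bsD=\diag(x_1,\ldots,x_N)$, and to show that the resulting volume form differs from the one computed in the previous proposition by exactly the factor $V_N(\bsx)^2$. Since $x_1>\cdots>x_N$, the stabilizer of $\bsD$ under conjugation is precisely the maximal torus $\bbT^N$, so representatives $\bsV\in\sfU(N)/\bbT^N$ may be chosen such that the Maurer-Cartan form $\Omega:=\bsV^\dagger\dif\bsV$ has vanishing diagonal entries (the diagonal part is tangent to the fiber and plays no role in the orbit metric).

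The first step is to differentiate $\bsM=\bsV\bsD\bsV^\dagger$. Using $\dif\bsV=\bsV\Omega$ and $\dif\bsV^\dagger=-\Omega\bsV^\dagger$, which follow from unitarity, one obtains
\begin{eqnarray*}
\dif\bsM=\bsV[\Omega,\bsD]\bsV^\dagger,
\end{eqnarray*}
so that, by unitary invariance of the Hilbert-Schmidt inner product,
\begin{eqnarray*}
\dif s^2_{\rH\rS}=\Inner{\dif\bsM}{\dif\bsM}=\Tr{[\Omega,\bsD]^\dagger[\Omega,\bsD]}.
\end{eqnarray*}
The key algebraic observation is that, because $\bsD$ is diagonal, the entries of the commutator read $[\Omega,\bsD]_{ij}=(x_j-x_i)\Omega_{ij}$, which vanish on the diagonal and account only for the off-diagonal degrees of freedom.

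Combining this with the skew-Hermiticity relation $\Omega_{ji}=-\overline{\Omega_{ij}}$ gives
\begin{eqnarray*}
\dif s^2_{\rH\rS}=2\sum_{1\leqslant i<j\leqslant N}(x_i-x_j)^2\Pa{[\re(\bsV^\dagger\dif\bsV)_{ij}]^2+[\im(\bsV^\dagger\dif\bsV)_{ij}]^2}.
\end{eqnarray*}
This is formally identical to the line element appearing in the proof of the preceding proposition, except each off-diagonal coordinate pair $(\re\Omega_{ij},\im\Omega_{ij})$ is rescaled by the factor $(x_i-x_j)$. Taking the corresponding volume form, each pair contributes an extra factor $(x_i-x_j)^2$, so
\begin{eqnarray*}
\dif V_{\rH\rS}^{\cU_{\bsx}}=\Pa{\prod_{i<j}(x_i-x_j)^2}2^{\binom{N}{2}}[\bsV^\dagger\dif\bsV]=V_N(\bsx)^2\,\dif V_{\rH\rS}^{\sfU(N)/\bbT^N}.
\end{eqnarray*}
Integrating this identity over the flag manifold yields the claim.

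There is no real obstacle; the only point requiring care is the correct bookkeeping of the torus quotient, namely that the diagonal components of $\Omega$ must be omitted (equivalently, set to zero in the chosen representative of $\sfU(N)/\bbT^N$) because they lie in the annihilator of the tangent map to the orbit. Once this is recognized, the computation reduces to diagonalizing the metric in the basis provided by the Maurer-Cartan form and reading off the Jacobian $V_N(\bsx)^2$.
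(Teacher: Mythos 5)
Your proposal is correct and follows essentially the same route as the paper's proof: differentiating the parametrization $\bsV\mapsto\bsV\bsD\bsV^\dagger$, reducing $\dif\bsX$ to the commutator $[\bsV^\dagger\dif\bsV,\bsD]$ whose entries are $(x_j-x_i)(\bsV^\dagger\dif\bsV)_{ij}$, reading off the rescaled line element and hence the Jacobian factor $V_N(\bsx)^2$ in the Hilbert-Schmidt volume form, and integrating over $\sfU(N)/\bbT^N$. Your remark that the diagonal entries of the Maurer-Cartan form drop out (here automatically, since the commutator annihilates them) matches the paper's treatment in the preceding proposition.
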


\begin{proof}
Indeed, for $\bsX\in\cU_{\bsx}$, i.e.,
$\bsX=\bsV\diag(x_1,\ldots,x_N)\bsV^\dagger$, where
$\bsV\in\sfU(N)/\bbT^N$. Then
\begin{eqnarray*}
\dif\bsX=\dif\bsV \diag(x_1,\ldots,x_N)\bsV^\dagger +
\bsV\diag(x_1,\ldots,x_N)\dif\bsV^\dagger,
\end{eqnarray*}
which implies that
\begin{eqnarray*}
\bsV^\dagger(\dif\bsX)\bsV=\bsV^\dagger\dif\bsV
\diag(x_1,\ldots,x_N) + \diag(x_1,\ldots,x_N)\dif\bsV^\dagger\bsV.
\end{eqnarray*}
Due to the fact that $\bsV^\dagger\bsV=\I_N$, we infer that
$$
(\dif\bsV^\dagger)\bsV+\bsV^\dagger\dif\bsV=0\Longleftrightarrow
(\dif\bsV^\dagger)\bsV=-\bsV^\dagger\dif\bsV.
$$
This indicates that
$\bsV^\dagger(\dif\bsX)\bsV=\Br{\bsV^\dagger\dif\bsV,
\diag(x_1,\ldots,x_N)}$. Thus
\begin{eqnarray*}
[\dif\bsX] &=&
[\bsV^\dagger\dif\bsX\bsV]=\prod_{i<j}\Br{\bsV^\dagger\dif\bsV,
\diag(x_1,\ldots,x_N)}_{ij}\\
&=& \prod_{i<j}
(x_j-x_i)\re(\bsV^\dagger\dif\bsV)_{ij}(x_j-x_i)\im(\bsV^\dagger\dif\bsV)_{ij}\\
&=&V_N(\bsx)^2[\bsV^\dagger\dif\bsV],
\end{eqnarray*}
where $[\bsV^\dagger\dif\bsV]=\prod_{1\leqslant i<j\leqslant
N}\re(\bsV^\dagger\dif\bsV)_{ij}\im(\bsV^\dagger\dif\bsV)_{ij}$ is
the Euclid volume element on the flag manifold $\sfU(N)/\bbT^N$. We
can derive that $[\dif\bsX]=V_N(\bsx)^2[\bsV^\dagger\dif\bsV]$. Now
\begin{eqnarray*}
\vol_{\mathrm{Euclid}}(\cU_x)&=&\int_{\cU_{\bsx}}[\dif\bsX]=V_N(\bsx)^2\int_{\sfU(N)/\bbT^N}[\bsV^\dagger\dif\bsV]\\
&=&V_N(\bsx)^2\vol_{\mathrm{Euclid}}(\sfU(N)/\bbT^N).
\end{eqnarray*}
The differential of arc length is given by
\begin{eqnarray*}
\dif s^2_{\rH\rS} &=& \Inner{\dif\bsX}{\dif \bsX} = \Inner{[\bsV^\dagger\dif\bsV,\diag(x_1,\ldots,x_N)]}{[\bsV^\dagger\dif\bsV,\diag(x_1,\ldots,x_N)]}\\
&=&
\sum^N_{i,j=1}\abs{[\bsV^\dagger\dif\bsV,\diag(x_1,\ldots,x_N)]_{ij}}^2
=
\sum^N_{i,j=1}\abs{(x_j-x_i)(\bsV^\dagger\dif\bsV)_{ij}}^2\\
&=&\sum_{i\neq j}(x_j-x_i)^2\abs{(\bsV^\dagger\dif\bsV)_{ij}}^2 =
2\sum_{1\leqslant i<j\leqslant N}(x_j-x_i)^2\abs{(\bsV^\dagger\dif\bsV)_{ij}}^2\\
&=&\sum_{1\leqslant i<j\leqslant
N}\Pa{\sqrt{2}(x_j-x_i)\re(\bsV^\dagger\dif\bsV)_{ij}}^2+\sum_{1\leqslant
i<j\leqslant
N}\Pa{\sqrt{2}(x_j-x_i)\im(\bsV^\dagger\dif\bsV)_{ij}}^2.
\end{eqnarray*}
The corresponding HS volume is given by $\dif V_{\rH\rS} =
2^{\binom{N}{2}}V_N(\bsx)^2[\bsV^\dagger\dif\bsV]$. Thus
\begin{eqnarray*}
&&\vol_{\rH\rS}(\cU_{\bsx})=2^{\binom{N}{2}}\vol_{\mathrm{Euclid}}(\cU_{\bsx})=2^{\binom{N}{2}}V_N(\bsx)^2\vol_{\mathrm{Euclid}}(\sfU(N)/\bbT^N)\\
&&=2^{\binom{N}{2}}V_N(\bsx)^2
\frac{\pi^{\binom{N}{2}}}{\prod^N_{k=1}\Gamma(k)} =
V_N(\bsx)^2\frac{(2\pi)^{\binom{N}{2}}}{\prod^N_{k=1}\Gamma(k)},
\end{eqnarray*}
implying that the HS volume of adjoint orbit
$\cU_{\bsx}\cong\sfU(N)/\bbT^N$ is given by the following formula:
$$
\vol_{\rH\rS}(\cU_{\bsx}) =
V_N(\bsx)^2\frac{(2\pi)^{\binom{N}{2}}}{\prod^N_{k=1}\Gamma(k)}=V_N(\bsx)^2\vol_{\rH\rS}(\sfU(N)/\bbT^N).
$$
We are done.
\end{proof}

\subsection{Hilbert-Schmidt volumes of quantum state spaces }

Let
\begin{eqnarray}
\cC_N&:=&\Set{(x_1,\ldots,x_N)\in\bbR^N: x_1>\cdots>x_N},\\
\Delta_{N-1}&:=&
\Set{(x_1,\ldots,x_N)\in\bbR^N_{\geqslant0}:\sum^N_{k=1}x_k=1}.
\end{eqnarray}
The natural projection $\sfU(N)\to\sfU(N)/\bbT^N$, where
$\bbT^N=\set{\diag(e^{\mathrm{i}\theta_1},\ldots,e^{\mathrm{i}\theta_N}):\theta_1,\ldots,\theta_N\in\bbR}$
such that:
$$
\bsU\mapsto[\bsU]\equiv \bsU\bbT^N\in \sfU(N)/\bbT^N.
$$
Define the map $\tau:\Herm_{\mathrm{s}}(\bbC^N)\to \cC_N\times
\sfU(N)/\bbT^N$.

\begin{thrm}[\cite{Deift2009}]
It holds that the above map $\tau$ is a diffeomorphism:
\begin{eqnarray}
\Herm_{\mathrm{s}}(\bbC^N)\stackrel{\tau}{\cong}\cC_N\times
(\sfU(N)/\bbT^N).
\end{eqnarray}
Moreover, the map $\tau$ induces the following two diffeomorphisms:
\begin{eqnarray}
\rD_{\mathrm{s}}(\bbC^N)&\cong& (\Delta_{N-1}\cap\cC_N)\times
(\sfU(N)/\bbT^N),\\
\cU_{\bsx}&\cong& \set{\bsx}\times (\sfU(N)/\bbT^N)\cong
\sfU(N)/\bbT^N.
\end{eqnarray}
\end{thrm}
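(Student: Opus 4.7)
The plan is to exhibit an explicit inverse to $\tau$, verify mutual inversion, and then invoke analytic perturbation theory to promote bijectivity to a diffeomorphism. Recall that $\tau$ sends a simple-spectrum Hermitian matrix $\bsH=\bsU\diag(x_1,\ldots,x_N)\bsU^\dagger$, with eigenvalues ordered as $x_1>\cdots>x_N$, to the pair $\tau(\bsH):=((x_1,\ldots,x_N),[\bsU])\in\cC_N\times\sfU(N)/\bbT^N$. Since $\bsH$ has distinct eigenvalues, the ordered tuple $\bsx\in\cC_N$ is uniquely determined. If $\bsU_1,\bsU_2$ both diagonalize $\bsH$ with the same ordering, then $\bsU_2^\dagger\bsU_1$ commutes with $\diag(\bsx)$, and distinctness of the $x_i$ forces this commutant to lie in $\bbT^N$, so $[\bsU]$ is well-defined. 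The candidate inverse $\sigma(\bsx,[\bsU]):=\bsU\diag(\bsx)\bsU^\dagger$ is well-defined because every $\bsT\in\bbT^N$ commutes with $\diag(\bsx)$, and by direct computation $\tau\circ\sigma=\mathrm{id}$ and $\sigma\circ\tau=\mathrm{id}$.

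For smoothness, $\sigma$ is polynomial and hence trivially smooth. For $\tau$, I would use the holomorphic functional calculus: near a fixed $\bsH_0\in\Herm_{\mathrm{s}}(\bbC^N)$ each eigenvalue is simple, so it can be enclosed by a small loop $\gamma_k$ in the resolvent set that separates it from the rest of the spectrum, and the spectral projector
\[
P_k(\bsH)=\frac{1}{2\pi\mathrm{i}}\oint_{\gamma_k}(z-\bsH)^{-1}\dif z
\]
depends analytically on $\bsH$ in a neighborhood of $\bsH_0$ by Cauchy's integral formula. Hence $x_k(\bsH)=\Tr{\bsH P_k(\bsH)}$ is smooth, and a smooth local choice of unit eigenvector from each rank-one range of $P_k(\bsH)$ assembles into a smooth local lift $\bsH\mapsto\bsU(\bsH)\in\sfU(N)$, unique only up to a diagonal phase; passing to the quotient $\sfU(N)/\bbT^N$ absorbs exactly this ambiguity, so $[\bsU(\cdot)]$ is smooth.

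The two induced diffeomorphisms then follow by restriction. The density-matrix conditions $\bsH\geqslant 0$ and $\Tr{\bsH}=1$ translate under $\tau$ to $x_i\geqslant 0$ and $\sum_i x_i=1$, which together with $x_1>\cdots>x_N$ cut out $\Delta_{N-1}\cap\cC_N$ on the spectral side while leaving the unitary factor unconstrained, yielding the diffeomorphism $\rD_{\mathrm{s}}(\bbC^N)\cong(\Delta_{N-1}\cap\cC_N)\times\sfU(N)/\bbT^N$. For a regular adjoint orbit $\cU_{\bsx}$ with $\bsx\in\cC_N$ fixed, the orbit is by definition the preimage under $\tau$ of $\{\bsx\}\times\sfU(N)/\bbT^N$, which gives $\cU_{\bsx}\cong\sfU(N)/\bbT^N$. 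The genuine difficulty is smoothness of $\tau$: eigenvectors of Hermitian matrices are notoriously ill-behaved at eigenvalue crossings, but the simple-spectrum hypothesis excludes such crossings, and the resolvent formulation extracts smooth spectral data without fixing any phase convention, with the residual $\bbT^N$-ambiguity exactly matching the quotient on the target.
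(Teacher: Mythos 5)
Your proof is correct, and it is worth noting that the paper itself offers no argument here at all: its ``proof'' consists of the single line ``The proof is omitted here,'' deferring to the cited reference of Deift and Gioev. Your write-up therefore supplies exactly what the paper leaves out, and it does so by the standard route: well-definedness of $\tau$ from the fact that the commutant of a diagonal matrix with distinct entries inside $\sfU(N)$ is precisely $\bbT^N$; an explicit polynomial inverse $\sigma$; and smoothness of $\tau$ via Riesz spectral projectors $P_k(\bsH)=\frac{1}{2\pi\mathrm{i}}\oint_{\gamma_k}(z-\bsH)^{-1}\dif z$, which is the same analytic-perturbation mechanism underlying the cited source. Two small points you should make explicit to be fully rigorous. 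First, $\sigma$ is defined on the quotient $\cC_N\times\sfU(N)/\bbT^N$, so ``polynomial hence smooth'' applies literally only to its lift $(\bsx,\bsU)\mapsto\bsU\diag(\bsx)\bsU^\dagger$ on $\cC_N\times\sfU(N)$; smoothness of $\sigma$ itself then follows because the quotient map $\sfU(N)\to\sfU(N)/\bbT^N$ is a surjective submersion (a principal $\bbT^N$-bundle), and a map descends smoothly through a submersion iff its composition with the submersion is smooth. Second, the ``smooth local choice of unit eigenvector'' deserves one line of construction, e.g.\ $v_k(\bsH):=P_k(\bsH)v_k^0/\norm{P_k(\bsH)v_k^0}$ for a fixed eigenvector $v_k^0$ of $\bsH_0$, which is smooth and nonvanishing near $\bsH_0$; orthonormality of the assembled columns is then automatic since $\bsH$ is Hermitian with distinct eigenvalues. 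With these two refinements your argument is a complete, self-contained proof of all three diffeomorphisms, including the restriction statements for $\rD_{\mathrm{s}}(\bbC^N)$ and $\cU_{\bsx}$, which you correctly reduce to the observation that the defining constraints transfer verbatim to the spectral factor while leaving the flag factor untouched.
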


\begin{proof}
The proof is omitted here.
\end{proof}
Although the HS volume of the set of all density matrices acting on
$\bbC^N$ is derived already by \.{Z}yczkowski, we still include the
proof here for completeness.
\begin{thrm}[\cite{Zyczkowski2002}]\label{th:HSVolume}
It holds that
\begin{eqnarray}
\vol_{\rH\rS} (\rD(\bbC^N)) =
\sqrt{N}(2\pi)^{\binom{N}{2}}\frac{\prod^N_{k=1}\Gamma(k)}{\Gamma(N^2)}.
\end{eqnarray}
\end{thrm}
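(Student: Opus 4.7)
The plan is to exploit the spectral diffeomorphism $\rD_{\mathrm{s}}(\bbC^N)\cong(\Delta_{N-1}\cap\cC_N)\times(\sfU(N)/\bbT^N)$, which is valid on a full-measure subset of $\rD(\bbC^N)$ by Proposition~\ref{prop:1}, and to factorize the HS volume form into its eigenvalue and angular components. Writing $\rho=\bsV\diag(x_1,\ldots,x_N)\bsV^\dagger$ with $\bsx\in\Delta_{N-1}\cap\cC_N$ and $[\bsV]\in\sfU(N)/\bbT^N$, the computation from Proposition~\ref{prop:HSvolumeoforbit} generalizes to $\bsV^\dagger(\dif\rho)\bsV=\diag(\dif\bsx)+[\bsV^\dagger\dif\bsV,\diag(\bsx)]$, which yields the line element
\begin{eqnarray*}
\dif s^2_{\rH\rS}=\sum_{i=1}^N(\dif x_i)^2+2\sum_{1\leqslant i<j\leqslant N}(x_i-x_j)^2\Pa{[\re(\bsV^\dagger\dif\bsV)_{ij}]^2+[\im(\bsV^\dagger\dif\bsV)_{ij}]^2}.
\end{eqnarray*}
Because $\bsx$ and $\bsV$ are independent coordinates, no cross-terms appear, and the metric is block-diagonal; hence the volume form factorizes as
\begin{eqnarray*}
\dif V_{\rH\rS}=2^{\binom{N}{2}}V_N(\bsx)^2\,[\bsV^\dagger\dif\bsV]\,\dif\mu_\Delta(\bsx),
\end{eqnarray*}
where $\dif\mu_\Delta$ is the Riemannian volume induced on the hyperplane $\{\sum x_i=1\}$ by the standard Euclidean metric on $\bbR^N$. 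Since this hyperplane has unit normal $N^{-1/2}(1,\ldots,1)^\t$, one has $\dif\mu_\Delta=\sqrt{N}\,\dif x_1\cdots\dif x_{N-1}$.

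Next I would assemble the integral. The angular factor was computed in the preceding subsection as $\vol_{\mathrm{Euclid}}(\sfU(N)/\bbT^N)=\pi^{\binom{N}{2}}/\prod_{k=1}^N\Gamma(k)$, and the ordering $x_1>\cdots>x_N$ restricts integration to a fundamental domain of the $S_N$-action permuting eigenvalues, contributing a factor $1/N!$. It then remains to evaluate
\begin{eqnarray*}
I_N:=\int_{\Delta_{N-1}}V_N(\bsx)^2\,\dif x_1\cdots\dif x_{N-1}.
\end{eqnarray*}
The homogeneity trick $\bsx=t\bsy$ with $t=\sum_i x_i$, $\sum_i y_i=1$, Jacobian $t^{N-1}$, together with $V_N(t\bsy)=t^{\binom{N}{2}}V_N(\bsy)$, decouples the radial and angular parts and gives
\begin{eqnarray*}
\int_{\bbR_{\geqslant 0}^N}V_N(\bsx)^2\,e^{-\sum_i x_i}\,\dif^N x=\Gamma(N^2)\cdot I_N,
\end{eqnarray*}
where the exponent $N^2-1=2\binom{N}{2}+(N-1)$ produces the gamma factor on the right.

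Finally I would invoke the Laguerre--Mehta integral
\begin{eqnarray*}
\int_{\bbR_{\geqslant 0}^N}V_N(\bsx)^2 e^{-\sum_i x_i}\,\dif^N x=\prod_{k=1}^N\Gamma(k)\Gamma(k+1),
\end{eqnarray*}
together with the telescoping identity $\prod_{k=1}^N\Gamma(k+1)=N!\prod_{k=1}^N\Gamma(k)$, giving $I_N=N!\Pa{\prod_{k=1}^N\Gamma(k)}^2/\Gamma(N^2)$. Assembling all factors,
\begin{eqnarray*}
\vol_{\rH\rS}(\rD(\bbC^N))=2^{\binom{N}{2}}\cdot\frac{\pi^{\binom{N}{2}}}{\prod_{k=1}^N\Gamma(k)}\cdot\frac{\sqrt{N}}{N!}\cdot\frac{N!\Pa{\prod_{k=1}^N\Gamma(k)}^2}{\Gamma(N^2)}=\sqrt{N}(2\pi)^{\binom{N}{2}}\frac{\prod_{k=1}^N\Gamma(k)}{\Gamma(N^2)}.
\end{eqnarray*}
The main technical obstacle is the eigenvalue integral: correctly identifying the Laguerre--Mehta normalization, tracking the induced-metric factor $\sqrt{N}$ (rather than the flat measure $\dif x_1\cdots\dif x_{N-1}$), and keeping the Weyl-chamber factor $1/N!$ in the right place. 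The elegant cancellation of $N!$ and of $\prod_{k=1}^N\Gamma(k)$ in the final step is what produces the clean formula.
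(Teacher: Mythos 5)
Your proof is correct and takes essentially the same route as the paper: spectral decomposition $\rho=\bsV\diag(\bsx)\bsV^\dagger$, the block-diagonal HS line element yielding the angular factor $2^{\binom{N}{2}}[\bsV^\dagger\dif\bsV]$ and the simplex factor $\sqrt{N}\,\dif x_1\cdots\dif x_{N-1}$, the $1/N!$ from restricting to the ordered chamber, and reduction to the fixed-trace Vandermonde integral. The only difference is that you actually evaluate that eigenvalue integral (via the homogeneity/Laplace substitution and the Laguerre--Mehta formula, with the correct telescoping $\prod_{k=1}^{N}\Gamma(k+1)=N!\prod_{k=1}^{N}\Gamma(k)$), whereas the paper merely quotes its value $\bigl(\prod_{k=1}^{N}\Gamma(k)\bigr)^{2}/\Gamma(N^{2})$; all of your bookkeeping is consistent with the paper's.
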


\begin{proof}
The infinitesimal distance takes a particularly simple form $\dif
s^2_{\rH\rS}=\Inner{\dif\rho}{\dif\rho}_{\rH\rS}$ valid for any
dimension $N$. Making use of the diagonal form
$\rho=\bsU\Lambda\bsU^\dagger$, we may write
$$
\dif\rho = \bsV\Pa{\dif\Lambda +
[\bsV^\dagger\dif\bsV,\Lambda]}\bsV^\dagger,
$$
where $\bsV\in\sfU(N)/\bbT^N$. Thus the differential of arc length
can be rewritten as
\begin{eqnarray*}
\dif s^2_{\rH\rS} = \sum^N_{j=1}\dif\lambda^2_j +
2\sum^N_{1\leqslant i<j\leqslant
N}(\lambda_i-\lambda_j)^2\abs{\Innerm{i}{\bsV^\dagger\dif\bsV}{j}}^2.
\end{eqnarray*}
Apparently, $\sum^N_{j=1}\dif\lambda_j=0$ since
$\sum^N_{j=1}\lambda_j=1$. Thus
$$
\dif s^2_{\rH\rS} = \sum^{N-1}_{i,j=1}\dif\lambda_i
g_{ij}\dif\lambda_j + 2\sum_{1\leqslant i<j\leqslant
N}(\lambda_i-\lambda_j)^2\abs{\Innerm{i}{\bsV^\dagger\dif\bsV}{j}}^2.
$$
The HS volume element is given by
$$
\dif V_{\rH\rS} =
\sqrt{N}2^{\binom{N}{2}}V_N(\lambda)^2\prod^{N-1}_{k=1}\dif\lambda_k
\prod_{1\leqslant i<j\leqslant
N}\re(\bsV^\dagger\dif\bsV)_{ij}\im(\bsV^\dagger\dif\bsV)_{ij}.
$$
The corresponding volume element gains a factor
$\sqrt{\det(g)}=\sqrt{N}$, where $g=(g_{ij})_{(N-1)\times(N-1)}$ is
the metric in the $(N-1)$-dimensional simplex $\Delta_{N-1}$ of
eigenvalues. Note that $g=\I_{N-1}+\proj{\bse}$ for
$(N-1)$-dimensional vector $\bse:=(1,\ldots,1)^\t$. Therefore
\begin{eqnarray*}
\vol_{\rH\rS}(\rD(\bbC^N)) &=& \int_{\rD(\bbC^N)}\dif
V_{\rH\rS}=\int_{\Delta_{N-1}\cap\cC_N}V_N(\lambda)^2\sqrt{N}\prod^{N-1}_{k=1}\dif\lambda_k
\times\int_{\sfU(N)/\bbT^N}2^{\binom{N}{2}}[\bsV^\dagger\dif\bsV]\\
&=&\vol_{\rH\rS}(\sfU(N)/\bbT^N)
\int_{\Delta_{N-1}\cap\cC_N}V_N(\lambda)^2\sqrt{N}\prod^{N-1}_{k=1}\dif\lambda_k,
\end{eqnarray*}
where
\begin{eqnarray*}
\int_{\Delta_{N-1}\cap\cC_N}V_N(\lambda)^2\prod^{N-1}_{k=1}\dif\lambda_k
=\frac1{N!}\int_{\bbR^N_{\geqslant0}}\delta\Pa{1-\sum^N_{k=1}\lambda_k}V_N(\lambda)^2\prod^N_{k=1}\dif\lambda_k
=\frac{\Pa{\prod^N_{k=1}\Gamma(k)}^2}{\Gamma(N^2)}.
\end{eqnarray*}
Substituting the last result into the above, we get the desired
result.
\end{proof}

\begin{remark}
We can partition the set $\rD(\bbC^N)$ of all density matrices
acting on $\bbC^N$ according to the spectrum
$\Lambda=(\lambda_1,\ldots,\lambda_N)\in\Delta_{N-1}\cap\cC_N$:
\begin{eqnarray}
\rD(\bbC^N) =
\biguplus_{\Lambda\in\Delta_{N-1}\cap\cC_N}\cU_{\Lambda}.
\end{eqnarray}
From this observation and previous discussion, we get that
\begin{eqnarray*}
&&\vol_{\rH\rS}(\rD(\bbC^N))=\sqrt{N}\frac{\Pa{\prod^N_{k=1}\Gamma(k)}^2}{\Gamma(N^2)}\cdot\vol_{\rH\rS}(\sfU(N)/\bbT^N)\\
&&=\sqrt{N}\int_{\Delta_{N-1}\cap\cC_N}V_N(\Lambda)^2[\dif\Lambda]\cdot
\vol_{\rH\rS}(\sfU(N)/\bbT^N)\\
&&= \sqrt{N}\int_{\Delta_{N-1}\cap\cC_N}
\vol_{\rH\rS}(\cU_{\Lambda})[\dif\Lambda].
\end{eqnarray*}
Based on it, we infer that every adjoint orbit $\cU_\Lambda$ is
subject to the distrubtion
\begin{eqnarray}\label{eq:mlambda}
\dif m(\Lambda)=\sqrt{N}\delta(1-\Tr{\Lambda})[\dif\Lambda].
\end{eqnarray}
It holds that
\begin{eqnarray}\label{eq:volHSagain}
\vol_{\rH\rS}(\rD(\bbC^N)) = \int_{\cC_N\cap\Delta_{N-1}}
\vol_{\rH\rS}(\cU_{\Lambda})\dif m(\Lambda).
\end{eqnarray}
\end{remark}

\section{Symplectic volumes of regular co-adjoint orbits}\label{sect:4}

Consider a compact connected Lie group $K$ with its Lie algebra
$\k$. Let $T$ be the maximal torus of $K$ with its Lie algebra
$\liet$, namely, the Cartan subalgebra. Choose an $K$-invariant
inner product on $\k$. By restricting such $K$-invariant inner
product on $\liet$, we can identify $\liet^*$, the dual space of
$\liet$, with $\liet$.

Let $R^+$ be the set of all positive roots of $K$. Then $R^+$
determines the positive Weyl chamber $\liet_{\geqslant0}$ via the
way
\begin{eqnarray}
\mathrm{i}\liet_{\geqslant0}=\Set{\bsX\in\mathrm{i}\liet:
\alpha(\bsX)\geqslant0,\forall \alpha\in R^+}.
\end{eqnarray}
Analogously, we also identity $\liet^*_{\geqslant0}$ with
$\liet_{\geqslant0}$.
%
%

Consider an element $\hat\lambda\in\mathrm{i}\liet^*_{>0}$, the
interior of positive Weyl chamber $\mathrm{i}\liet^*_{\geqslant0}$.
Let
\begin{eqnarray}
\cO_{\hat\lambda} :=K\cdot{\hat\lambda}=\Set{\Ad^*_g\hat\lambda:
g\in K}
\end{eqnarray}
be the co-adjoint orbit, identified by co-adjoint action of $K$ on
$\mathrm{i}\liet^*_{>0}\cong\mathrm{i}\liet_{>0}$. There is a
$K$-equivariantly diffeomorphism
$$
\cO_{\hat\lambda} \cong K/T.
$$
Moreover, the coadjoint orbit $\cO_{\hat\lambda}$ carry the
so-called standard symplectic form, i.e., Kirillov-Kostant-Souriau
form $\omega_{\rK\rK\rS}$ \cite{Silva2001}. Now we get a symplectic
form $(\cO_{\hat\lambda},\omega_{\rK\rK\rS})$. The top degree of
$\omega_{\rK\rK\rS}$ will identify a volume form
$\frac{\omega^d_{\rK\rK\rS}}{d!}$, where
$d:=\frac12\dim(\cO_{\hat\lambda}) = \frac12\dim(K/T)$.
\begin{definition}
The Liouville measure on $\cO_{\hat\lambda}$ is defined by
\begin{eqnarray}
\mu_{\cO_{\hat\lambda}}(B):=\int_B\frac{\omega^d_{\rK\rK\rS}}{d!},
\end{eqnarray}
where $B$ is a Borel subset of $\cO_{\hat\lambda}$. The symplectic
volume of $\cO_{\hat\lambda}$ is just
$\vol_{\mathrm{symp}}(\cO_{\hat\lambda}) =
\mu_{\cO_{\hat\lambda}}(\cO_{\hat\lambda})$.
\end{definition}

\begin{prop}[\cite{Berline1992}]
Let $T$ be a maximal torus of a compact connected Lie group $K$. Let
$\hat\lambda\in\mathrm{i}\liet^*_{>0}$ for which $\cO_{\hat\lambda}$
is the coadjoint orbit through $\hat\lambda$. The symplectic volume
of $\cO_{\hat\lambda}$, with respect to the KKS form
$\omega_{\rK\rK\rS}$, is given by
\begin{eqnarray}\label{eq:sympvol}
\vol_{\mathrm{symp}}(\cO_{\hat\lambda}) = \Pa{\prod_{\alpha\in
R^+}\Inner{{\hat\lambda}}{\alpha}}\Pa{\prod_{\alpha\in
R^+}\frac{2\pi}{\Inner{\rho}{\alpha}}},
\end{eqnarray}
where $\rho:=\frac12\sum_{\alpha\in R^+}\alpha$.
\end{prop}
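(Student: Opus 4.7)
The plan is to apply the Duistermaat-Heckman exact stationary-phase formula (equivalently, the Atiyah-Bott-Berline-Vergne equivariant localization theorem) to the Hamiltonian $T$-action on the coadjoint orbit $(\cO_{\hat\lambda},\omega_{\rK\rK\rS})$, obtained by restricting the full $K$-action. The $T$-moment map is the restriction $\mu_T : \cO_{\hat\lambda} \to \liet^*$, $\eta \mapsto \eta|_\liet$. Because $\hat\lambda$ lies in the interior of the positive Weyl chamber, its Weyl stabilizer is trivial, so the $T$-fixed set of $\cO_{\hat\lambda}$ consists of exactly $|W|$ isolated points, namely the Weyl orbit $W\cdot\hat\lambda$.

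The next step is to read off the weights of the isotropy $T$-representation at each fixed point. Under the root-space decomposition $T_{\hat\lambda}\cO_{\hat\lambda} \cong \k/\liet \cong \bigoplus_{\alpha \in R^+}\k_\alpha$, with the complex structure compatible with $\omega_{\rK\rK\rS}$, the torus $T$ acts on $\k_\alpha$ with weight $\alpha$; transporting along $w\in W$ gives weights $\{w\alpha\}_{\alpha\in R^+}$ at $w\hat\lambda$. Localization applied to the equivariantly closed form $e^{\omega_{\rK\rK\rS}+\langle \mu_T,\xi\rangle}$ then yields, for generic $\xi\in\liet$,
\begin{equation*}
\int_{\cO_{\hat\lambda}} \frac{\omega_{\rK\rK\rS}^d}{d!}\, e^{\langle \mu_T,\xi\rangle} = \sum_{w \in W}\frac{e^{\langle w\hat\lambda,\xi\rangle}}{\prod_{\alpha \in R^+}\langle w\alpha,\xi\rangle}, \qquad d := |R^+|,
\end{equation*}
whose $\xi\to0$ limit on the left-hand side is exactly $\vol_{\mathrm{symp}}(\cO_{\hat\lambda})$.

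To evaluate this limit I would use the identity $\prod_\alpha\langle w\alpha,\xi\rangle = \mathrm{sgn}(w)\prod_\alpha\langle\alpha,\xi\rangle$ (coming from the fact that $w$ permutes $R^+$ up to $\ell(w)$ sign flips) to rewrite the sum as $[\prod_\alpha\langle\alpha,\xi\rangle]^{-1}\sum_w \mathrm{sgn}(w)\,e^{\langle w\hat\lambda,\xi\rangle}$. The Weyl-antisymmetric numerator vanishes at $\xi=0$ to order at least $d$, and its leading Taylor coefficient is pinned down by the classical identity (a close cousin of the Weyl dimension formula)
\begin{equation*}
\sum_{w\in W}\mathrm{sgn}(w)\,\frac{\langle w\hat\lambda,\xi\rangle^d}{d!} = \prod_{\alpha\in R^+}\frac{\langle \hat\lambda,\alpha\rangle}{\langle \rho,\alpha\rangle}\;\prod_{\alpha\in R^+}\langle \alpha,\xi\rangle.
\end{equation*}
Cancelling $\prod_\alpha\langle\alpha,\xi\rangle$ against the denominator and passing to $\xi=0$ recovers the ratio $\prod_{\alpha\in R^+}\langle\hat\lambda,\alpha\rangle / \prod_{\alpha\in R^+}\langle\rho,\alpha\rangle$ that appears in the statement.

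The step I expect to be the main obstacle is the bookkeeping of the $(2\pi)^{|R^+|}$ factor: this constant enters through the precise convention chosen for the KKS form, the $K$-invariant identification $\liet^*\cong\liet$ used to pair $\hat\lambda$ and $\rho$ against roots, and the Fourier-pairing normalization implicit in the stationary-phase formula. I would pin it down by cross-checking against the rank-one case $K=\sfU(2)$, where $\cO_{\hat\lambda}\cong S^2$, $R^+$ is a singleton, and the formula must reduce to the well-known KKS area of a symplectic $2$-sphere of "radius" $\hat\lambda$.
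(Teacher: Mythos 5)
The paper never actually proves this proposition---it is quoted from Berline--Getzler--Vergne \cite{Berline1992}---so your proposal supplies an argument the paper omits, and the route you take (equivariant localization for the restricted $T$-action) is precisely the standard proof, essentially the one in the cited reference. Your individual steps are sound: for regular $\hat\lambda$ the $T$-fixed set is the Weyl orbit $W\cdot\hat\lambda$ consisting of $|W|$ isolated points; the isotropy weights at $w\hat\lambda$ are $\set{w\alpha}_{\alpha\in R^+}$; the sign identity $\prod_{\alpha\in R^+}\Inner{w\alpha}{\xi}=\sign(w)\prod_{\alpha\in R^+}\Inner{\alpha}{\xi}$ holds because $w$ sends exactly $\ell(w)$ positive roots to negative ones; and the antisymmetrization identity you invoke for the leading Taylor coefficient is the classical Harish-Chandra/Weyl-type identity (it checks out in rank one, and in general both sides are $W$-anti-invariant of degree $d=|R^+|$ in each variable).

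The one genuine defect is the normalization of your localization formula, and it is not cosmetic: as displayed,
\begin{equation*}
\int_{\cO_{\hat\lambda}}\frac{\omega^d_{\rK\rK\rS}}{d!}\,e^{\Inner{\mu_T}{\xi}} \;=\; \sum_{w\in W}\frac{e^{\Inner{w\hat\lambda}{\xi}}}{\prod_{\alpha\in R^+}\Inner{w\alpha}{\xi}}
\end{equation*}
cannot be correct, since by your own limiting argument it would give $\vol_{\mathrm{symp}}(\cO_{\hat\lambda})=\prod_{\alpha\in R^+}\Inner{\hat\lambda}{\alpha}/\Inner{\rho}{\alpha}$, contradicting the very statement being proved; the factor $(2\pi)^{|R^+|}$ in Eq.~\eqref{eq:sympvol} is not a free convention once $\omega_{\rK\rK\rS}$ and the invariant inner product are fixed, as they are here. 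The properly normalized Duistermaat--Heckman/ABBV formula in the real-exponent form you use carries a prefactor $(2\pi)^d$ on the fixed-point sum, and with that prefactor inserted your computation closes verbatim and yields exactly Eq.~\eqref{eq:sympvol}. Your proposed calibration against $K=\sfU(2)$ does detect this: for $\hat\lambda=t\rho$ the orbit is a sphere of symplectic area $2\pi t$, Archimedes' theorem makes the Duistermaat-Heckman measure uniform, so the left side equals $2\pi\sinh\pa{t\Inner{\rho}{\xi}}/\Inner{\rho}{\xi}$, while the bare fixed-point sum equals $\sinh\pa{t\Inner{\rho}{\xi}}/\Inner{\rho}{\xi}$---one missing factor of $2\pi$ per positive root. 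So the proposal is correct in substance, but a complete write-up must state the localization theorem with its $(2\pi)^d$ normalization from the start rather than leave the constant to be determined afterwards.
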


\begin{remark}
According to Harish-Chandra's volume formula \cite{Harish1975},
\begin{eqnarray}
\vol_{\rH\rS}(K/T)=\prod_{\alpha\in
R^+}\frac{2\pi}{\Inner{\rho}{\alpha}}.
\end{eqnarray}
Based on this formula, we get that
\begin{eqnarray}
\vol_{\mathrm{symp}}(\cO_{\hat\lambda}) = \Pa{\prod_{\alpha\in
R^+}\Inner{{\hat\lambda}}{\alpha}}\vol_{\rH\rS}(K/T).
\end{eqnarray}
\end{remark}

\begin{exam}
Let $K=\sfU(N)$. The maximal torus of $\sfU(N)$ is $T=\bbT^N$, and
its Lie algebra $\liet$ such that
$\mathrm{i}\liet^*\cong\mathrm{i}\liet\cong\bbR^N$. Moreover
$\mathrm{i}\liet^*_{>0}\cong \cC_N$. Then the set of all positive
roots $R^+$, where
\begin{eqnarray}
R^+=\Set{\alpha_i-\alpha_j:,1\leqslant i<j\leqslant N},
\end{eqnarray}
where $\alpha_k$ is a linear functional taking the $k$-th position's
imaginary part. Moreover $d=\abs{R^+}=\binom{N}{2}$. The symplectic
volume of the coadjoint orbit $\cO_{\hat\lambda}$ is
\begin{eqnarray}\label{eq:symvolumeoforbit}
\vol_{\mathrm{symp}}(\cO_{\hat\lambda}) =
V_N(\hat\lambda)\vol_{\rH\rS}(\sfU(N)/\bbT^N).
\end{eqnarray}
\end{exam}

When we consider both HS volume and symplectic volume for the same
manifold $\cO_{\hat\lambda}$, we will get the following relationship
between such volumes.
\begin{prop}\label{prop:relationHSSymp}
Let ${\hat\lambda}\in
\mathrm{i}\liet^*_{>0}\cong\mathrm{i}\liet_{>0}$ for the unitary
group $\sfU(N)$. The HS volume and the symplectic volume of
$\cO_{\hat\lambda}$ are connected via the formula:
\begin{eqnarray}\label{eq:HSvsSymp}
\vol_{\rH\rS}(\cO_{\hat\lambda})=
V_N(\hat\lambda)\times\vol_{\mathrm{symp}}(\cO_{\hat\lambda}).
\end{eqnarray}
\end{prop}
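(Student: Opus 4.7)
The plan is to combine the two volume computations already established in the paper. The crucial observation is that the Hilbert--Schmidt inner product $\Inner{\bsX}{\bsY} = \Tr{\bsX^\dagger\bsY}$ is $\sfU(N)$-invariant on the Lie algebra $\u(N)$, providing a canonical isometry $\mathrm{i}\liet \cong \mathrm{i}\liet^*$ that intertwines the adjoint and coadjoint actions. Under this isometry, the element $\hat\lambda \in \mathrm{i}\liet^*_{>0} \cong \cC_N$ corresponds to a diagonal Hermitian matrix with strictly decreasing entries, and the coadjoint orbit $\cO_{\hat\lambda}$ is identified --- as a Riemannian manifold with the induced HS metric --- with the adjoint orbit $\cU_{\hat\lambda}$ treated in Proposition~\ref{prop:HSvolumeoforbit}.

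With this identification in place, the proof reduces to a one-line division. Proposition~\ref{prop:HSvolumeoforbit} applied with $\bsx = \hat\lambda$ gives
\begin{equation*}
\vol_{\rH\rS}(\cO_{\hat\lambda}) = V_N(\hat\lambda)^2\, \vol_{\rH\rS}(\sfU(N)/\bbT^N),
\end{equation*}
while Equation~\eqref{eq:symvolumeoforbit} supplies
\begin{equation*}
\vol_{\mathrm{symp}}(\cO_{\hat\lambda}) = V_N(\hat\lambda)\, \vol_{\rH\rS}(\sfU(N)/\bbT^N).
\end{equation*}
Dividing the first expression by the second cancels the common factor $V_N(\hat\lambda)\,\vol_{\rH\rS}(\sfU(N)/\bbT^N)$ and yields the desired identity.

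The only non-routine point is justifying this very first step: one must be confident that it is legitimate to transport the HS volume formula, derived for adjoint orbits inside Hermitian matrices, to coadjoint orbits living inside $\mathrm{i}\liet^*$. This legitimacy is exactly the content of the Ad-invariance of the HS form, which renders the isomorphism $\mathrm{i}\liet \cong \mathrm{i}\liet^*$ an isometry that preserves both the group action and the induced Riemannian structure. Beyond this observation no further geometric computation is required; the surviving Vandermonde factor $V_N(\hat\lambda)$ in the final formula is simply the single power of the spectral Jacobian that remains after the quadratic HS Jacobian in Proposition~\ref{prop:HSvolumeoforbit} is divided by the linear KKS symplectic Jacobian in Equation~\eqref{eq:symvolumeoforbit}.
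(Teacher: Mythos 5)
Your proposal is correct and follows essentially the same route as the paper's own proof: both combine Proposition~\ref{prop:HSvolumeoforbit} (giving $\vol_{\rH\rS}(\cO_{\hat\lambda})=V_N(\hat\lambda)^2\vol_{\rH\rS}(\sfU(N)/\bbT^N)$) with Eq.~\eqref{eq:symvolumeoforbit} (giving $\vol_{\mathrm{symp}}(\cO_{\hat\lambda})=V_N(\hat\lambda)\vol_{\rH\rS}(\sfU(N)/\bbT^N)$), differing only in that you divide where the paper factors. Your explicit justification of the adjoint--coadjoint identification via the $\mathrm{Ad}$-invariance of the Hilbert--Schmidt form is a point the paper leaves implicit, and it is a welcome clarification rather than a departure.
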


\begin{proof}
According to Proposition~\ref{prop:HSvolumeoforbit},
\begin{eqnarray*}
\vol_{\rH\rS}(\cO_{\hat\lambda}) &=&
V_N(\hat\lambda)^2\vol_{\rH\rS}(\sfU(N)/\bbT^N)=
V_N(\hat\lambda)\Br{V_N(\hat\lambda)\vol_{\rH\rS}(\sfU(N)/\bbT^N)}\\
&=& V_N(\hat\lambda)\vol_{\mathrm{symp}}(\cO_{\hat\lambda}),
\end{eqnarray*}
where the last equality is obtained from
Eq.~\eqref{eq:symvolumeoforbit}.
\end{proof}

\section{Duistermaat-Heckman measure}\label{sect:5}

The Duistermaat-Heckman ($\dh$) Theorem, introduced in
\cite{DH1982}, revolutionized symplectic geometry and mathematical
physics by uncovering profound connections between Hamiltonian
dynamics, localization, and topology. Its key conclusions ---
including the exact stationary phase approximation, the convexity of
images of moment map, the polynomiality of reduced symplectic
volumes, relations to equivariant cohomology and localization
\cite{Bytsenko2005}, non-Abelian generalizations, applications in
Physics, and combinatorial and algebraic consequences --- have each
had significant impact.

Central to these developments is the powerful Duistermaat-Heckman
($\dh$) measure \cite{Zhang2019}. This measure provides a crucial
tool across symplectic geometry, representation theory, and
mathematical physics. Its strength lies in transforming complex
global geometric problems into manageable local computations
concentrated at fixed points, thereby effectively bridging
symplectic geometry with representation theory, combinatorics, and
physics.

\begin{definition}[Push-forward of a measure]
Given measurable spaces $(X,\cF)$ and $(Y,\cG)$, a measurable
mapping $\Phi:X\to Y$ and a measure $\mu:\cF\to[0,+\infty)$, the
push-forward of $\mu$ is defined to be the measure
$\Phi_*\mu:\cG\to[0,+\infty)$ given by
\begin{eqnarray*}
(\Phi_*\mu)(B) := \mu\Pa{\Phi^{-1}(B)}\quad\text{for}\quad B\in\cG.
\end{eqnarray*}
\end{definition}

\begin{prop}[Change of variables formula]
A measurable function $f$ on $Y$ is integrable with respect to the
push-forward measure $\Phi_*\mu$ if and only if the composition
$\Phi^*f=f\circ \Phi$ is integrable with respect to the measure
$\mu$. In that case, the integrals coincide
\begin{eqnarray*}
\int_{Y}f\dif (\Phi_*\mu) = \int_{X}f\circ \Phi\dif\mu =
\int_{X}\Phi^*f\dif\mu.
\end{eqnarray*}
In the sense of distribution,
$\Inner{\Phi_*\mu}{f}=\Inner{\mu}{\Phi^*f}$.
\end{prop}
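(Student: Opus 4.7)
The plan is to prove this standard measure-theoretic identity by the usual four-step ladder of approximation, starting from indicator functions and lifting successively to simple, nonnegative measurable, and finally general measurable functions. The whole argument hinges on the definition $(\Phi_*\mu)(B)=\mu(\Phi^{-1}(B))$ for $B\in\cG$, which is precisely the statement of the proposition restricted to indicator functions.

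First I would verify the identity for $f=\chi_B$ with $B\in\cG$. By definition of the Lebesgue integral,
\begin{eqnarray*}
\int_Y \chi_B\dif(\Phi_*\mu) = (\Phi_*\mu)(B) = \mu(\Phi^{-1}(B)) = \int_X \chi_{\Phi^{-1}(B)}\dif\mu = \int_X (\chi_B\circ\Phi)\dif\mu,
\end{eqnarray*}
using the elementary pointwise identity $\chi_B\circ\Phi=\chi_{\Phi^{-1}(B)}$. Linearity of the integral then extends this to any nonnegative simple function $f=\sum_{k=1}^n c_k\chi_{B_k}$, giving $\int_Y f\dif(\Phi_*\mu)=\int_X(f\circ\Phi)\dif\mu$.

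Next I would pass to a general nonnegative measurable $f:Y\to[0,+\infty]$ by choosing an increasing sequence of simple functions $f_n\uparrow f$ (for instance the standard dyadic truncations). Since $\Phi$ is measurable, $f_n\circ\Phi\uparrow f\circ\Phi$ pointwise on $X$, and applying the monotone convergence theorem on both sides of the already-established equality for simple functions yields the identity for $f$. Finally, for a general measurable $f:Y\to\bbR$ (or $\bbC$), I would decompose $f=f^+-f^-$ (and real/imaginary parts in the complex case) and apply the nonnegative result to each piece. The integrability equivalence then follows immediately: $\int_Y|f|\dif(\Phi_*\mu)=\int_X|f\circ\Phi|\dif\mu$, so one side is finite iff the other is, and the integrals of $f$ itself coincide by additivity.

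There is no real obstacle here; the statement is a textbook fact and every step is routine. The only mild care needed is to stress that the measurability of $\Phi$ guarantees $f\circ\Phi$ is $\cF$-measurable whenever $f$ is $\cG$-measurable, which is what makes the right-hand integral well-defined, and that the pointwise identity $\chi_B\circ\Phi=\chi_{\Phi^{-1}(B)}$ is the single nontrivial ingredient from which everything else cascades by linearity and monotone convergence. The distributional reformulation $\Inner{\Phi_*\mu}{f}=\Inner{\mu}{\Phi^*f}$ is then just a restatement, since $\Inner{\nu}{f}:=\int f\dif\nu$ by definition.
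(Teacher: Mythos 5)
Your proof is correct and complete: the standard ladder (indicators via $(\Phi_*\mu)(B)=\mu(\Phi^{-1}(B))$, then simple functions by linearity, then nonnegative functions by monotone convergence, then $f=f^+-f^-$ for the general case and the integrability equivalence via $\abs{f}$) is exactly the right argument. The paper itself states this proposition without any proof, treating it as a standard measure-theoretic fact, so your write-up simply supplies the textbook argument the authors chose to omit; there is nothing to reconcile between the two.
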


\begin{definition}[Liouville measure]
Let $(\bsM,\omega)$ be a symplectic manifold of $2n$ dimension. A
Borel set in $\bsM$ is a set generated from compact subsets of
$\bsM$ under countable union and complementation. Given a Borel set
$B$ in $\bsM$, the \emph{Liouville measure} of $B$ is defined as
\begin{eqnarray*}
\mu(B) = \int_B \frac{\omega^n}{n!}.
\end{eqnarray*}
The \emph{symplectic volume} of $\bsM$ is defined by
\begin{eqnarray*}
\vol_{\mathrm{symp}}(\bsM)=\mu(\bsM) = \int_{\bsM}
\frac{\omega^n}{n!}.
\end{eqnarray*}
\end{definition}


In order to define Duistermaat-Heckman measure, we need the
following notion of moment map (or momentum map) is a fundamental
concept in symplectic geometry and mathematical physics, providing a
bridge between symplectic group actions and conserved quantities. It
generalizes the idea of conserved momenta in Hamiltonian mechanics
to more abstract geometric settings.
\begin{definition}[Moment map]
Given a smooth manifold $\bsM$ equipped with a symplectic form
$\omega$ (a closed, non-degenerate $2$-form). A Lie group $K$ acts
on $\bsM$ via symplectomorphisms (symmetry transformations
preserving $\omega$). The so-called \emph{moment map} is a map
$\Phi:\bsM\to\mathrm{i}\k^*$ satisfying:
\begin{enumerate}[(i)]
\item For every $X\in\k$, the function
$\Phi^X(m):=\Inner{\Phi(m)}{X}$ is a Hamiltonian function for the
vector field $X_{\bsM}$ on $\bsM$ generated by $X$:
$$
\dif\Phi^X = \iota_{X_{\bsM}}\omega.
$$
\item $\Phi$ is $K$-equivariant with respect to the coadjoint action
$\Ad^*$ on $\mathrm{i}\k^*$:
$$
\Phi(g\cdot m)=(\Ad^*_g\Phi)(m),\quad \forall g\in K.
$$
\end{enumerate}
In such a case, this action of $K$ on $\bsM$ is called
\emph{Hamiltonian action}. At this time, $(\bsM,K,\omega)$ is called
\emph{Hamiltonian $K$-manifold}.
\end{definition}

\begin{definition}[Duistermaat-Heckman measure]
Let $(\bsM,K,\omega)$ (here $\omega$ is a symplectic form) be a
compact, connected Hamiltonian $K$-manifold of dimension $2n$ and a
choice of moment map $\Phi:\bsM\to\mathrm{i}\k^*$.
\begin{enumerate}[(i)]
\item The \emph{non-Abelian}
Duistermaat-Heckman measure $\dh^K_{\bsM}$ is defined as follows:
\begin{eqnarray}
\dh^K_{\bsM} = \frac1{p_K}(\tau_K)_*\Phi_*\mu
\end{eqnarray}
where $\tau_K:\mathrm{i}\k^*\to\mathrm{i}\liet^*_{\geqslant0}$ is
defined as $\tau_K(\cO_{\hat\lambda})=\set{\hat\lambda}$ for
$\hat\lambda\in \mathrm{i}\liet^*_{\geqslant0}$ and
$p_K(\hat\lambda)=\vol_{\mathrm{symp}}(\cO_{\hat\lambda})$, which is
known from Eq.~\eqref{eq:sympvol}.
\item The \emph{Abelian}
Duistermaat-Heckman measure is defined as:
\begin{eqnarray}
\dh^T_{\bsM} = \tau_*\Phi_*\mu,
\end{eqnarray}
where $\tau:\mathrm{i}\k^*\to\mathrm{i}\liet^*$ is the projection
dual to the inclusion map $\liet\hookrightarrow\k$.
\end{enumerate}
\end{definition}

The computation of the density of the pushforward measure with
respect to the Lebesgue measure on the positive Weyl chamber is
governed by the Duistermaat-Heckman (DH) theorem \cite{DH1982} and
its generalizations. But in a special case, for instance, where the
underlying symplectic manifold is just a coadjoint orbit through a
regular element in $\mathrm{i}\liet^*_{>0}$, there is an explicit
formula for computing the density. In the next subsection, we
describe it explicitly.

\subsection{Harish-Chandra formula and derivative principle}

The following result gives the Fourier transform formula of Abelian
Duistermaat-Heckman measure for the co-adjoint orbit
$\cO_{\hat\lambda}$. By this, we can identify the analytical
expression of Abelian Duistermaat-Heckman measure.

\begin{thrm}[Harish-Chandra, \cite{Harish1975}]
The Fourier transform of abelian Duistermaat-Heckman
$\dh^T_{\cO_{\hat\lambda}}$ measure is given by
\begin{eqnarray}
\Inner{\dh^T_{\cO_{\hat\lambda}}}{e^{\mathrm{i}\Inner{-}{X}}} =
\sum_{w\in
W}(-1)^{l(w)}e^{\mathrm{i}\Inner{w\hat\lambda}{X}}\prod_{\alpha>0}\frac1{\mathrm{i}\Inner{\alpha}{X}}.
\end{eqnarray}
for every $X$ in the Lie algebra of $T$ which is not orthogonal to
any root. That is,
\begin{eqnarray}
\dh^T_{\cO_{\hat\lambda}} = \sum_{w\in
W}(-1)^{l(w)}\delta_{w\hat\lambda}\star
H_{-\alpha_1}\star\cdots\star H_{-\alpha_R}.
\end{eqnarray}
Here $l(w)$ is the length of the Weyl group element $w\in W$, and
$\delta_\alpha$ for the Dirac measure at $\alpha$; $\star$ means the
convolution.
\end{thrm}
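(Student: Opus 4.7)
The plan is to compute the Fourier transform $\Inner{\dh^T_{\cO_{\hat\lambda}}}{e^{\mathrm{i}\Inner{-}{X}}}$ by unfolding the definition $\dh^T_{\cO_{\hat\lambda}} = \tau_*\Phi_*\mu$ via the change-of-variables formula, rewriting it as an oscillatory integral over $\cO_{\hat\lambda}$, and then applying the Duistermaat-Heckman exact stationary phase formula to localize that integral at the $T$-fixed points.

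First I would note that $\cO_{\hat\lambda}$ carries a Hamiltonian $K$-action whose moment map is simply the inclusion $\Phi:\cO_{\hat\lambda}\hookrightarrow\mathrm{i}\k^*$; restricting the action to the maximal torus $T\subset K$ produces a Hamiltonian $T$-action with moment map $\Phi_T=\tau\circ\Phi$. Applying the change-of-variables formula to $\dh^T_{\cO_{\hat\lambda}}=(\Phi_T)_*\mu$ reduces the pairing to
\begin{eqnarray*}
\Inner{\dh^T_{\cO_{\hat\lambda}}}{e^{\mathrm{i}\Inner{-}{X}}} = \int_{\cO_{\hat\lambda}} e^{\mathrm{i}\Inner{\Phi_T(m)}{X}}\frac{\omega^d_{\rK\rK\rS}}{d!}
\end{eqnarray*}
for every regular $X\in\liet$ not orthogonal to any root.

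Next, under the identification $\cO_{\hat\lambda}\cong K/T$, the $T$-action by left multiplication has isolated fixed points, precisely the Weyl orbit $\{w\hat\lambda:w\in W\}$. At each fixed point $w\hat\lambda$, the tangent space decomposes under $T$ into root spaces whose weights, once a compatible complex structure is fixed, are $\{w\alpha:\alpha\in R^+\}$, and $\Phi_T$ takes the value $w\hat\lambda$. Substituting into the Duistermaat-Heckman exact stationary phase formula yields
\begin{eqnarray*}
\int_{\cO_{\hat\lambda}} e^{\mathrm{i}\Inner{\Phi_T}{X}}\frac{\omega^d_{\rK\rK\rS}}{d!} = \sum_{w\in W}\frac{e^{\mathrm{i}\Inner{w\hat\lambda}{X}}}{\prod_{\alpha\in R^+}\mathrm{i}\Inner{w\alpha}{X}}.
\end{eqnarray*}
I would then invoke the standard fact that $w$ acting on $R^+$ flips the sign of exactly $l(w)$ roots, so that $\prod_{\alpha\in R^+}\Inner{w\alpha}{X} = (-1)^{l(w)}\prod_{\alpha\in R^+}\Inner{\alpha}{X}$, which collapses the previous identity to the claimed Fourier transform formula. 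The distributional statement then follows by inverse Fourier transform, using that $\frac{1}{\mathrm{i}\Inner{\alpha}{X}}$ is (up to an appropriate regularization) the Fourier transform of the Heaviside-type measure $H_{-\alpha}$ supported on the ray $-\bbR_{\geqslant0}\alpha$, that multiplication by $e^{\mathrm{i}\Inner{w\hat\lambda}{X}}$ corresponds to translation by the Dirac measure $\delta_{w\hat\lambda}$, and that products of Fourier transforms correspond to convolutions of measures.

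The main obstacle is justifying the Duistermaat-Heckman exact stationary phase formula in this setting, which is a substantial result typically proved via equivariant cohomology and the Atiyah--Bott--Berline--Vergne localization theorem. A secondary subtle point is the identification of the isotropy weights at each $w\hat\lambda$ with consistent signs: this requires choosing complex structures on the tangent spaces compatible with the orientation induced by $\omega_{\rK\rK\rS}$. A convenient organization is to work at the base point $eT$, where the weights are exactly the positive roots, and transport by the Weyl group action; this makes the factor $(-1)^{l(w)}$ appear transparently from the action of $w$ on $R^+$. An alternative route avoiding the full DH machinery is to prove the formula intrinsically on $\cO_{\hat\lambda}$ via the Weyl integration formula applied to $K$-equivariant pushforwards, at the cost of additional combinatorial bookkeeping with the Weyl denominator.
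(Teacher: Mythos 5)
The paper does not prove this theorem at all: it is imported verbatim from Harish-Chandra \cite{Harish1975} (the surrounding text in Section~5 simply states it and moves on to the derivative principle), so there is no internal proof to compare yours against. Your outline is the standard derivation and is sound in its essentials: realizing $\dh^T_{\cO_{\hat\lambda}}$ as $(\tau\circ\Phi)_*\mu_{\cO_{\hat\lambda}}$, converting the pairing into the oscillatory integral $\int_{\cO_{\hat\lambda}}e^{\mathrm{i}\Inner{\Phi_T}{X}}\omega^d_{\rK\rK\rS}/d!$, localizing at the isolated $T$-fixed points $\set{w\hat\lambda: w\in W}$ via exact stationary phase, extracting $(-1)^{l(w)}$ from the fact that $w$ sends exactly $l(w)$ positive roots to negative ones, and then reading off the convolution identity by inverting the Fourier transform (Dirac masses $\leftrightarrow$ phases, $H_{-\alpha}\leftrightarrow 1/\mathrm{i}\Inner{\alpha}{X}$). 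Two caveats, both of which you flag but which carry the real weight of the argument: (i) the exact stationary phase formula is itself the Duistermaat-Heckman/Atiyah-Bott-Berline-Vergne theorem, so your proof is a reduction to that result rather than self-contained --- acceptable here, since the paper already cites \cite{DH1982} as foundational; (ii) the identification of the isotropy weights at $w\hat\lambda$ as $\set{w\alpha:\alpha\in R^+}$ with the orientation induced by $\omega_{\rK\rK\rS}$ is exactly where hidden signs could enter, and one should also note that $H_{-\alpha}$ is an infinite (ray) measure, so its Fourier transform $1/\mathrm{i}\Inner{\alpha}{X}$ only makes sense as a distributional boundary value (e.g.\ $\Inner{\alpha}{X}\to\Inner{\alpha}{X}-\mathrm{i}0^+$), which is also why the theorem restricts $X$ to be non-orthogonal to every root. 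With those two points made precise, your argument is a legitimate proof of the statement the paper merely quotes.
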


Subsequently, the so-called derivative principle is provided in
which the non-Abelian Duistermaat-Heckman measure is obtained from
Abelian Duistermaat-Heckman measure.
\begin{thrm}[Derivative principle, \cite{Christandl2014}]
It holds that
\begin{eqnarray*}
\dh^K_{\cO_{\hat\lambda}}\Big|_{\mathrm{i}\liet^*_{\geqslant0}} =
\Pa{\prod_{\alpha>0}\partial_{-\alpha}}\dh^T_{\cO_{\hat\lambda}}\Big|_{\mathrm{i}\liet^*_{\geqslant0}}.
\end{eqnarray*}
\end{thrm}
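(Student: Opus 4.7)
The strategy is to verify the claimed identity by evaluating both sides explicitly and comparing. For the left-hand side, I exploit the fact that $\cO_{\hat\lambda}$ is a single $K$-orbit with moment map equal to the inclusion $\Phi:\cO_{\hat\lambda}\hookrightarrow\mathrm{i}\k^*$. Accordingly, $\Phi_*\mu$ is just the Liouville measure on $\cO_{\hat\lambda}$, which has total mass $p_K(\hat\lambda) = \vol_{\mathrm{symp}}(\cO_{\hat\lambda})$, and the map $\tau_K$ collapses the entire orbit onto the single point $\hat\lambda$. Therefore $(\tau_K)_*\Phi_*\mu = p_K(\hat\lambda)\delta_{\hat\lambda}$, and dividing by the function $p_K$ at the support point yields $\dh^K_{\cO_{\hat\lambda}}|_{\mathrm{i}\liet^*_{\geqslant0}} = \delta_{\hat\lambda}$.

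For the right-hand side, I appeal to the Harish-Chandra formula stated immediately above. The key observation is that, via integration by parts, the directional derivative $\partial_{-\alpha}$ corresponds on the Fourier side to multiplication by $\mathrm{i}\Inner{\alpha}{X}$. Consequently, applying $\prod_{\alpha>0}\partial_{-\alpha}$ to $\dh^T_{\cO_{\hat\lambda}}$ exactly cancels the denominator $\prod_{\alpha>0}\mathrm{i}\Inner{\alpha}{X}$ present in the Harish-Chandra formula, leaving the Fourier transform
\[
\Inner{\Pa{\prod_{\alpha>0}\partial_{-\alpha}}\dh^T_{\cO_{\hat\lambda}}}{e^{\mathrm{i}\Inner{-}{X}}} = \sum_{w\in W}(-1)^{l(w)} e^{\mathrm{i}\Inner{w\hat\lambda}{X}}.
\]
Inverting the Fourier transform then identifies $\Pa{\prod_{\alpha>0}\partial_{-\alpha}}\dh^T_{\cO_{\hat\lambda}} = \sum_{w\in W}(-1)^{l(w)}\delta_{w\hat\lambda}$, a $W$-antiinvariant combination of Dirac masses supported on the Weyl orbit of $\hat\lambda$.

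The final step is to restrict this distribution to the closed positive Weyl chamber $\mathrm{i}\liet^*_{\geqslant0}$. Since $\hat\lambda$ is assumed regular (lying in the interior $\mathrm{i}\liet^*_{>0}$), the only point of the Weyl orbit $W\cdot\hat\lambda$ that meets $\mathrm{i}\liet^*_{\geqslant0}$ is $\hat\lambda$ itself (corresponding to $w=e$), so the restriction collapses to $\delta_{\hat\lambda}$, matching the left-hand side. The main technical obstacle will be justifying these Fourier-theoretic manipulations at the level of tempered distributions --- specifically, one must note that $\dh^T_{\cO_{\hat\lambda}}$ is compactly supported on $\conv(W\cdot\hat\lambda)$ with piecewise polynomial density so that differentiation and Fourier inversion are legitimate, and one must carefully interpret the restriction of a distribution to a closed subset. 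Extending the argument beyond single coadjoint orbits to a general Hamiltonian $K$-manifold $\bsM$ would further require leveraging the $W$-invariance of $\dh^T_\bsM$ together with the Weyl integration formula to relate $\dh^K_\bsM$ to $\dh^T_\bsM$.
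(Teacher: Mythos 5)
First, a point of reference: the paper never proves this theorem --- it is stated as a quotation from \cite{Christandl2014} --- so your proposal has to be judged against the general result proved there and against how the theorem is actually used later in the paper. For the statement read literally, i.e.\ for $K$ acting on its own coadjoint orbit $\cO_{\hat\lambda}$ with moment map the inclusion, your argument is correct. The left-hand side is $\delta_{\hat\lambda}$ exactly as you compute: $(\tau_K)_*\Phi_*\mu = \vol_{\mathrm{symp}}(\cO_{\hat\lambda})\,\delta_{\hat\lambda}$, and the normalization by $p_K$ cancels the mass since $p_K(\hat\lambda)=\vol_{\mathrm{symp}}(\cO_{\hat\lambda})$. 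On the right-hand side your Fourier argument is sound, and is even quicker in the convolution form the paper records: since $\partial_{-\alpha}H_{-\alpha}=\delta_0$, applying $\prod_{\alpha>0}\partial_{-\alpha}$ to $\sum_{w\in W}(-1)^{l(w)}\delta_{w\hat\lambda}\star H_{-\alpha_1}\star\cdots\star H_{-\alpha_R}$ strips away all the Heaviside factors and leaves $\sum_{w\in W}(-1)^{l(w)}\delta_{w\hat\lambda}$; regularity of $\hat\lambda$ then kills every $w\neq e$ term on the closed chamber, because $w\hat\lambda$ lies in the open chamber $w\cdot C$, which is disjoint from $\overline{C}$ for $w\neq e$. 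Both sides equal $\delta_{\hat\lambda}$, and the distributional manipulations are unproblematic because every measure involved is compactly supported.

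The genuine gap is one of scope. Under your reading the theorem is nearly vacuous --- an identity between two copies of the same Dirac mass --- and it is \emph{not} the version the paper needs. Immediately after this theorem, the derivative principle is invoked for $\tilde K=\SU(2)\times\SU(2)$ acting on a coadjoint $\SU(4)$-orbit through the composed map $\pi\circ\Phi$; there the moment image is not a single orbit of the acting group, $(\tau_{\tilde K})_*(\pi\circ\Phi)_*\mu$ is not a point mass, and both $\dh^{\tilde K}_{\cO_{\hat\lambda}}$ and $\dh^{\tilde T}_{\cO_{\hat\lambda}}$ have genuinely piecewise-polynomial densities. Your key step --- ``$\tau_K$ collapses the entire orbit onto the single point $\hat\lambda$'' --- fails outright in that setting, so your proof does not establish the statement in the generality in which it is cited from \cite{Christandl2014} and used. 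The fix is within reach of your method: disintegrate $\Phi_*\mu$ over the positive chamber into normalized orbital measures, apply Harish-Chandra's formula fiberwise to obtain
\begin{eqnarray*}
\dh^{T}_{\bsM} = \int_{\mathrm{i}\liet^*_{\geqslant0}} \Pa{\sum_{w\in W}(-1)^{l(w)}\delta_{w\lambda}\star H_{-\alpha_1}\star\cdots\star H_{-\alpha_R}}\dif\dh^K_{\bsM}(\lambda),
\end{eqnarray*}
and then apply $\prod_{\alpha>0}\partial_{-\alpha}$ and discard the $w\neq e$ translates, which miss the open chamber. Your single-orbit computation is exactly the fiberwise ingredient of this argument, so the extension you defer to your final sentence is not an optional generalization but the actual content of the theorem as used.
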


Consider more generally the action of $\tilde T$ on a co-adjoint
$K$-orbit $\cO_{{\hat\lambda}}$ induced by a group homomorphism
$\tilde T\to T\subset K$. Note that it follows directly that
$$
\dh^{\tilde T}_{\cO_{\hat\lambda}}=\pi_*\dh^T_{\cO_{\hat\lambda}}
$$
where $\pi:\mathrm{i}\liet^*\to\mathrm{i}\tilde\liet^*$ is the
restriction map from the dual Lie algebra $\liet^*$ of $T$ to that
$\tilde\liet^*$ of $\tilde T$. Thus we get that
$$
\dh^{\tilde T}_{\cO_{\hat\lambda}} = \sum_{w\in
W}(-1)^{l(w)}\delta_{\pi(w\hat\lambda)}\star
H_{-\pi(\alpha_1)}\star\cdots\star H_{-\pi(\alpha_R)}.
$$

In what follows, we will now use the non-Abelian Heckman algorithm
to treat the case of random two-qubit states with fixed,
non-degenerate global eigenvalue spectrum. That is, we consider the
action of $\tilde K=\SU(2)\times \SU(2)$ on a coadjoint
$K=\SU(4)$-orbit through a point $\hat\lambda$ in the interior of
positive Weyl chamber of $K$.

\begin{itemize}
\item Let $\tilde T$ be the maximal torus of $\tilde K$.
\item Symmetric group $W=S_4$ is the Weyl group of $K=\SU(4)$.
\item $\dh^{\tilde T}_{\cO_{\hat\lambda}}=\sum_{w\in S_4}\sign(w)\delta_{\pi(w\hat\lambda)}\star H_{-\pi(\alpha_1)}\star\cdots\star
H_{-\pi(\alpha_6)}$, where $\set{\alpha_k:k=1,\ldots,6}$ is the set
of positive roots of $K=\SU(4)$.
\item The dual Lie algebra $\tilde\liet^*$ of $\tilde T$ can be identified with
$\mathrm{i}\tilde\liet^*\cong\bbR^2$.
\item The maps $\pi$ is given by
$$
\pi (\hat\lambda_1,\ldots,\hat\lambda_4)=
2\mathrm{i}(\hat\lambda_1+\hat\lambda_2,\hat\lambda_1+\hat\lambda_3).
$$
\item One computes readily that the $-\pi(\alpha_k)$ are precisely the
weights
$$
\set{(-2,2),(-2,0)_{(2)},(-2,-2),(0,-2)_{(2)}}.
$$
In particular, two negative roots of $\tilde K$ are contained in
this list (each of them is in fact contained twice).
\end{itemize}

The following result appears in \cite{Christandl2014} without proof.
We provide a detailed proof for completeness.
\begin{prop}
The non-Abelian Duistermaat-Heckman measure for the action of
$\tilde K=\SU(2)\times \SU(2)$ on a coadjoint $K$-orbit
$\cO_{\hat\lambda}=K\cdot\hat\lambda$ with $\hat\lambda$, where
$K=\SU(4)$, in the interior of positive Weyl chamber of $T$ is given
by
\begin{eqnarray}
\dh^{\tilde
K}_{\cO_{\hat\lambda}}\Big|_{\mathrm{i}\tilde\liet^*_{\geqslant0}} =
\Pa{\sum_{w\in S_4}\sign(w)\delta_{\pi(w\lambda)}}\star
H_{(-2,2)}\star H_{(-2,0)}\star H_{(-2,-2)}\star
H_{(0,-2)}\Big|_{\mathrm{i}\tilde\liet^*_{\geqslant0}},
\end{eqnarray}
where $\mathrm{i}\tilde\liet^*_{\geqslant0}\cong\bbR^2_{\geqslant0}$
is the positive Weyl chamber of $\tilde T$, the maximal torus of
$\tilde K$.
\end{prop}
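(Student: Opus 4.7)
The plan is to synthesize three ingredients that are already established in the excerpt: (i) the Harish-Chandra formula for the Abelian Duistermaat-Heckman measure of $\cO_{\hat\lambda}$ on $\mathrm{i}\liet^*$, (ii) the pushforward identity $\dh^{\tilde T}_{\cO_{\hat\lambda}} = \pi_*\dh^{T}_{\cO_{\hat\lambda}}$ along the restriction map $\pi:\mathrm{i}\liet^*\to\mathrm{i}\tilde\liet^*$, and (iii) the derivative principle relating non-Abelian and Abelian DH measures.

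First, I would invoke the Harish-Chandra formula for $K=\sfU(4)$, which gives
\begin{eqnarray*}
\dh^T_{\cO_{\hat\lambda}} = \sum_{w\in S_4}\sign(w)\,\delta_{w\hat\lambda}\star H_{-\alpha_1}\star\cdots\star H_{-\alpha_6},
\end{eqnarray*}
and then push forward along $\pi$. Since $\pi$ is linear, pushforward is compatible with convolution and sends $\delta_x\mapsto\delta_{\pi(x)}$ and $H_\beta\mapsto H_{\pi(\beta)}$, so I obtain the expression for $\dh^{\tilde T}_{\cO_{\hat\lambda}}$ already recorded in the bullet list preceding the proposition. Plugging in the explicit weights $-\pi(\alpha_k)$ computed in the excerpt yields
\begin{eqnarray*}
\dh^{\tilde T}_{\cO_{\hat\lambda}} = \Pa{\sum_{w\in S_4}\sign(w)\delta_{\pi(w\hat\lambda)}}\star H_{(-2,2)}\star H_{(-2,0)}\star H_{(-2,0)}\star H_{(-2,-2)}\star H_{(0,-2)}\star H_{(0,-2)},
\end{eqnarray*}
where the two duplicated factors are precisely those corresponding to the negative roots of $\tilde K$.

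Next, I apply the derivative principle. The positive roots of $\tilde K=\SU(2)\times\SU(2)$ in $\mathrm{i}\tilde\liet^*\cong\bbR^2$ are $(2,0)$ and $(0,2)$, so the derivative principle specializes to
\begin{eqnarray*}
\dh^{\tilde K}_{\cO_{\hat\lambda}}\Big|_{\mathrm{i}\tilde\liet^*_{\geqslant0}} = \partial_{(-2,0)}\,\partial_{(0,-2)}\,\dh^{\tilde T}_{\cO_{\hat\lambda}}\Big|_{\mathrm{i}\tilde\liet^*_{\geqslant0}}.
\end{eqnarray*}
The key distributional identity is $\partial_\beta H_\beta=\delta_0$ (verified by differentiating $(H_\beta\star g)(x)=\int_0^\infty g(x-t\beta)\dif t$ in direction $\beta$ for compactly supported $g$). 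Using that partial derivatives commute with convolution, applying $\partial_{(-2,0)}$ turns one copy of $H_{(-2,0)}$ into $\delta_0$ (which is a convolution identity), and similarly $\partial_{(0,-2)}$ consumes one copy of $H_{(0,-2)}$. What remains is precisely the four surviving factors $H_{(-2,2)}\star H_{(-2,0)}\star H_{(-2,-2)}\star H_{(0,-2)}$ together with the signed sum of delta masses, which is the claimed formula.

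The main obstacle, and the part deserving the most care, is the bookkeeping of the convolution-and-derivative step: one must justify that $\partial_{-\beta}$ acts on the convolution product factor-by-factor, that each Heaviside $H_{-\beta}$ contributes a distributional $\delta_0$ precisely when differentiated in its own direction, and that the remaining restriction to the positive Weyl chamber $\mathrm{i}\tilde\liet^*_{\geqslant0}$ is preserved by these manipulations. Once these distributional identities are verified, the proposition follows by direct substitution of the explicit weights $-\pi(\alpha_k)$ into the pushforward formula and cancellation of the two duplicated Heaviside factors.
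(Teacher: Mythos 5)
Your proposal is correct and follows essentially the same route as the paper: push forward the Harish-Chandra expression for $\dh^{T}_{\cO_{\hat\lambda}}$ along $\pi$ to get the six-factor convolution with the weights $-\pi(\alpha_k)$, then apply the derivative principle so that the two derivatives $\partial_{(-2,0)}$ and $\partial_{(0,-2)}$ cancel the duplicated Heaviside factors. The only difference is cosmetic: you spell out the distributional identity $\partial_\beta H_\beta=\delta_0$ that the paper's proof uses implicitly, which is a welcome clarification rather than a deviation.
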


\begin{proof}
Recall that the Weyl group of $K=\SU(4)$ is the symmetric group
$W=S_4$, with $(-1)^{l(w)}$ equal to the signum of a permutation
$w\in S_4$. Then
$$
\dh^{\tilde T}_{\cO_{\hat\lambda}} = \pi_*\dh^T_{\cO_{\hat\lambda}}
= \sum_{w\in S_4}\sign(w)\delta_{\pi(w{\hat\lambda})}\star
H_{-\pi(\alpha_1)}\star\cdots\star H_{-\pi(\alpha_6)},
$$
where $T$ is is the maximal torus of $K=\SU(4)$, and $\tilde T$ is
the maximal torus of $\tilde K=\SU(2)\times\SU(2)$, and
$\set{\alpha_k:k=1,\ldots,6}$ are the positive roots of $K=\SU(4)$,
i.e.,
\begin{eqnarray*}
\alpha_1=(1,-1,0,0),\alpha_2=(1,0,-1,0),\alpha_3=(1,0,0,-1),\\
\alpha_4=(0,1,-1,0),\alpha_5=(0,1,0,-1),\alpha_6=(0,0,1,-1).
\end{eqnarray*}
Since $\hat\lambda_1>\cdots>\hat\lambda_4$, it follows that this
$\pi$ maps each $\hat\lambda$ to
$2\mathrm{i}(\hat\lambda_1+\hat\lambda_2,\hat\lambda_1+\hat\lambda_3)$,
where $2(\hat\lambda_1+\hat\lambda_2)$ and
$2(\hat\lambda_1+\hat\lambda_3)$ are the maximal eigenvalues of two
reduced density matrices determined by a global density matrix
$\Lambda$, where
$\Lambda=\diag(\hat\lambda_1+\frac14,\ldots,\hat\lambda_4+\frac14)$.
One computes readily that the $-\pi(\alpha_k)$ are precisely the
weights
$$
\set{(-2,2),(-2,0)_{(2)},(-2,-2),(0,-2)_{(2)}}.
$$
Indeed,
\begin{eqnarray*}
-\pi(\alpha_1)=-\pi(\alpha_6)=(0,-2),-\pi(\alpha_3)=(-2,-2),\\
-\pi(\alpha_2)=-\pi(\alpha_5)=(-2,0),-\pi(\alpha_4)=(-2,2).
\end{eqnarray*}
That is,
$$
\set{-\pi(\alpha_k):k\in[6]}=\set{(-2,2),(-2,0)_{(2)},(-2,-2),(0,-2)_{(2)}}.
$$
In particular, the two negative roots $\set{(-2,0),(0,-2)}$ of
$\tilde K=\SU(2)\times\SU(2)$ are contained in this list t (each of
them is in fact contained twice). From the above discussion, we see
that
\begin{eqnarray*}
\dh^{\tilde T}_{\cO_{\hat\lambda}} = \Pa{\sum_{w\in
S_4}\sign(w)\delta_{\pi(w\hat\lambda)}}\star H_{(0,-2)}\star
H_{(-2,0)}\star H_{(-2,-2)}\star H_{(-2,2)}\star H_{(-2,0)}\star
H_{(0,-2)}.
\end{eqnarray*}
Therefore, we arrived at the following formula:
\begin{eqnarray*}
&&\dh^{\tilde
K}_{\cO_{\hat\lambda}}\Big|_{\mathrm{i}\tilde\liet^*_{\geqslant0}} =
\Pa{\prod_{\alpha>0}\partial_{-\alpha}}\dh^{\tilde
T}_{\cO_{\hat\lambda}}\Big|_{\mathrm{i}\tilde\liet^*_{\geqslant0}}\\
&&=\Pa{\sum_{w\in S_4}\sign(w)\delta_{\pi(w{\hat\lambda})}}\star
H_{(-2,-2)}\star H_{(-2,2)}\star H_{(-2,0)}\star
H_{(0,-2)}\Big|_{\mathrm{i}\tilde\liet^*_{\geqslant0}}.
\end{eqnarray*}
We are done.
\end{proof}

We have already known the fact that a density matrix of a generic
two-qubit state is represented by the $2\times 2$ block matrix
$$
\rho_{12}=\Pa{\begin{array}{cc}
           \bsA & \bsC \\
           \bsC^\dagger & \bsB
         \end{array}
},
$$
where $\bsA,\bsB$ are $2\times 2$ positive semi-definite complex
matrices. Now we present the two reduced marginal states,
respectively:
$$
\rho_1=\Pa{\begin{array}{cc}
             \Tr{\bsA} & \Tr{\bsC} \\
             \Tr{\bsC^\dagger} & \Tr{\bsB}
           \end{array}
},\quad \rho_2=\bsA+\bsB.
$$
From this, we see that for $\rho_{12}=\Lambda$ with
$\Lambda_1\geqslant\cdots\geqslant \Lambda_4\geqslant0$ and
$\sum^4_{j=1}\Lambda_j=1$, it can be rewritten as $\rho-\frac{\I_4}4
= \diag(\hat\lambda_1,\ldots,\hat\lambda_4)$, where
$\hat\lambda_1\geqslant\cdots\geqslant\hat\lambda_4$ and
$\sum^4_{j=1}\hat\lambda_j=0$. Thus $\hat\lambda$ is in the positive
Weyl chamber $\liet_{\geqslant0}$ of $T$.

Now the map $\rho\mapsto (\rho_1,\rho_2)$, where
$\rho_1=\Ptr{2}{\rho}$ and $\rho_2=\Ptr{1}{\rho}$, can be
equivalently represented as
\begin{eqnarray*}
\pi(\hat\lambda)&=&\Pa{\Pa{\begin{array}{cc}
                       \hat\lambda_1+\hat\lambda_2 & 0 \\
                       0 & \hat\lambda_3+\hat\lambda_4
                     \end{array}
},\Pa{\begin{array}{cc}
                       \hat\lambda_1+\hat\lambda_3 & 0 \\
                       0 & \hat\lambda_2+\hat\lambda_4
                     \end{array}
}}\\
&\defeq&
(2(\hat\lambda_1+\hat\lambda_2),2(\hat\lambda_1+\hat\lambda_3)),
\end{eqnarray*}
where
$$
\Pa{\begin{array}{cc}
                       \hat\lambda_1+\hat\lambda_2 & 0 \\
                       0 & \hat\lambda_3+\hat\lambda_4
                     \end{array}
}\defeq \hat\lambda_1+\hat\lambda_2 -
(\hat\lambda_3+\hat\lambda_4)=2(\hat\lambda_1+\hat\lambda_2)
$$
and
$$
\Pa{\begin{array}{cc}
                       \hat\lambda_1+\hat\lambda_3 & 0 \\
                       0 & \hat\lambda_2+\hat\lambda_4
                     \end{array}
}\defeq \hat\lambda_1+\hat\lambda_3 -
(\hat\lambda_2+\hat\lambda_4)=2(\hat\lambda_1+\hat\lambda_3).
$$

Denote $\bsh=\diag(1,-1)$. Then
$\diag(\hat\lambda_1+\hat\lambda_2,\hat\lambda_3+\hat\lambda_4)=(\hat\lambda_1+\hat\lambda_2)\cdot\bsh$
and
$\diag(\hat\lambda_1+\hat\lambda_3,\hat\lambda_2+\hat\lambda_4)=(\hat\lambda_1+\hat\lambda_3)\cdot\bsh$.
Since $\hat\lambda_1+\hat\lambda_3\geqslant
\hat\lambda_2+\hat\lambda_4$ and
$(\hat\lambda_1+\hat\lambda_3)+(\hat\lambda_2+\hat\lambda_4)=0$,
thus $\hat\lambda_1+\hat\lambda_3\geqslant0$, a fortiori
$\hat\lambda_1+\hat\lambda_2\geqslant
\hat\lambda_1+\hat\lambda_3\geqslant0$. Note that
$\hat\lambda_1+\hat\lambda_4$ is not non-negative in general.

\subsection{Boysal-Vergne-Paradan jump formula}

In 2009, Bosyal and Vergne published a paper \cite{Boysal2009} in
which they investigated the push-forward of Lebesgue measures on the
cone $\bbR^N_{\geqslant0}$ along a linear map. Later in his PhD
thesis, Walter found that the density of Abelian Duistermaat-Heckman
measure with respect to the Lebesgue measure on the affine hull of
the Abelian moment polytope is given by the volume of a parametrized
polytope \cite{Walter2014}. In view of this result, Boysal and
Vergne's result is connected with Abelian Duistermaat-Heckman
measure, and can be adapted to calculate the density. To this end,
we need firstly introduce some notions involved in question.

\begin{definition}[\cite{Boysal2009}]
Let $\bse\in V$ be a primitive vector. It defines a hyperplane in
$V^*$ (the dual space of $V$):
$$
W = \Set{\alpha\in V^*:\Inner{\alpha}{\bse}=0}\defeq \bse^\perp.
$$
Let $P$ be a polynomial function on $V^*$ and let $\Psi$ be a
sequence of vectors not belong to $W$, i.e., $\Psi\cap W=\emptyset$.
We define, for $\alpha\in V^*$,
\begin{eqnarray}
\Pol(P,\Psi,\bse)(\alpha) =
\Res_{z=0}\Br{\Pa{P(\partial_{\bsx})\frac{e^{\Inner{\alpha}{\bsx+z\bse}}}{\prod_{\psi\in\Psi}\Inner{\psi}{\bsx+z\bse}}}_{\bsx=\zero}}.
\end{eqnarray}
\end{definition}

\begin{remark}
Note that  the function $\Pol(P,\Psi,\bse)$ depends only on the
restriction $p$ of $P$ to $W$. If $p$ is a polynomial function on
$W=\bse^\perp$. We define
$$
\Pol(p,\Psi,\bse):=\Pol(P,\Psi,\bse),
$$
where $P$ is any polynomial on $V^*$ extending $p$.
\end{remark}

\begin{definition}[Wall]
Let $\Psi=[\psi_1,\ldots,\psi_N]$ be a sequence of non-zero, not
necessarily distinct, linear forms on $V$, i.e., $\psi_k\in V^*$,
lying in an open half-space. If $\Psi$ spans the whole space $V^*$
with $\dim(V)=n$, then a \emph{wall} of $\Psi$ is a (real)
hyperplane generated by $n-1$ linearly independent elements of
$\Psi$.
\end{definition}

Note that $V^*$ is separated into two open half-spaces by the wall
$W:=\set{\phi\in V^*:\Inner{\phi}{\bse}=0}$ in $V^*$. Let $V^*_\pm$
denote the corresponding open half-spaces, that is,
\begin{eqnarray*}
V^*_+:=\Set{\phi\in V^*:\Inner{\phi}{\bse}>0}\text{ and
}V^*_-:=\Set{\phi\in V^*:\Inner{\phi}{\bse}<0}.
\end{eqnarray*}
Let $C_1\subset V^*_+$ and $C_2\subset V^*_-$ be two chambers on two
sides of $W$ and adjacent. We choose the measures $\dif \bsx$ on $V$
and $\dif\phi$ on $V^*$, respectively, and choose Lebesgue measure
$\dif\bst$ on $\bbR^N$. We also choose the measure $\dif w$ on $W$
such that $\dif\phi =\dif w\dif t$ with $t=\Inner{\phi}{\bse}$ for
$\phi\in V^*$. Based on this, we can write $\Psi$ as
$$
\Psi =[\Psi_0,\Psi^+,\Psi^-],
$$
where $\Psi^\pm =\Psi\cap V^*_\pm$, and $\Psi_0=\Psi\cap W$.

\begin{thrm}[Boysal-Vergne-Paradan jump formula, \cite{Boysal2009}]
Let $W$ be a wall, determined by a vector $\bse\in V$, $\Psi$ be a
sequence of vectors spanning the whole space $V^*$. Denote
$\Psi_0:=\Psi\cap W$. Let $v_{12}=v(\Psi_0,\dif w,C_{12})$ be the
polynomial function on $W$ associated to the chamber $C_{12}$ of
$\Psi_0$. Then, if $\Inner{C_1}{\bse}>0$ and $V_{12}$ is any
extension of $v_{12}$ on $V^*$,
\begin{eqnarray}
v(\Psi,\dif \bsx,C_1) - v(\Psi,\dif \bsx,C_2) =
\Pol(v_{12},\Psi\backslash\Psi_0,\bse).
\end{eqnarray}
\end{thrm}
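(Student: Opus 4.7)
The strategy is to use a Laplace-transform representation of the polynomial densities $v(\Psi,\dif\bsx,C_i)$ and then convert the difference into a residue computation via a contour deformation across the wall $W=\bse^\perp$. Concretely, $v(\Psi,\dif\bsx,C)(\alpha)$ is the density of the push-forward of Lebesgue measure on $\bbR^N_{\geqslant 0}$ along the linear map $\bst\mapsto\sum_k t_k\psi_k$, and for $\alpha$ in a chamber $C$ it admits an integral representation of the form
\[
v(\Psi,\dif\bsx,C)(\alpha)\;=\;\frac{1}{(2\pi\mathrm{i})^n}\int_{\eta_C+\mathrm{i}V}e^{\Inner{\alpha}{\xi}}\prod_{\psi\in\Psi}\frac{\dif\xi}{\Inner{\psi}{\xi}},
\]
where $\eta_C\in V$ is chosen deep inside a dual cone adapted to $C$ so that the integral converges.

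First I would decompose $V=W^{\perp}\oplus\real\bse$ on the integration side by writing $\xi=\bsx+z\bse$ with $\bsx$ in a chosen complement to $\real\bse$ inside $V$ and $z\in\complex$. The factors indexed by $\Psi_0\subset W$ then become $\Inner{\psi}{\bsx}$, independent of $z$, while for $\psi\in\Psi\setminus\Psi_0$ one gets $\Inner{\psi}{\bsx}+z\Inner{\psi}{\bse}$, with $\Inner{\psi}{\bse}$ having a definite sign distinguishing $\Psi^+$ from $\Psi^-$. The hypothesis $\Inner{C_1}{\bse}>0$ ensures that the $z$-contour in the representation of $v(\Psi,\dif\bsx,C_1)$ lies on the opposite side of $z=0$ from that of $v(\Psi,\dif\bsx,C_2)$. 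By Cauchy's theorem, the difference of the two representations equals the residue at $z=0$ of the remaining $\bsx$-integral.

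Second I would identify this residue with the right-hand side. At $z=0$ the $\bsx$-integral over the complement to $\real\bse$ reconstructs an inverse Laplace transform of $\prod_{\psi\in\Psi_0}\Inner{\psi}{\bsx}^{-1}$, which, by the very same representation applied to the smaller data $(\Psi_0,\dif w,C_{12})$ on $W$, equals the polynomial $v_{12}$. Converting this $\bsx$-integration into the action of the differential operator $v_{12}(\partial_{\bsx})$ on the $\bsx$-dependent generating function and evaluating at $\bsx=\zero$ yields exactly
\[
\Pol(v_{12},\Psi\setminus\Psi_0,\bse)(\alpha)\;=\;\Res_{z=0}\Br{\Pa{v_{12}(\partial_{\bsx})\frac{e^{\Inner{\alpha}{\bsx+z\bse}}}{\prod_{\psi\in\Psi\setminus\Psi_0}\Inner{\psi}{\bsx+z\bse}}}_{\bsx=\zero}},
\]
matching the definition of $\Pol$. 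Independence from the chosen extension $V_{12}$ is then automatic, since only derivatives tangent to $W$ survive at $\bsx=\zero$.

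The main obstacle will be the careful bookkeeping of measures, orientations, and convergence domains: the dual measure $\dif w$ on $W$ must be compatible with the splitting $\dif\phi=\dif w\,\dif t$ with $t=\Inner{\phi}{\bse}$; the orientation of the $z$-contour deformation is pinned down by the sign condition $\Inner{C_1}{\bse}>0$; and the chamber structure means that the Laplace representation is only literally valid inside an appropriate half-space, so a regularization (or Paradan's refinement of Boysal--Vergne) is needed at walls where several hyperplanes of poles coincide. These technicalities, rather than any deep new idea, constitute the bulk of the proof.
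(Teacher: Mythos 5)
The first thing to note is that the paper does not prove this statement at all: it is quoted verbatim from Boysal--Vergne \cite{Boysal2009} and used as a black box in the proof of Proposition~\ref{prop:4density}, so there is no internal proof to compare yours against. On its own merits, your sketch has the right architecture --- indeed it is the circle of ideas behind \cite{Boysal2009}, whose very title names the Khovanski--Pukhlikov differential operator: represent the spline $v(\Psi,\dif\bsx,C)$ by an inverse Laplace integral, split $\xi=\bsx+z\bse$ along the wall, extract the jump as a residue in the transverse variable, and convert the wall polynomial $v_{12}$ into the operator $v_{12}(\partial_{\bsx})$ evaluated at $\bsx=\zero$.

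However, two of your steps are wrong or vacuous as written, and they are exactly where the content lives. First, the contour cannot be chosen ``adapted to $C$'': the representation converges only for $\eta$ in the fixed dual cone $U=\Set{\eta\in V:\Inner{\psi}{\eta}>0\ \forall\,\psi\in\Psi}$ (which is nonempty because $\Psi$ lies in an open half-space), and all such $\eta$ give the same integral, namely the density itself. There is no $\eta_{C_1}$ versus $\eta_{C_2}$ sitting on opposite sides of $z=0$; the chamber of $\alpha$ enters only through the sign of $t=\Inner{\alpha}{\bse}$, which dictates the half-plane in which the $z$-contour is closed. Second, and consequently, ``by Cauchy's theorem the difference equals the residue at $z=0$'' fails: for $\bsx$ on the contour the integrand is regular at $z=0$ (the factors indexed by $\Psi_0$ do not involve $z$ at all, and the remaining factors are nonzero there), so that residue is zero as stated; the actual $z$-poles sit at the $\bsx$-dependent points $z_{\psi}=-\Inner{\psi}{\bsx}/\Inner{\psi}{\bse}$ for $\psi\in\Psi\setminus\Psi_0$, and the jump is the $\bsx$-integral of the sum of the residues at all of them. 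Collapsing that sum into $\Res_{z=0}\Br{\Pa{v_{12}(\partial_{\bsx})\,e^{\Inner{\alpha}{\bsx+z\bse}}/\prod_{\psi\in\Psi\setminus\Psi_0}\Inner{\psi}{\bsx+z\bse}}_{\bsx=\zero}}$ --- note that in the definition of $\Pol$ one sets $\bsx=\zero$ rather than integrating over $\bsx$, and the pole at $z=0$ appears only after that evaluation --- is precisely the assertion of the theorem. Carrying it out requires the Khovanski--Pukhlikov identity $\int_W v_{12}(\beta)\,\check G(\alpha-\beta)\,\dif\beta=\Br{v_{12}(\partial_{\bsx})\Pa{e^{\Inner{\alpha}{\bsx}}G(\bsx)}}_{\bsx=\zero}$, which follows from $v_{12}(\beta)=\Br{v_{12}(\partial_{\bsx})e^{\Inner{\beta}{\bsx}}}_{\bsx=\zero}$, applied with $G$ the reciprocal of the product over $\Psi\setminus\Psi_0$, together with an induction identifying the wall density of $\Psi_0$ with $v_{12}$ on the chamber $C_{12}$; you name this conversion but do not perform it. Finally, independence of the extension $V_{12}$ is not because ``only tangential derivatives survive'': it holds because two extensions differ by a multiple of the linear form $\Inner{\cdot}{\bse}$, whose associated operator acts as $\partial_z$ on functions of $\bsx+z\bse$, and $\Res_{z=0}$ annihilates total $z$-derivatives.
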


Although the following result is obtained in \cite{Christandl2014},
the details of proof is not provided. For reader's convenience, we
present it here using Boysal-Vergne-Paradan jump formula.
\begin{prop}\label{prop:4density}
The measure $H_{(-2,2)}\star H_{(-2,0)}\star H_{(-2,-2)}\star
H_{(0,-2)}$ has Lebesgue density
\begin{eqnarray*}
p(r,s) =
\begin{cases}
p_1(r,s)\equiv\frac{(r+s)^2}{64},&\text{if }(r,s)\in C_1=\set{(r,s):0\leqslant s\leqslant -r};\\
p_2(r,s)\equiv\frac{r^2+2rs-s^2}{64},&\text{if }(r,s)\in C_2=\set{(r,s):r\leqslant s\leqslant0};\\
p_3(r,s)\equiv\frac{r^2}{32},&\text{if }(r,s)\in C_3=\set{(r,s): s\leqslant r\leqslant0},\\
p_0(r,s)\equiv0,&\text{if }(r,s)\in C_0=\bbR^2\backslash (C_1\cup
C_2\cup C_3).
\end{cases}
\end{eqnarray*}
\end{prop}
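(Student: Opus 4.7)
The convolution $\mu := H_{(-2,2)}\star H_{(-2,0)}\star H_{(-2,-2)}\star H_{(0,-2)}$ is the push-forward of Lebesgue measure on $\bbR_{\geqslant0}^4$ under the linear map $(a,b,c,d)\mapsto a(-2,2)+b(-2,0)+c(-2,-2)+d(0,-2)$, so $\mu$ is supported on the convex cone $\mathcal{K}=\{(r,s):r\leqslant 0,\,s\leqslant -r\}$ generated by the four weights. The arrangement of walls (the four lines through the origin spanned by individual weights) cuts $\mathcal{K}$ into exactly the three chambers $C_1,C_2,C_3$ stated in the proposition, while the complement $C_0$ of $\mathcal{K}$ lies outside the support; hence $p_0\equiv 0$ there. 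The overall strategy is to invoke the Boysal--Vergne--Paradan jump formula to propagate the density from $C_0$ inward, chamber by chamber.

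\textbf{Iteration across walls.} Starting from $p_0\equiv 0$, I would cross the wall $W_4$ through $(0,-2)$ to enter $C_3$. With primitive direction $\bse=(1,0)$, we have $\Psi_0=\{(0,-2)\}$, $\Psi\setminus\Psi_0=\{(-2,2),(-2,0),(-2,-2)\}$, and the density $v_{12}$ on $W_4$ associated to the chamber of $\Psi_0$ adjacent to both $C_0$ and $C_3$ is constant. Pairing each $\psi\in\Psi\setminus\Psi_0$ with $\bse$ gives $\langle\psi,z\bse\rangle=-2z$, so
\begin{eqnarray*}
\Pol(v_{12},\Psi\setminus\Psi_0,\bse) \;\propto\; \Res_{z=0}\frac{e^{zr}}{(-2z)^3}\;\propto\; r^2,
\end{eqnarray*}
yielding $p_3(r,s)=r^2/32$ after fixing normalizations. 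Next, crossing $W_3$ through $(-2,-2)$ with $\bse=(1,-1)$ into $C_2$ gives $\Psi_0=\{(-2,-2)\}$ and $\Psi\setminus\Psi_0=\{(-2,2),(-2,0),(0,-2)\}$ with pairings $-4z,-2z,2z$; the resulting residue computed from $\Pol(v_{12},\Psi\setminus\Psi_0,\bse)$ provides the jump $p_2-p_3$, and addition to $p_3$ produces $p_2(r,s)=(r^2+2rs-s^2)/64$. Then crossing $W_2$ through $(-2,0)$ with $\bse=(0,1)$ into $C_1$ gives $\Psi_0=\{(-2,0)\}$ and an analogous residue computation that, added to $p_2$, produces $p_1(r,s)=(r+s)^2/64$. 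Finally, crossing $W_1$ through $(-2,2)$ from $C_1$ must restore the density to $0$, which I would use as an internal consistency check.

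\textbf{Main obstacle.} The residue computations themselves are routine, since each $\langle\psi,z\bse\rangle$ is a scalar times $z$ and the residue at $z=0$ just extracts a finite Taylor coefficient of an exponential. The principal difficulty is \emph{bookkeeping} rather than analysis: fixing the correct sign in the jump formula according to which of $C_i,C_j$ lies in $\langle\alpha,\bse\rangle>0$; determining the lower-dimensional density $v_{12}$ on each wall (itself computed by a 1D convolution of the remaining Heavisides pushed onto the wall); and normalizing the Lebesgue measures on each wall consistently with $d\bsx$ on $\bbR^2$ and the (non-unit) length of each primitive weight vector. Once these conventions are pinned down, each of the three jump computations is a short residue calculation, and the four polynomial expressions $p_1,p_2,p_3,0$ together with the closing consistency check on $W_1$ complete the proof. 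As a cross-check, I would verify the formulas by directly computing the area of the 2-dimensional polytope $\{(a,b,c,d)\in\bbR_{\geqslant0}^4: a(-2,2)+b(-2,0)+c(-2,-2)+d(0,-2)=(r,s)\}$ in each chamber, which reduces to elementary planar geometry after eliminating two variables via the constraints.
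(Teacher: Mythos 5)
Your proposal is correct and follows essentially the same route as the paper: identify the support cone and its three chambers, then propagate the density outward-in from the zero chamber via the Boysal--Vergne--Paradan jump formula, with each jump reduced to a one-variable residue whose data (e.g.\ the pairings $-4z,-2z,2z$ at the wall through $(-2,-2)$) match the paper's. The only differences are cosmetic: you cross the walls in the opposite order (entering through the ray of $(0,-2)$ into $C_3$ rather than through the ray of $(-2,2)$ into $C_1$), and the wall densities you defer as ``normalizations'' are exactly the constants $\tfrac12$ that the paper pins down by its determinant computations.
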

The density function $p(r,s)$ (its graph has already appeared in
\cite{Christandl2014}) and its support that is decomposed into three
chammbers can be visualized in the following
Figure~\ref{fig:chamber-support}.
\begin{figure}[ht]\centering
\subfigure[Chambers and support] {\begin{minipage}[b]{0.5\linewidth}
\includegraphics[width=.7\textwidth]{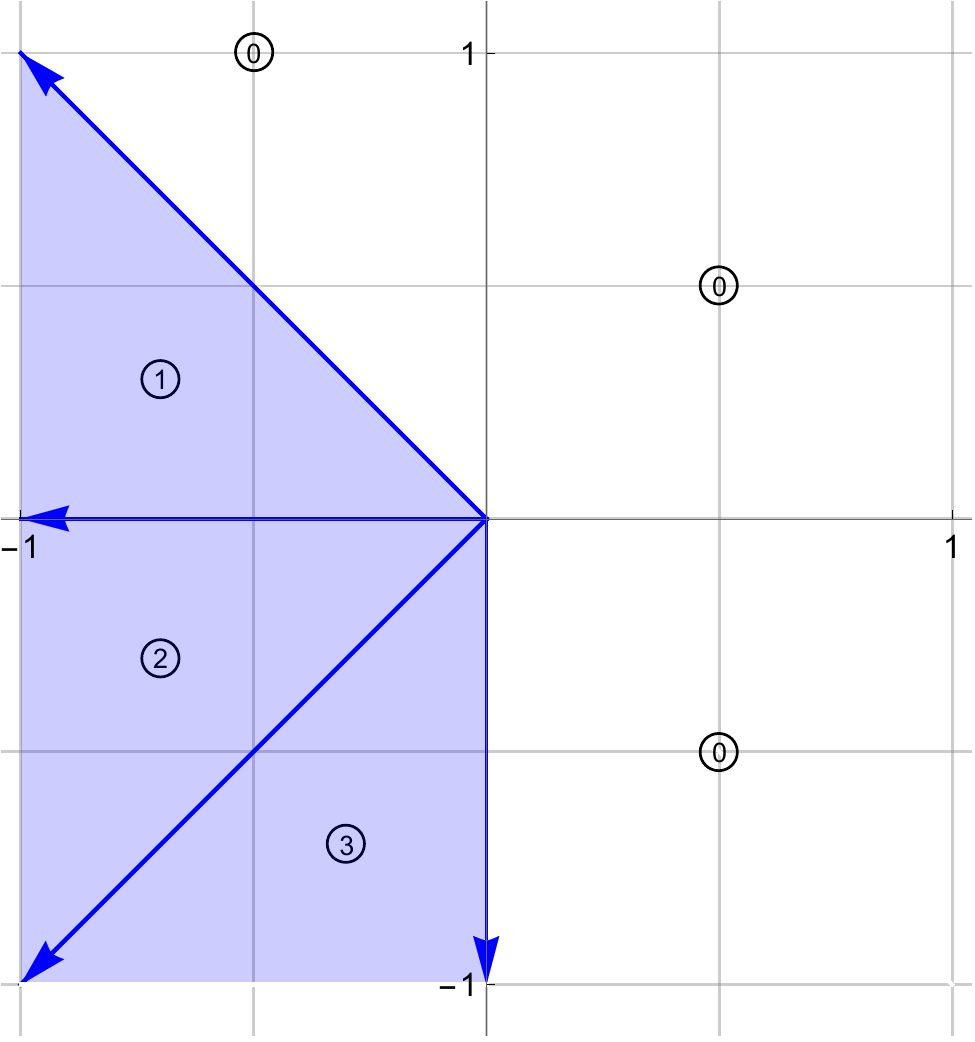}
\end{minipage}}\centering
\subfigure[Density function] {\begin{minipage}[b]{0.5\linewidth}
\includegraphics[width=1\textwidth]{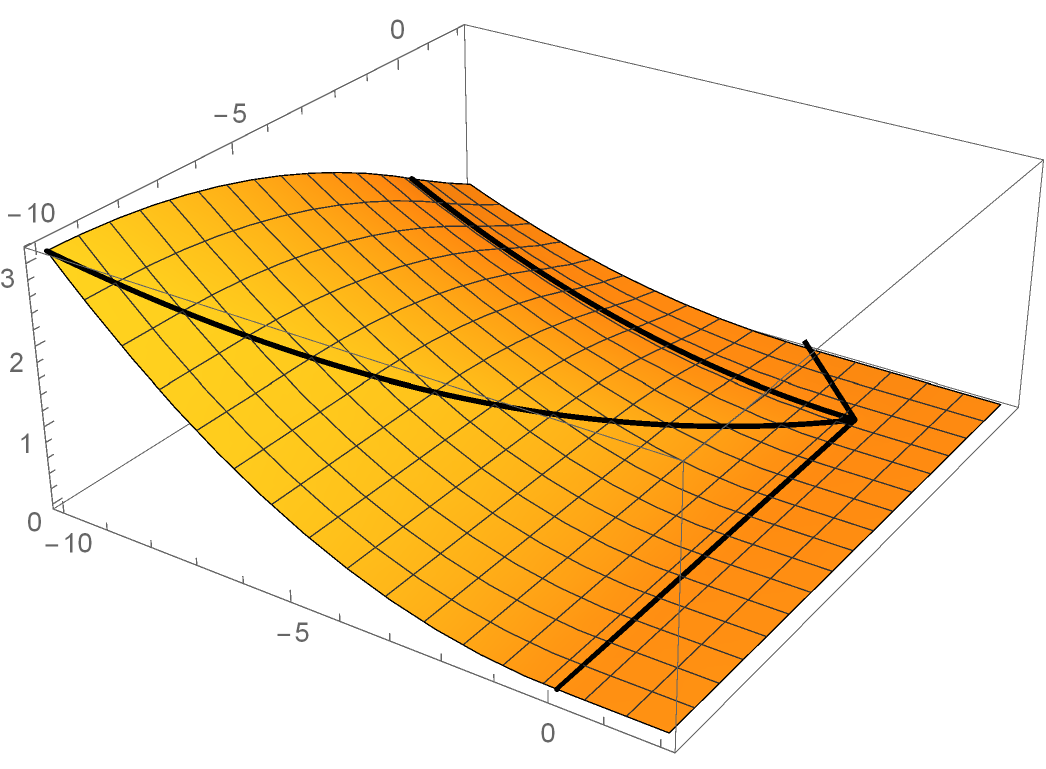}
\end{minipage}}
\caption{The support of the iterated convolution and its density
over the support}\label{fig:chamber-support}
\end{figure}

\begin{proof}
The measure $H_{(-2,2)}\star H_{(-2,0)}\star H_{(-2,-2)}\star
H_{(0,-2)}$ is, in fact, the non-Abelian Duistermaat-Heckman measure
that is on the closures of the regular chambers containing the
vertex $(0,0)$ given by the above convolution:
$$
\delta_{(0,0)}\star H_{(-2,2)}\star H_{(-2,0)}\star H_{(-2,-2)}\star
H_{(0,-2)}.
$$
Then its density is denote by $p(r,s)$. Denote
$O(0,0),P_1(-2,2),P_2(-2,0),P_3(-2,-2),P_4(0,-2)$. Then
$$
\abs{OP_1P_2}:=C_1,\quad \abs{OP_2P_3}:=C_2,\quad
\abs{OP_3P_4}:=C_3.
$$
Denote $\bbR^2-C_1-C_2-C_3:=C_0$. Thus
$$
\bbR^2=C_0\cup C_1\cup C_2 \cup C_3.
$$
(i) Clearly $p\equiv0$ on $C_0$.
\begin{itemize}
\item The wall $W_{01}$ separating $C_0$ and $C_1$ is given by the
equation: $r+s=0$. Its normal vector $\xi=(-1,-1)$. Just only one
weights $\omega_1=(-2,2)$ lies on the linear hyperplane spanned by
$W_{01}$ (other weights are outside of $W_{01}$:
$\omega_2=(-2,0),\omega_3=(-2,-2),\omega_4=(0,-2)$).
\item Consider the push-forward of Lebesgue
measure on $\bbR^1_{\geqslant0}$ along the linear map $P_{W_{01}}:
u\mapsto u\omega_1$. Its density with respect to $\dif w$ is given
by a single homogeneous polynomial on the wall $W_{01}$.
\end{itemize}
\begin{itemize}
\item Denote by $p_{W_{01}}$ any polynomial function extending it to all of
the dual Lie algebra of $\tilde T$.
\item Clearly $p_{W_{01}}=\frac12$. Indeed,
\begin{eqnarray*}
&&p^{-1}_{W_{01}}=\dif \lambda(\omega_1,\xi/\norm{\xi}^2) =
\Abs{\det\Pa{\begin{array}{cc}
               -2 & 2 \\
               -1/2 & -1/2
             \end{array}
}}=2\\
&&\Longleftrightarrow p_{W_{01}}=\frac12.
\end{eqnarray*}
\end{itemize}
Note that, during the proof, we abuse notation by using identical
symbols for a differential form and its induced measure. Hence
$\mu=(r,s)$
\begin{eqnarray*}
p(\mu)=\frac12\Res_{z=0}\Pa{\frac{e^{\Inner{\mu}{\bsx+z\xi}}}{\prod^4_{k=2}\Inner{\omega_k}{\bsx+z\xi}}}_{\bsx=\zero},
\end{eqnarray*}
where
\begin{eqnarray*}
\frac{e^{\Inner{\mu}{\bsx+z\xi}}}{\prod^4_{k=2}\Inner{\omega_k}{\bsx+z\xi}}
=
\frac{\exp\Pa{r(x_1-z)+s(x_2-z)}}{[-2(x_1-z)][[-2(x_1-z)-2(x_2-z)]][-2(x_2-z)]}.
\end{eqnarray*}
Thus
\begin{eqnarray*}
p(\mu) &=& \frac12\times\Pa{-\frac18}
\Res_{z=0}\Pa{\frac{\exp\Pa{r(x_1-z)+s(x_2-z)}}{(x_1-z)(x_1+x_2-2z)(x_2-z)}}_{\bsx=\zero}\\
&=&\frac1{32} \Res_{z=0}
\Pa{\frac{e^{-(r+s)z}}{z^3}}=\frac1{64}(r+s)^2.
\end{eqnarray*}
(ii) The wall $W_{12}$ separating $C_1$ and $C_2$ is given by the
equation: $s=0$. Its normal vector $\xi=(0,-1)$. Just only one
weights $\omega_1=(-2,0)$ lies on the linear hyperplane spanned by
$W_{12}$ (other weights are outside of $W_{12}$:
$\omega_2=(-2,2),\omega_3=(-2,-2),\omega_4=(0,-2)$).
\begin{itemize}
\item Consider the push-forward of Lebesgue measure on
$\bbR^1_{\geqslant0}$ along the linear map $P_{W_{12}}:u\mapsto
u\omega_1$. Its density with respect to $\dif w$ is given by a
single homogeneous polynomial on the wall $W_{12}$.
\item Denote by $p_{W_{12}}$ any polynomial function extending it to
all of the dual Lie algebra of $\tilde T$. Clearly
$p_{W_{12}}=\frac12$. Indeed,
\begin{eqnarray*}
p^{-1}_{W_{12}} =\dif
\lambda(\omega_1,\xi/\norm{\xi}^2)=\Abs{\det\Pa{\begin{array}{cc}
                                                                    -2 & 0 \\
                                                                    0 &
                                                                    -1
                                                                  \end{array}
}}=2\Longleftrightarrow p_{W_{12}}=\frac12.
\end{eqnarray*}
\end{itemize}
Hence
\begin{eqnarray*}
p(\mu) - \frac1{64}(r+s)^2 =
\frac12\Res_{z=0}\Pa{\frac{e^{\Inner{\mu}{\bsx+z\xi}}}{\prod^4_{k=2}\Inner{\omega_k}{\bsx+z\xi}}}_{\bsx=\zero},
\end{eqnarray*}
where
\begin{eqnarray*}
\frac{e^{\Inner{\mu}{\bsx+z\xi}}}{\prod^4_{k=2}\Inner{\omega_k}{\bsx+z\xi}}
=
\frac{e^{rx_1+s(x_2-z)}}{[-2x_1+2(x_2-z)][-2x_1-2(x_2-z)][-2(x_2-z)]}.
\end{eqnarray*}
Thus
\begin{eqnarray*}
&&p(\mu)-\frac{(r+s)^2}{64} \\
&&=
\frac12\times-\frac18\Res_{z=0}\Pa{\frac{e^{rx_1+s(x_2-z)}}{(x_1-x_2+z)(x_1+x_2-z)(x_2-z)}}_{\bsx=\zero}\\
&&= -\frac1{16}\Res_{z=0}\Pa{\frac{e^{-sz}}{z^3}} = -\frac{s^2}{32}.
\end{eqnarray*}
Therefore, on the chamber $C_2$
\begin{eqnarray*}
p(\mu) =p(r,s)= \frac{(r+s)^2}{64} - \frac{s^2}{32} =
\frac{r^2+2rs-s^2}{64}.
\end{eqnarray*}
(iii) The wall $W_{23}$ separating $C_2$ and $C_3$ is given by the
equation: $r-s=0$. Its normal vector $\xi=(1,-1)$. Just only one
weights $\omega_1=(-2,-2)$ lies on the linear hyperplane spanned by
$W_{23}$ (other weights are outside of $W_{23}$:
$\omega_2=(-2,2),\omega_3=(-2,0),\omega_4=(0,-2)$).
\begin{itemize}
\item Consider the push-forward of Lebesgue measure on
$\bbR^1_{\geqslant0}$ along the linear map $P_{W_{23}}:u\mapsto
u\omega_1$. Its density with respect to $\dif w$ is given by a
single homogeneous polynomial on the wall $W_{23}$.
\item Denote by $p_{W_{23}}$ any polynomial function extending it to
all of the dual Lie algebra of $\tilde T$. Clearly
$p_{W_{23}}=\frac12$.
\end{itemize}
Indeed,
\begin{eqnarray*}
p^{-1}_{W_{23}} = \dif
\lambda(\omega_1,\xi/\norm{\xi}^2)=\Abs{\det\Pa{\begin{array}{cc}
                                              -2 & -2 \\
                                              1/2 & -1/2
                                            \end{array}
}}=2\Longleftrightarrow p_{W_{23}}=\frac12.
\end{eqnarray*}
Hence
\begin{eqnarray*}
p(\mu) - \frac{r^2+2rs-s^2}{64} =
\frac12\Res_{z=0}\Pa{\frac{e^{\Inner{\mu}{\bsx+z\xi}}}{\prod^4_{k=2}\Inner{\omega_k}{\bsx+z\xi}}}_{\bsx=\zero},
\end{eqnarray*}
where
$$
\frac{e^{\Inner{\mu}{\bsx+z\xi}}}{\prod^4_{k=2}\Inner{\omega_k}{\bsx+z\xi}}
=
\frac{e^{r(x_1+z)+s(x_2-z)}}{[-2(x_1+z)+2(x_2-z)][-2(x_1+z)][-2(x_2-z)]}.
$$
Thus
\begin{eqnarray*}
&&p(\mu) - \frac{r^2+2rs-s^2}{64} \\
&&=\frac12\times
-\frac18\Res_{z=0}\Pa{\frac{e^{r(x_1+z)+s(x_2-z)}}{(x_1-x_2+2z)(x_1+z)(x_2-z)}}_{\bsx=\zero}\\
&&=\frac1{32}\Res_{z=0}\Pa{\frac{e^{(r-s)z}}{z^3}} =
\frac{(r-s)^2}{64}.
\end{eqnarray*}
Therefore, on the chamber $C_3$
\begin{eqnarray*}
p(\mu)=p(r,s)=\frac{r^2+2rs-s^2}{64}+\frac{(r-s)^2}{64} =
\frac{r^2}{32}.
\end{eqnarray*}
This completes the proof.
\end{proof}

\section{Applications of Duistermaat-Heckman measure}\label{sect:6}

For $\lambda=(\lambda_1,\lambda_2,\lambda_3,\lambda_4)\in\bbR^3_{\geqslant0}$ with $\lambda_1>\lambda_2>\lambda_3>\lambda_4$ and $\sum^4_{k=1}\lambda_k=1$, denote the $\SU(4)$-adjoint orbit of $\Lambda=\diag(\lambda_1,\lambda_2,\lambda_3,\lambda_4)$ by
$$
\cU_{\Lambda}:=\Set{\bsU\Lambda\bsU^\dagger:\bsU\in\SU(4)}.
$$
 Let
$$
\hat\Lambda: = \Lambda-\frac{\I_4}
4=\diag(\hat\lambda_1,\hat\lambda_2,\hat\lambda_3,\hat\lambda_4).
$$
where $\hat\lambda_k=\lambda_k-\frac14$. Denote by
$\hat\lambda=(\hat\lambda_1,\hat\lambda_2,\hat\lambda_3,\hat\lambda_4)$.
Apparently $\hat\lambda_1>\hat\lambda_2>\hat\lambda_3>\hat\lambda_4$
and $\sum^4_{k=0}\hat\lambda_k=0$. It is easily seen that
$$
\hat\Lambda\in \mathrm{i}\h,
$$
where $\liet$ is the Cartan subalgebra of $\su(4)$, the Lie algebra
of $\SU(4)$. By chosing $\Ad$-invariant inner product on $\liet$, we
can identify $\liet^*$ with $\liet$. Let $\tilde\liet$ be the Lie
algebra of maximal torus $\tilde T$ of $\tilde K=\SU(2)\times
\SU(2)$.

Denote
\begin{eqnarray}\label{eq:ccoef}
c_3=2(\hat\lambda_1+\hat\lambda_2),\quad
c_2=2(\hat\lambda_1+\hat\lambda_3),\quad
c_1=2\Abs{\hat\lambda_1+\hat\lambda_4}.
\end{eqnarray}
From the above definition, we infer that
\begin{eqnarray}
c_3>c_2>c_1>0.
\end{eqnarray}
\begin{prop}
Consider the Lie group $\tilde K = \SU(2)\times\SU(2)$ with its
maximal torus $\tilde T$ whose Lie algebra is given by
$\tilde\liet$. The positive Weyl chamber
$\mathrm{i}\tilde\liet^*_{\geqslant0}$ can be identified with
$\bbR^2_{\geqslant0}$.
\end{prop}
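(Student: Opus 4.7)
The proposition is essentially a structural statement about products of rank one Lie groups, so my plan is to reduce it to the well-known description of the positive Weyl chamber of $\SU(2)$ and then take a direct product.

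First, I would recall that for $\SU(2)$ the (standard) maximal torus consists of diagonal unitaries $\diag(e^{\mathrm{i}\theta},e^{-\mathrm{i}\theta})$, with Lie algebra $\liet_{\SU(2)} = \{\diag(\mathrm{i}\theta,-\mathrm{i}\theta):\theta\in\bbR\}$, hence one-dimensional. Since the maximal torus of a product group is the product of maximal tori, I get $\tilde T = T_{\SU(2)}\times T_{\SU(2)}$ and $\tilde\liet = \liet_{\SU(2)}\oplus\liet_{\SU(2)}$. Multiplying by $\mathrm{i}$ and using the Ad-invariant inner product on $\tilde\k$ (restricted to $\tilde\liet$) to identify $\tilde\liet^*$ with $\tilde\liet$, I obtain the vector-space identification $\mathrm{i}\tilde\liet^* \cong \bbR\oplus\bbR = \bbR^2$.

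Second, I would invoke the fact that the root system of a product Lie group is the disjoint union of the root systems of the factors. The unique positive root $\alpha$ of $\SU(2)$ satisfies $\alpha(\diag(\mathrm{i}\theta,-\mathrm{i}\theta)) = 2\theta$, so under the identification $\mathrm{i}\liet_{\SU(2)}\cong\bbR$ it is the linear functional $x\mapsto 2x$. Hence the positive roots of $\tilde K$ may be chosen as $R^+ = \{(\alpha,0),(0,\alpha)\}$. By the defining formula
\begin{eqnarray*}
\mathrm{i}\tilde\liet^*_{\geqslant 0} = \bigl\{(r,s)\in\mathrm{i}\tilde\liet^* : \beta(r,s)\geqslant 0 \text{ for all } \beta\in R^+\bigr\},
\end{eqnarray*}
the two inequalities reduce to $2r\geqslant 0$ and $2s\geqslant 0$, i.e., the closed first quadrant $\bbR^2_{\geqslant 0}$.

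The argument is essentially bookkeeping; the only point requiring care is to fix normalizations consistently (the same invariant inner product must be used both to identify $\tilde\liet^*\cong\tilde\liet$ and to evaluate the roots on elements of $\mathrm{i}\tilde\liet$) so that the chamber inequalities land on $r\geqslant 0$ and $s\geqslant 0$ rather than on some rescaled form. Once this convention is pinned down at the start, the proof is a one-line computation.
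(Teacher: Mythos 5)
Your proposal is correct and follows essentially the same route as the paper's own proof: reduce to the rank-one case of $\SU(2)$, where the single positive root evaluates to twice the coordinate, use the fact that the root system of $\SU(2)\times\SU(2)$ is the disjoint union of the factors' root systems, and read off the two chamber inequalities $2r\geqslant 0$, $2s\geqslant 0$ to land on the closed first quadrant. Your explicit remark about fixing the inner-product normalization consistently for the identification $\tilde\liet^*\cong\tilde\liet$ is a sound point of care that the paper handles only implicitly.
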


\begin{proof}
The Lie group $\SU(2)$ has a rank one Lie algebra, $\su(2)$, with
Cartan subalgebra $\h$ spanned by $\bsH=\proj{0}-\proj{1}$. The root
system is of a single positive root $\alpha$ satisfying
$\alpha(\bsH)=2$. The positive Weyl chamber is a subset of the
Cartan subalgebra $\h$ defined by the condition that all positive
roots take nonnegative values. Any element $h\in\mathrm{i}\h$ can be
expressed as $h=z\bsH$ for some $z\in\bbR$. Then
$\alpha(h)=z\alpha(\bsH)=2z\geqslant0$. Geometrically,
$\h\cong\bbR$. Thus, the positive Weyl chamber for $\SU(2)$ is the
set of all nonnegative multipliers of $\bsH$, i.e.,
\begin{eqnarray*}
\h_{\geqslant0}=\Set{z\bsH:z\geqslant0}\cong\bbR_{\geqslant0}.
\end{eqnarray*}
The Lie group $\tilde K=\SU(2)\times\SU(2)$ has rank two. Its Lie
algebra is $\tilde\k=\su(2)\oplus \su(2)$, and the Cartan subalgebra
$\tilde\liet$ is spanned by the generators $\bsH_1$ from the first
$\su(2)$ factor and $\bsH_2$ from the second $\su(2)$ factor. Using
the standard representation, the generators are
$$
\bsH_1=\bsH\ot\proj{0}\text{ and }\bsH_2=\bsH\ot\proj{1}.
$$
The root system of $\tilde K=\SU(2)\times\SU(2)$ is the disjoint
union of the root systems of each $\SU(2)$ factor. The positive
roots are $\alpha_1$ for the first factor and $\alpha_2$ for the
second factor, defined
$$
\alpha_1(\bsH_1)=2,\alpha_1(\bsH_2)=0,
\alpha_2(\bsH_1)=0,\alpha_2(\bsH_2)=2.
$$
The positive Weyl chamber is the subset of the Cartan subalgebra
where all positive roots take non-negative values. For an element
$h=z_1\bsH_1+z_2\bsH_2\in\mathrm{i}\tilde\liet$, the conditions are
\begin{eqnarray*}
\alpha_1(h) =\alpha_1(z_1\bsH_2+z_2\bsH_2)=
2z_1+0z_2=2z_1\geqslant0,\\
\alpha_2(h) =\alpha_2(z_1\bsH_2+z_2\bsH_2)=
0z_1+2z_2=2z_2\geqslant0.
\end{eqnarray*}
Thus, the positive Weyl chamber is identified with
\begin{eqnarray*}
\mathrm{i}\tilde\liet_{\geqslant0}=\Set{z_1\bsH_1+z_2\bsH_2:z_1\geqslant0,z_2\geqslant0}\cong\bbR^2_{\geqslant0}.
\end{eqnarray*}
We are done.
\end{proof}

With $\mathrm{i}\tilde\liet^*_{\geqslant0}\cong\bbR^2_{\geqslant0}$,
we have already seen that
\begin{eqnarray}
\dh^{\tilde
K}_{\cO_{\hat\lambda}}\Big|_{\mathrm{i}\tilde\liet^*_{\geqslant0}} =
\sum_{w\in S_4}\sign(w)\delta_{\pi(w\hat\lambda)}\star
H_{(-2,0)}\star H_{(-2,-2)}\star H_{(0,-2)}\star
H_{(-2,2)}\Big|_{\mathrm{i}\tilde\liet^*_{\geqslant0}}
\end{eqnarray}
is supported on $\mathrm{i}\tilde\liet^*_{\geqslant0}$, which
amounts to be also supported on $\bbR^2_{\geqslant0}$. We can write
it in the following form:
\begin{prop}
For $\tilde K=\SU(2)\times\SU(2)$, it holds that
\begin{eqnarray}
\dh^{\tilde K}_{\cO_{\hat\lambda}}\Big|_{\bbR^2_{\geqslant0}} =
\sum_{w\in S_4}\sign(w)\delta_{\pi(w\hat\lambda)}\star
H_{(-2,0)}\star H_{(-2,-2)}\star H_{(0,-2)}\star
H_{(-2,2)}\Big|_{\bbR^2_{\geqslant0}}
\end{eqnarray}
with its density function being given by
\begin{eqnarray}
&&\sum_{w\in S_4}\sign(w)\delta_{\pi(w\hat\lambda)}\star
p(x,y)\Big|_{\bbR^2_{\geqslant0}}\notag\\
&&=\Pa{\delta_{(c_2,\pm c_1)}-\delta_{(c_1,c_2)} +
\delta_{(c_1,c_3)}-\delta_{(c_3,\pm c_1)} + \delta_{(c_3,\pm
c_2)}-\delta_{(c_2,c_3)}}\star
p(x,y)\Big|_{\bbR^2_{\geqslant0}},\label{eq:momentpoly}
\end{eqnarray}
where the expression of $p(x,y)$ is from
Proposition~\ref{prop:4density}.
\end{prop}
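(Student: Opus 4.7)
The plan is to carry out the sum in the previous proposition explicitly: enumerate all $24$ permutations $w\in S_4$, compute $\pi(w\hat\lambda)$ and $\sign(w)$ for each, and isolate the terms whose support meets $\bbR^2_{\geqslant 0}$. First, using the zero-sum constraint $\sum_i\hat\lambda_i=0$ together with Eq.~\eqref{eq:ccoef}, I would tabulate the doubled pair-sums: $2(\hat\lambda_i+\hat\lambda_j)$ takes values $\pm c_3$ on $\{\{1,2\},\{3,4\}\}$, $\pm c_2$ on $\{\{1,3\},\{2,4\}\}$, and $\pm c_1$ on $\{\{1,4\},\{2,3\}\}$, with the sign on the $c_1$-pair governed by $s:=\mathrm{sgn}(\hat\lambda_1+\hat\lambda_4)$. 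Then for each $w$, the point $\pi(w\hat\lambda)=(2(\hat\lambda_{w(1)}+\hat\lambda_{w(2)}),\,2(\hat\lambda_{w(1)}+\hat\lambda_{w(3)}))$ is a pair drawn from this six-value list, and $\sign(w)$ is read off from inversions.

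Second, I would carry out a support analysis for the convolutions $\delta_{(a,b)}\star p$. From Proposition~\ref{prop:4density}, $\supp(p)=C_1\cup C_2\cup C_3\subset\{(r,s):r\leqslant 0,\,r+s\leqslant 0\}$, so $\delta_{(a,b)}\star p$ restricts non-trivially to $\bbR^2_{\geqslant 0}$ if and only if $a\geqslant 0$ and $a+b\geqslant 0$; any permutation violating either inequality contributes zero to the restricted sum.

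The enumeration then splits into three groups. The six permutations whose image $\pi(w\hat\lambda)$ already lies in $\bbR^2_{\geqslant 0}$ contribute the main block $\delta_{(c_2,c_1)}-\delta_{(c_1,c_2)}+\delta_{(c_1,c_3)}-\delta_{(c_3,c_1)}+\delta_{(c_3,c_2)}-\delta_{(c_2,c_3)}$. Exactly three further permutations satisfy the weaker condition $a+b\geqslant 0$ despite having a negative second coordinate, yielding the mixed contributions $-\delta_{(c_3,-c_1)}+\delta_{(c_3,-c_2)}+\delta_{(c_2,-c_1)}$; the specific three depend on $s$, but the signed total is independent of $s$. Grouping these with three of the main terms via the shorthand $\delta_{(c_i,\pm c_j)}:=\delta_{(c_i,c_j)}+\delta_{(c_i,-c_j)}$ assembles the $\pm$-pairs and reproduces precisely Eq.~\eqref{eq:momentpoly}. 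The remaining $15$ permutations violate $a\geqslant 0$ or $a+b\geqslant 0$ and vanish on the restriction.

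The main obstacle is the combinatorial bookkeeping across all $24$ cases with careful sign tracking. The key subtlety is that the three ``mixed'' permutations must be retained even though $\pi(w\hat\lambda)\notin\bbR^2_{\geqslant 0}$, because the leftward/downward cone support of $p$ allows the translated convolution to re-enter the positive quadrant. Overlooking them would yield only the six main deltas without the $\pm$-pairings, so the support analysis of $p$ from Proposition~\ref{prop:4density} is the decisive ingredient.
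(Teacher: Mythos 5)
Your proposal is correct and follows essentially the same route as the paper: both arguments reduce the claim to a support analysis of the shifted cones $\pi(w\hat\lambda)+\supp(p)$ and discard every permutation whose shifted cone misses the positive quadrant, leaving exactly the nine signed delta contributions of Eq.~\eqref{eq:momentpoly}. Your version is in fact more explicit than the paper's sketch, which merely asserts ``by calculation'' that $9$ points survive, whereas you state the precise criterion ($a>0$ and $a+b>0$ for a contribution from $\delta_{(a,b)}\star p$) and organize the $24$ permutations into the six in-quadrant points, the three mixed points, and the fifteen vanishing ones.
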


\begin{proof}
Sketch of proof can be given for Eq.~\eqref{eq:momentpoly}: Indeed,
we have already known the support of $p(x,y)$ is $\cup^3_{k=1}C_k$.
Then the support of $(\delta_{\pi(w\hat\lambda)}\star p)(x,y)$ is
the shifted version of $\cup^3_{k=1}C_k$ to the point
$\pi(w\hat\lambda)$. However, points whose corresponding shifted
supports do not intersect with $\bbR^2_{>0}$ do not contribute to
the density. In other words, by calculation, we find that it remains
only 9 points can contribute to the density.
\end{proof}

\begin{remark}
Moreover, the non-Abelian moment polytope for the action of $\tilde
K=\SU(2)\times\SU(2)$ on a generic co-adjoint $K=\SU(4)$-orbit can
be identified as
\begin{eqnarray}\label{eq:momentpolytope}
\Delta_{\tilde
K}(\cO_{\hat\lambda}):=\Set{(x,y)\in[0,c_3]^2:x+y\leqslant
c_2+c_3,\abs{x-y}\leqslant c_3-c_1},
\end{eqnarray}
where $c_k$'s are from Eq.~\eqref{eq:ccoef}. This is just the
support of the density in Eq.~\eqref{eq:momentpoly}. Based on the
constraint Eq.~\eqref{eq:momentpolytope}, we can re-derive the
compatibility solution for two-qubit quantum marginal problem. Let
me describe it below: Let $\rho_{AB}\in\rD(\bbC^2\ot\bbC^2)$ with
two reduced states $(\rho_A,\rho_B)$, where the eigenvalues of
$\rho_{AB}$ are ordered non-increasingly as
$\lambda_1\geqslant\cdots\geqslant\lambda_4\geqslant0$, and the
minimal eigenvalue of $\rho_X(X=A,B)$ is denoted by
$\lambda_{\min}(\rho_X)$. Via $\hat\lambda_j = \lambda_j-\frac14$
for $j\in\set{1,2,3,4}$ and $x=1-2\lambda_{\min}(\rho_1)$ and
$y=1-2\lambda_{\min}(\rho_2)$, we see that the condition
$x\in[0,c_3]$ can be rewritten as
\begin{eqnarray}\label{eq:compatibility1}
\min\Set{\lambda_{\min}(\rho_A),\lambda_{\min}(\rho_B)}\geqslant
\lambda_3(\rho_{AB})+\lambda_4(\rho_{AB});
\end{eqnarray}
and the condition $x+y\leqslant c_2+c_3$ can be rewritten as
\begin{eqnarray}\label{eq:compatibility2}
\lambda_{\min}(\rho_A)+\lambda_{\min}(\rho_B)\geqslant
\lambda_2(\rho_{AB})+\lambda_3(\rho_{AB})+2\lambda_4(\rho_{AB});
\end{eqnarray}
and the condition $\abs{x-y}\leqslant c_3-c_1$ can be rewritten as
\begin{eqnarray}\label{eq:compatibility3}
\Abs{\lambda_{\min}(\rho_A)-\lambda_{\min}(\rho_B)}\geqslant
\min\set{\lambda_1(\rho_{AB})-\lambda_3(\rho_{AB}),
\lambda_2(\rho_{AB})-\lambda_4(\rho_{AB})}.
\end{eqnarray}
In summary, for any given 2-tuple $(\rho_A,\rho_B)$ of qubit states,
there exists a global two-qubit states $\rho_{AB}$ such that
$\rho_A=\ptr{A}{\rho_{AB}}$ and $\rho_B=\ptr{A}{\rho_{AB}}$ if and
only if the three compatibility conditions
Eqs.~\eqref{eq:compatibility1}--\eqref{eq:compatibility3} hold. This
result is obtained already by Bravyi \cite{Bravyi2004}.
\end{remark}

Consider the pushforward measure $F_*(\mu_{\cO_{\hat\lambda}})$ of
Liouville measure $\mu_{\cO_{\hat\lambda}}$ along the map $F:
\cO_{\hat\lambda}\to\blue{\mathrm{i}\h^*_{\geqslant0}}\cong\bbR_{\geqslant0}$,
where $\h^*_{\geqslant0}$ is the dual Cartan subalgebra of $\su(2)$
and $F=q_1\circ\pi\circ\Phi$,
\begin{eqnarray}\label{eq:pushforwardbyF}
F:\cO_{\hat\lambda}(\subset
\k^*)\stackrel{\Phi}{\hookrightarrow}\mathrm{i}\k^*\stackrel{\pi}{\to}\mathrm{i}\tilde\liet^*_{\geqslant0}=\blue{\mathrm{i}\h^*_{\geqslant0}}\oplus
\mathrm{i}\h^*_{\geqslant0}\stackrel{q_1}{\to}\blue{\mathrm{i}\h^*_{\geqslant0}}.
\end{eqnarray}
\begin{prop}
The pushforward measure $F_*\mu_{\cO_{\hat\lambda}}$ is just
\begin{eqnarray}
F_*(\mu_{\cO_{\hat\lambda}}) =
(q_1)_*\pi_*\Phi_*(\mu_{\cO_{\hat\lambda}}) = (q_1)_*\Pa{p_{\tilde
K}\dh^{\tilde K}_{\cO_{\hat\lambda}}}
\end{eqnarray}
whose density function is given by $(2\pi)^6\cI(x|\hat\lambda)$,
where
\begin{eqnarray*}
\cI(x|\hat\lambda):=\int_{\bbR^2_{\geqslant0}}xy\Br{\Pa{\delta_{(c_2,\pm
c_1)}-\delta_{(c_1,c_2)} + \delta_{(c_1,c_3)}-\delta_{(c_3,\pm c_1)}
+ \delta_{(c_3,\pm c_2)}-\delta_{(c_2,c_3)}}\star p}(x,y)\dif x\dif
y,
\end{eqnarray*}
where the expression of $p(x,y)$ is from
Proposition~\ref{prop:4density}. Moreover, it holds that
\begin{eqnarray}\label{eq:RNderivative}
\cI(x|\hat\lambda)=\cI_1(x|\hat\lambda)+\cI_2(x|\hat\lambda)+\cI_3(x|\hat\lambda),
\end{eqnarray}
where
\begin{eqnarray}
\cI_1(x|\hat\lambda) &=& \frac{c^2_3-c^2_2}{64}x(x-c_1)^2\chi_{[0,c_1]}(x),\\
\cI_2(x|\hat\lambda) &=& \frac{c^2_1-c^2_3}{64}x(x-c_2)^2\chi_{[0,c_2]}(x),\\
\cI_3(x|\hat\lambda) &=&
\frac{c^2_2-c^2_1}{64}x(x-c_3)^2\chi_{[0,c_3]}(x),
\end{eqnarray}
where $c_k$'s are from Eq.~\eqref{eq:ccoef}, and $\chi_A(x)$ is the
indicator of a set $A$.
\end{prop}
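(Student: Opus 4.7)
The plan is to unfold the pushforward and reduce everything to already-established identities, then carry out an integration chamber-by-chamber.

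\textbf{Step 1 (Structural identification).} By functoriality of pushforward, $F_*\mu_{\cO_{\hat\lambda}} = (q_1)_*\pi_*\Phi_*\mu_{\cO_{\hat\lambda}}$. Since on the positive Weyl chamber the projection $\pi$ and the orbit-to-representative map $\tau_{\tilde K}$ agree, the defining relation $\dh^{\tilde K}_{\cO_{\hat\lambda}} = p_{\tilde K}^{-1}(\tau_{\tilde K})_*\Phi_*\mu_{\cO_{\hat\lambda}}$ rearranges to give $\pi_*\Phi_*\mu_{\cO_{\hat\lambda}} = p_{\tilde K}\cdot\dh^{\tilde K}_{\cO_{\hat\lambda}}$ on $\bbR^2_{\geqslant 0}$, establishing the first equality.

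\textbf{Step 2 (Computing $p_{\tilde K}$).} I would compute $p_{\tilde K}(x, y) = \vol_{\mathrm{symp}}(\cO^{\tilde K}_{(x, y)})$ via Eq.~\eqref{eq:sympvol} applied to $\tilde K = \SU(2)\times\SU(2)$: the product runs over the two positive roots (one per factor), each contributing a linear factor, yielding $p_{\tilde K}(x, y)$ proportional to $xy$. Combining the $(2\pi)$-factors from the two $\Inner{\rho}{\alpha}$ terms with the normalization of the KKS form on the ambient $\SU(4)$-orbit (whose $\binom{4}{2} = 6$ positive roots account for the exponent $6$), one collects the overall prefactor $(2\pi)^6$. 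Thus the density of $p_{\tilde K}\dh^{\tilde K}_{\cO_{\hat\lambda}}$ on $\bbR^2_{\geqslant 0}$ is $(2\pi)^6\,xy$ times the density of $\dh^{\tilde K}_{\cO_{\hat\lambda}}$, which by Eq.~\eqref{eq:momentpoly} is the signed sum of $9$ shifted copies of $p(x, y)$.

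\textbf{Step 3 (Chamber integration).} Applying $(q_1)_*$ amounts to integrating in $y\in\bbR_{\geqslant 0}$, which directly produces $(2\pi)^6\cI(x|\hat\lambda)$ with $\cI$ exactly as stated. For the decomposition $\cI = \cI_1 + \cI_2 + \cI_3$, I would proceed chamber-by-chamber: for each of the $9$ shifts $(a, b) = \pi(w\hat\lambda)$ contributing nontrivially, $p(x - a, y - b)$ is piecewise quadratic on the three translated chambers of Proposition~\ref{prop:4density}, and the $y$-integral of $xy\cdot p(x - a, y - b)\chi_{\{y\geqslant 0\}}$ over each translated chamber is a cubic polynomial in $x$. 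After summing with the Weyl signs from $S_4$, extensive pairwise cancellation occurs, and the surviving pieces stratify naturally by which of the intervals $[0, c_1]$, $[0, c_2]$, $[0, c_3]$ contains the $x$-support, yielding the three summands $\cI_k$ of the claimed form $\propto x(x - c_k)^2\chi_{[0, c_k]}(x)$ with coefficients of shape $(c_j^2 - c_i^2)/64$.

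\textbf{Main obstacle.} The crux is Step 3: tracking $9$ shifts times $3$ chamber polynomials times the restriction $y\geqslant 0$ produces roughly $27$ elementary integrals, and verifying that all non-$(x - c_k)^2$ contributions cancel under the alternating $S_4$ sum requires disciplined bookkeeping. A cleaner route is to exploit the piecewise polynomiality of the DH density together with known degree and support bounds: one argues a priori that $\cI$ is piecewise cubic in $x$ with breakpoints only at the $c_k$, and then fixes each $\cI_k$ from its support and its behavior at the endpoints $x = c_k$ (where $\cI_k$ vanishes to second order by the double root at $c_k$), circumventing the explicit $27$-term expansion.
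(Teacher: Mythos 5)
Your primary route (Steps 1--3) is essentially the paper's own proof: the same structural identification $F_*=(q_1)_*\pi_*\Phi_*$ together with $p_{\tilde K}\bigl(\tfrac x2\bsh,\tfrac y2\bsh\bigr)=xy$ and the asserted $(2\pi)^6$ normalization, followed by grouping the nine shifted copies of $p$ according to the first coordinate $c_k$ of the shift point and evaluating the resulting $y$-integrals chamber by chamber over the translated supports --- exactly the bookkeeping the paper carries out (organized by its Figures and evaluated with \textsc{Mathematica}). One caveat: your proposed ``cleaner route'' does not work as stated, because the a priori degree bound on each elementary term is quintic in $x$, not cubic (the integrand $xy\,p_j$ has degree $4$ and the $y$-limits are affine in $x$; cubicity of $\cI_k$ only emerges \emph{after} the signed cancellations you hoped to avoid), and support together with second-order vanishing at $x=c_k$ only forces $\cI_k=(ax+b)(x-c_k)^2\chi_{[0,c_k]}(x)$, leaving coefficients such as $(c_3^2-c_2^2)/64$ that still require explicit integration, so the chamber-by-chamber computation cannot genuinely be circumvented this way.
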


\begin{proof}
Note that $p_{\tilde K}(\frac x2\bsh,\frac y2\bsh)=p_{\SU(2)}(\frac
x2\bsh)p_{\SU(2)}(\frac y2\bsh)=xy$. We consider the following
integral
\begin{eqnarray*}
&&\cI(x|\hat\lambda):=\int_{\bbR^2_{\geqslant0}}xy\Br{\Pa{\delta_{(c_2,\pm
c_1)}-\delta_{(c_1,c_2)} + \delta_{(c_1,c_3)}-\delta_{(c_3,\pm c_1)}
+ \delta_{(c_3,\pm
c_2)}-\delta_{(c_2,c_3)}}\star p}(x,y)\dif x\dif y\\
&&=\int_{\bbR^2_{\geqslant0}}xy\Br{\Pa{\delta_{(c_1,c_3)}-\delta_{(c_1,c_2)}
}\star p}(x,y)\dif x\dif y \\
&&~~~+
\int_{\bbR^2_{\geqslant0}}xy\Br{\Pa{-\delta_{(c_2,c_3)}+\delta_{(c_2,c_1)}+\delta_{(c_2, -c_1)}}\star p}(x,y)\dif x\dif y\\
&&~~~+\int_{\bbR^2_{\geqslant0}}xy\Br{\Pa{\delta_{(c_3,
c_2)}+\delta_{(c_3,- c_2)}-\delta_{(c_3,
c_1)}-\delta_{(c_3,-c_1)}}\star p}(x,y)\dif x\dif y
\end{eqnarray*}
which is the density of the Duistermaat-Heckman measure with respect
to the Lebesgue measure on the positive Weyl chamber (which is
identified with $\bbR^2_{\geqslant0}$) of $\SU(2)\times\SU(2)$.
Denote by
\begin{eqnarray}
\cI_1(x|\hat\lambda) &=& \int_{\bbR^2_{\geqslant0}}xy\Br{\Pa{\delta_{(c_1,c_3)}-\delta_{(c_1,c_2)}}\star p}(x,y)\dif x\dif y,\\
\cI_2(x|\hat\lambda) &=&\int_{\bbR^2_{\geqslant0}}xy\Br{\Pa{-\delta_{(c_2,c_3)}+\delta_{(c_2,c_1)}+\delta_{(c_2, -c_1)}}\star p}(x,y)\dif x\dif y,\\
\cI_3(x|\hat\lambda)
&=&\int_{\bbR^2_{\geqslant0}}xy\Br{\Pa{\delta_{(c_3,
c_2)}+\delta_{(c_3,- c_2)}-\delta_{(c_3,
c_1)}-\delta_{(c_3,-c_1)}}\star p}(x,y)\dif x\dif y.
\end{eqnarray}
Note that
$c_1=2\abs{\hat\lambda_1+\hat\lambda_4}=2\abs{\hat\lambda_2+\hat\lambda_3}$
because
$\hat\lambda_1+\hat\lambda_4=-(\hat\lambda_2+\hat\lambda_3)$. Thus
$$
c_1=\max\set{\hat\lambda_1+\hat\lambda_4,\hat\lambda_2+\hat\lambda_3}.
$$
\begin{enumerate}[(a)]
\item The range $[0,c_3]$ of parameter $x$ can be decomposed into
two parts $[0,c_3]=[0,c_1]\cup[c_1,c_3]$. In order to calculate
$\cI_1$, note that $\cI_1\equiv0$ if $x\in[c_1,c_3]$, it suffices to
calculate it for $x\in[0,c_1]$. Thus
\begin{eqnarray*}
\cI_1(x|\hat\lambda)& =&\int^{x+c_3-c_1}_{0}xy \cdot
p_3(x-c_1,y-c_3)\dif y
        +\int^{c_3}_{x+c_3-c_1}xy\cdot p_2 (x-c_1,y-c_3)\dif y\\
        &&+\int^{-x+c_3+c_1}_{c_3}xy\cdot p_1(x-c_1,y-c_3)\dif y-\int^{x+c_2-c_1}_{0}xy\cdot p_3(x-c_1,y-c_2)\dif y\\
        &&-\int^{c_2}_{x+c_2-c_1}xy\cdot p_2(x-c_1,y-c_2)\dif y-\int^{-x+c_1+c_2}_{c_2}xy\cdot p_1(x-c_1,y-c_2)\dif y\\
        &=& \frac{c^2_3-c^2_2}{64}x(x-c_1)^2,
\end{eqnarray*}
which leads to
\begin{eqnarray}
\cI_1(x|\hat\lambda)=
\begin{cases}
\frac{c^2_3-c^2_2}{64}x(x-c_1)^2,&\text{if }x\in[0,c_1],\\
0, &\text{if }x\in [c_1,c_3].
\end{cases}
\end{eqnarray}
With the help of the following Figure~\ref{fig:2}, the above
calculation is easily obtained.
\begin{figure}[ht]\centering
{\begin{minipage}[b]{1\linewidth}
\includegraphics[width=1\textwidth]{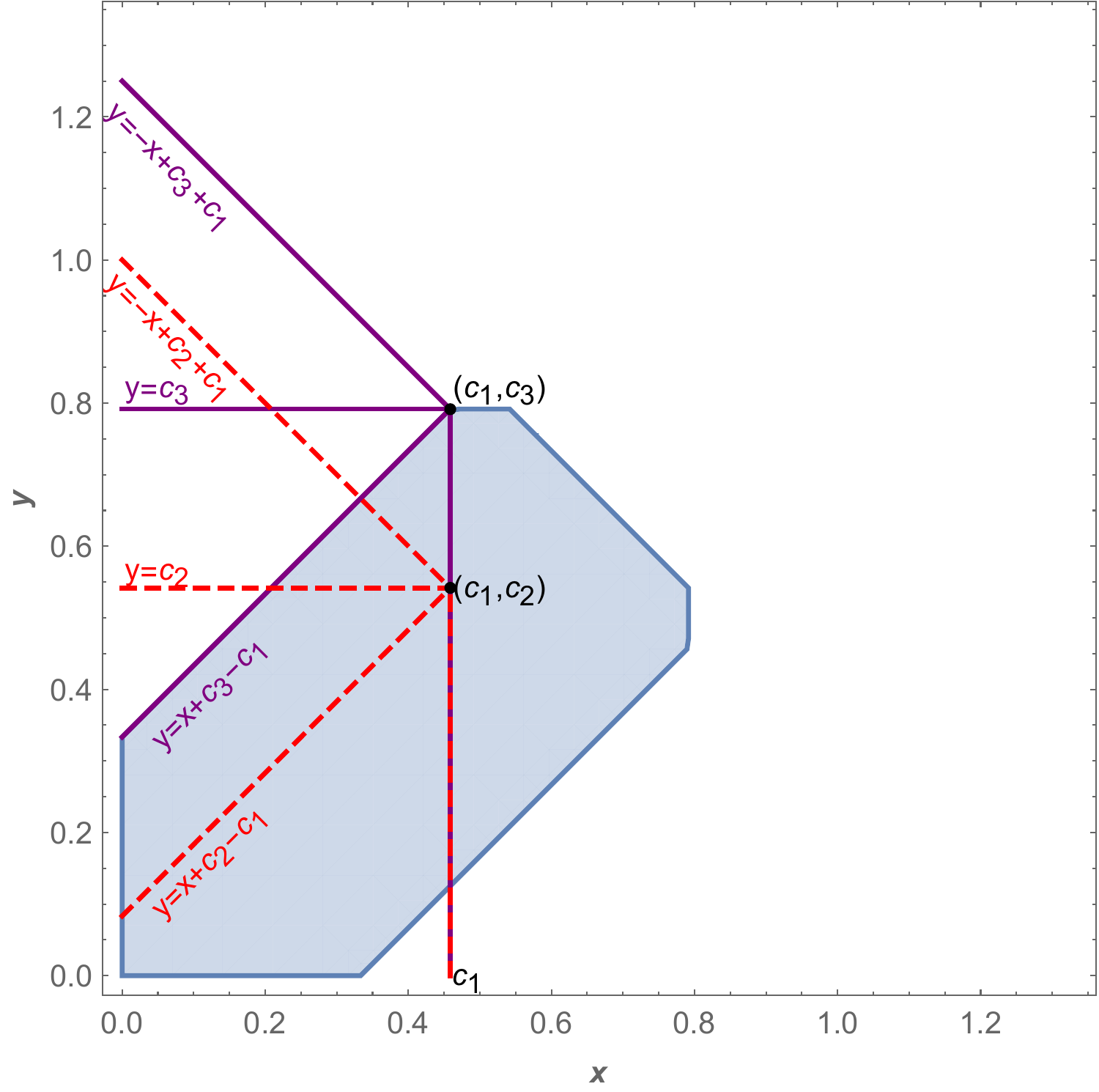}
\end{minipage}}
\caption{The calculation of $\cI_1$ over the non-Abelian moment
polytope.}\label{fig:2}
\end{figure}
\item The range $[0,c_3]$ of parameter $x$ can be decomposed into three
parts: $[0,c_3]=[0,c_2-c_1]\cup[c_2-c_1,c_2]\cup[c_2,c_3]$. Note
that $\cI_2\equiv0$ when $x\in[c_2,c_3]$. It suffices to calculate
$\cI_2$ on $[0,c_2-c_1]\cup[c_2-c_1,c_2]$. See the following
Figure~\ref{fig:3}.
\begin{enumerate}
\item[(b1)] For
$x\in[0,c_2-c_1]$, we see that
\begin{eqnarray*}
\cI_2(x|\hat\lambda) &=& -\int^{x+c_3-c_2}_0 xy p_3(x-c_2,y-c_3)\dif
y -
\int^{c_3}_{x+c_3-c_2}xy p_2(x-c_2,y-c_3)\dif y\\
&&-\int^{-x+c_3+c_2}_{c_3}xy p_1(x-c_2,y-c_3)\dif y + \int^{c_1}_0xy
p_2(x-c_2,y-c_1)\dif y\\
&&+\int^{-x+c_2+c_1}_{c_1}xy p_3(x-c_2,y-c_1)\dif y +
\int^{-x+c_2-c_1}_0xy p_3(x-c_2,y+c_1)\dif y\\
&=& \frac{c^2_1-c^2_3}{64}x(x-c_2)^2.
\end{eqnarray*}
\item[(b2)] For $x\in[c_2-c_1,c_2]$,
\begin{eqnarray*}
\cI_2(x|\hat\lambda) &=& -\int^{x+c_3-c_2}_0 xy p_3(x-c_2,y-c_3)\dif
y -
\int^{c_3}_{x+c_3-c_2}xy p_2(x-c_2,y-c_3)\dif y\\
&&-\int^{-x+c_3+c_2}_{c_3}xy p_1(x-c_2,y-c_3)\dif y +
\int^{x+c_1-c_2}_0xy
p_3(x-c_2,y-c_1)\dif y\\
&&+\int^{c_1}_{x+c_1-c_2}xy p_2(x-c_2,y-c_1)\dif y +
\int^{-x+c_2+-c_1}_{c_1}xy p_1(x-c_2,y-c_1)\dif y\\
&=& \frac{c^2_1-c^2_3}{64}x(x-c_2)^2.
\end{eqnarray*}
\end{enumerate}
This leads to
\begin{eqnarray}
\cI_2(x|\hat\lambda)=
\begin{cases}
\frac{c^2_1-c^2_3}{64}x(x-c_2)^2,&\text{if }x\in[0,c_2],\\
0, &\text{if }x\in [c_2,c_3].
\end{cases}
\end{eqnarray}
\begin{figure}[ht]\centering
{\begin{minipage}[b]{1\linewidth}
\includegraphics[width=1\textwidth]{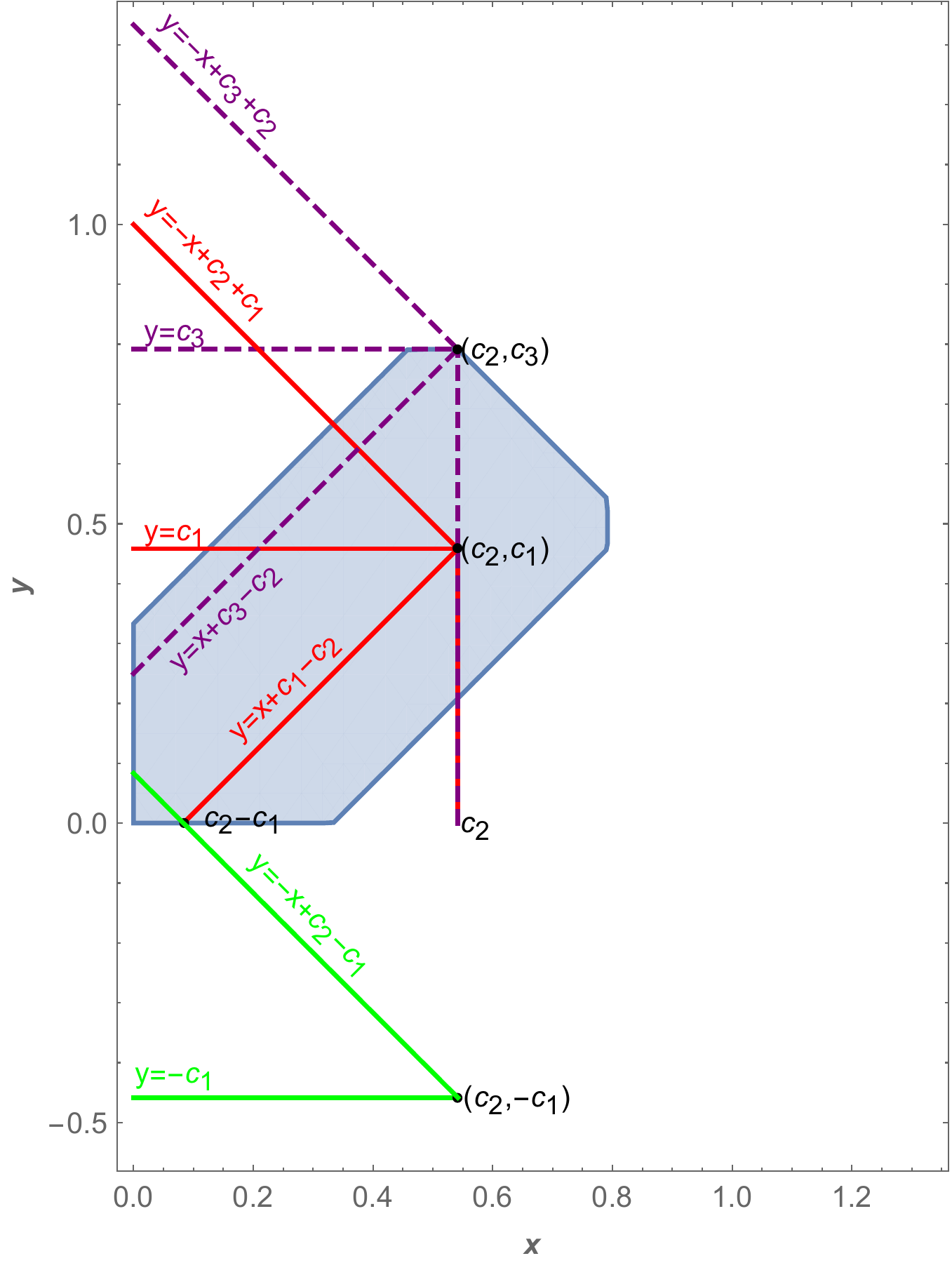}
\end{minipage}}
\caption{The calculation of $\cI_2$ over the non-Abelian moment
polytope.}\label{fig:3}
\end{figure}
\item The range $[0,c_3]$ of parameter $x$ can be decomposed into three
parts: $[0,c_3]=[0,c_3-c_2]\cup[c_3-c_2,c_3-c_1]\cup[c_3-c_1,c_3]$.
It suffices to calculate $\cI_2$ on $[0,c_2-c_1]\cup[c_2-c_1,c_2]$.
See the following Figure~\ref{fig:4}.
\begin{enumerate}
\item[(c1)] For $x\in[0,c_3-c_2]$, we see that
\begin{eqnarray*}
\cI_3(x|\hat\lambda) &=& \int^{c_2}_0xy p_2(x-c_3,y-c_2)\dif x\dif y
+ \int^{-x+c_3+c_2}_{c_2}xy
p_1(x-c_3,y-c_2)\dif x\dif y\\
&&+ \int^{-x+c_3-c_2}_0xy p_1(x-c_3,y+c_2)\dif x\dif y -
\int^{c_1}_0xy p_2(x-c_3,y-c_1)\dif x\dif y\\
&&-\int^{-x+c_3+c_1}_{c_1}xy p_1(x-c_3,y-c_1)\dif x\dif
y-\int^{-x+c_3-c_1}_0xy p_1(x-c_3,y+c_1)\dif x\dif y\\
&=& \frac{c^2_2-c^2_1}{64}x(x-c_3)^2.
\end{eqnarray*}
\item[(c2)] For $x\in[c_3-c_2,c_3-c_1]$,
\begin{eqnarray*}
\cI_3(x|\hat\lambda) &=& \int^{x+c_2-c_3}_0 xy p_3(x-c_3,y-c_2)\dif
x\dif y +
\int^{c_2}_{x+c_2-c_3}xy p_2(x-c_3,y-c_2)\dif x\dif y\\
&&+\int^{-x+c_3+c_2}_{c_2}xy p_1(x-c_3,y-c_2)\dif x\dif y -
\int^{c_1}_0xyp_2(x-c_3,y-c_1)\dif x\dif y\\
&&- \int^{-x+c_3+c_1}_{c_1}xyp_1(x-c_3,y-c_1)\dif x\dif y
-\int^{-x+c_3-c_1}_0xy p_1(x-c_3,y+c_1)\dif x\dif y\\
&=& \frac{c^2_2-c^2_1}{64}x(x-c_3)^2.
\end{eqnarray*}
\item[(c3)] For $x\in[c_3-c_1,c_3]$,
\begin{eqnarray*}
\cI_3(x|\hat\lambda) &=& \int^{x+c_2-c_3}_0xy p_3(x-c_3,y-c_2)\dif
x\dif y + \int^{c_2}_{x+c_2-c_3}xy p_2(x-c_3,y-c_2)\dif
x\dif y \\
&&+ \int^{-x+c_2+c_3}_{c_2}xy p_1(x-c_3,y-c_2)\dif x\dif y
-\int^{x+c_1-c_3}_0xy p_3(x-c_3,y-c_1)\dif x\dif y\\
&&-\int^{c_1}_{x+c_1-c_3}xy p_2(x-c_3,y-c_1)\dif x\dif
y-\int^{-x+c_3+c_1}_{c_1}xy p_1(x-c_3,y-c_1)\dif x\dif y\\
&&=\frac{c^2_2-c^2_1}{64}x(x-c_3)^2.
\end{eqnarray*}
\end{enumerate}
These leads to
\begin{eqnarray}
\cI_3(x|\hat\lambda)=\frac{c^2_2-c^2_1}{64}x(x-c_3)^2,\quad
(x\in[0,c_3]).
\end{eqnarray}
\begin{figure}[ht]\centering
{\begin{minipage}[b]{1\linewidth}
\includegraphics[width=1\textwidth]{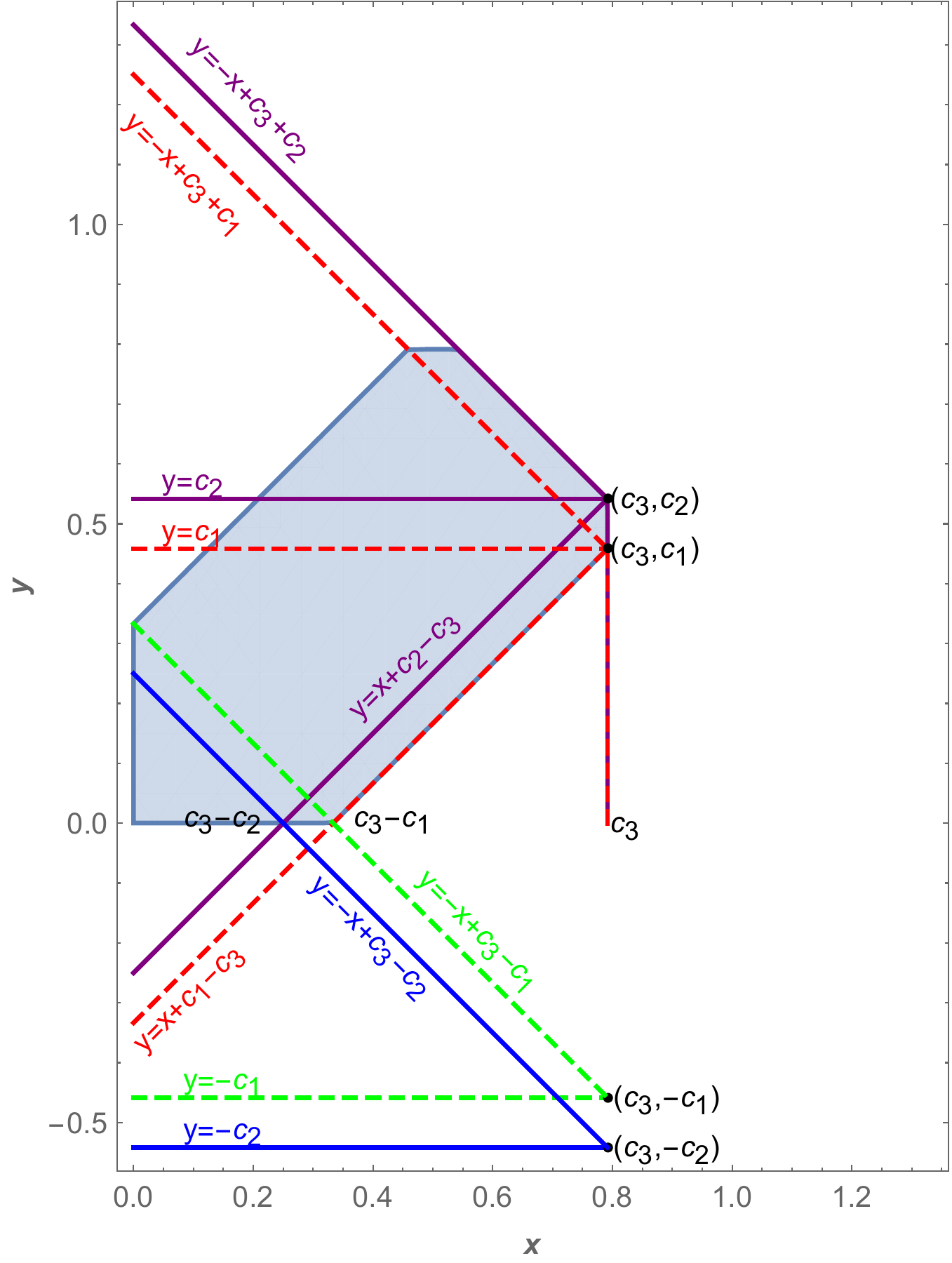}
\end{minipage}}
\caption{The calculation of $\cI_3$ over the non-Abelian moment
polytope.}\label{fig:4}
\end{figure}
\end{enumerate}
This completes the proof.
\end{proof}

\subsection{Separability probability in the conditioned state space}

For a given qudit state $\eta\in\density{\bbC^n}$, Milz and Strunz
considered the conditioned state space \cite{Milz2015}
\begin{eqnarray}
\rD^\eta(\bbC^n\ot\bbC^m) := \Set{\rho\in \density{\bbC^n\ot\bbC^m}:
\Ptr{\bbC^m}{\rho}=\eta\in \density{\bbC^n}}
\end{eqnarray}
and the set of separable states in $\rD^\eta(\bbC^n\ot\bbC^m)$,
denoted by $\rD^\eta_\sep(\bbC^n\ot\bbC^m)$. They studied the
separability probability in the conditioned state space
\begin{eqnarray}\label{eq:sepprobconditionedstatespace}
P^{(n\times m)}_{\sep}(\eta) :=
\frac{\vol_{\rH\rS}\Pa{\rD^\eta_\sep(\bbC^n\ot\bbC^m)}}{\vol_{\rH\rS}\Pa{\rD^\eta(\bbC^n\ot\bbC^m)}}.
\end{eqnarray}
All further considerations will be simplified by the observation
that both the Hilbert-Schmidt measure in the set
$\rD^{\eta}(\bbC^n\ot\bbC^m)$ and the separability of a state
$\rho\in\rD^\eta(\bbC^n\ot\bbC^m)$ are invariant under a
transformation $\bsU\ot\I_m$, where $\bsU\in\SU(n)$. Indeed,
\begin{eqnarray}
\rD^{\bsU\eta\bsU^\dagger}(\bbC^n\ot\bbC^m)
=(\bsU\ot\I_m)\rD^{\eta}(\bbC^n\ot\bbC^m)(\bsU\ot\I_m)^\dagger.
\end{eqnarray}
For $n=2$, the qubit state $\eta\in\rD(\bbC^2)$ can be represented
by Bloch vector $\eta=\frac12(\I_2+\bsa\cdot\boldsymbol{\sigma})$.
We denote by $\eta\approx \bsa$ simply. Thus we make the following
identifications:
\begin{eqnarray*}
\rD^{\bsa}(\bbC^2\ot\bbC^m) &=& \rD^{\eta}(\bbC^2\ot\bbC^m),\\
\rD^{\bsa}_\sep(\bbC^2\ot\bbC^m) &=&
\rD^{\eta}_\sep(\bbC^2\ot\bbC^m),\\
P^{(2\times m)}_{\sep}(\bsa)&=& P^{(2\times m)}_{\sep}(\eta).
\end{eqnarray*}
Thus Eq.~\eqref{eq:sepprobconditionedstatespace}, can be represented
as
\begin{eqnarray}
P^{(2\times m)}_{\sep}(\bsa) :=
\frac{\vol_{\rH\rS}\Pa{\rD^{\bsa}_\sep(\bbC^2\ot\bbC^m)}}{\vol_{\rH\rS}\Pa{\rD^{\bsa}(\bbC^2\ot\bbC^m)}}.
\end{eqnarray}
We have already known the fact that $\SU(2)$ is the double cover of
$\SO(3)$ from Lie Theory, which leads to the following
$\bsU\eta\bsU^\dagger=\frac12(\I_2+(\bsO\bsa)\cdot\boldsymbol{\sigma})$,
where both $\bsU\in\SU(2)$ and $\bsO=(o_{ij})\in\SO(3)$ are
connected via $o_{ij}=\frac12\Tr{\sigma_i\bsU\sigma_j\bsU^\dagger}$.
\begin{prop}\label{prop:blochlengtha}
It holds that
\begin{eqnarray}
P^{(2\times m)}_{\sep}(\bsO\bsa) =P^{(2\times m)}_{\sep}(\bsa)
\end{eqnarray}
for any $\bsO\in\SO(3)$ and $\bsa\in\bbR^3$ with
$\abs{\bsa}\leqslant1$.
\end{prop}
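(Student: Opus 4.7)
The strategy is to transport the whole conditioned state space by a local unitary, show this transport is an HS-isometry that preserves separability, and check that it sends the fiber above $\bsa$ to the fiber above $\bsO\bsa$.

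First, given $\bsO\in\SO(3)$, the double covering $\SU(2)\twoheadrightarrow\SO(3)$ yields some $\bsU\in\SU(2)$ whose adjoint action on the Pauli basis realizes $\bsO$, i.e.\ $o_{ij}=\tfrac12\Tr{\sigma_i\bsU\sigma_j\bsU^\dagger}$. For any qubit state $\eta\approx\bsa$, a direct calculation using $\bsU\sigma_j\bsU^\dagger=\sum_i o_{ij}\sigma_i$ gives $\bsU\eta\bsU^\dagger\approx\bsO\bsa$, as already noted before the statement. Define the affine map
\begin{eqnarray*}
\Phi_{\bsU}:\rD(\bbC^2\ot\bbC^m)\longrightarrow \rD(\bbC^2\ot\bbC^m),\quad \rho\longmapsto (\bsU\ot\I_m)\rho(\bsU\ot\I_m)^\dagger.
\end{eqnarray*}
This is the map whose elementary properties I will leverage throughout.

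Next I would verify three invariance properties of $\Phi_{\bsU}$. (i) Partial trace: for any $\rho$, $\Ptr{\bbC^m}{\Phi_{\bsU}(\rho)}=\bsU\Ptr{\bbC^m}{\rho}\bsU^\dagger$, so $\rho\in\rD^{\bsa}(\bbC^2\ot\bbC^m)$ if and only if $\Phi_{\bsU}(\rho)\in\rD^{\bsO\bsa}(\bbC^2\ot\bbC^m)$; in particular $\Phi_{\bsU}$ is a bijection between these two fibers, with inverse $\Phi_{\bsU^\dagger}$. (ii) Separability: if $\rho=\sum_k p_k\rho_{A,k}\ot\rho_{B,k}$ is separable, then $\Phi_{\bsU}(\rho)=\sum_k p_k(\bsU\rho_{A,k}\bsU^\dagger)\ot\rho_{B,k}$ is again separable, and conversely; hence $\Phi_{\bsU}$ restricts to a bijection $\rD^{\bsa}_\sep(\bbC^2\ot\bbC^m)\to\rD^{\bsO\bsa}_\sep(\bbC^2\ot\bbC^m)$. (iii) HS-isometry: for any two states,
\begin{eqnarray*}
\Inner{\Phi_{\bsU}(\rho')-\Phi_{\bsU}(\rho)}{\Phi_{\bsU}(\rho')-\Phi_{\bsU}(\rho)}_{\rH\rS}=\Inner{\rho'-\rho}{\rho'-\rho}_{\rH\rS}
\end{eqnarray*}
by the unitary invariance of the HS inner product, so $\Phi_{\bsU}$ preserves $d_{\rH\rS}$ and, being affine, pushes the flat HS volume form on the ambient affine space forward to itself.

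Finally, because $\Phi_{\bsU}$ is an HS-isometric bijection that maps $\rD^{\bsa}(\bbC^2\ot\bbC^m)$ onto $\rD^{\bsO\bsa}(\bbC^2\ot\bbC^m)$ and its separable subset onto the separable subset of the image, the induced Riemannian volumes on each pair of fibers coincide:
\begin{eqnarray*}
\vol_{\rH\rS}\Pa{\rD^{\bsO\bsa}(\bbC^2\ot\bbC^m)}=\vol_{\rH\rS}\Pa{\rD^{\bsa}(\bbC^2\ot\bbC^m)},\\
\vol_{\rH\rS}\Pa{\rD^{\bsO\bsa}_\sep(\bbC^2\ot\bbC^m)}=\vol_{\rH\rS}\Pa{\rD^{\bsa}_\sep(\bbC^2\ot\bbC^m)}.
\end{eqnarray*}
Taking the ratio yields $P^{(2\times m)}_{\sep}(\bsO\bsa)=P^{(2\times m)}_{\sep}(\bsa)$, as claimed. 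The only mildly subtle point is the surjectivity of $\SU(2)\to\SO(3)$, which is a standard fact from Lie theory and is explicitly recalled in the paragraph immediately preceding the proposition; everything else is routine bookkeeping with unitary invariance.
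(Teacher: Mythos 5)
Your proposal is correct and follows essentially the same route as the paper: both exploit the fact that conjugation by $\bsU\ot\I_m$ (with $\bsU\in\SU(2)$ covering $\bsO$ via the double cover $\SU(2)\to\SO(3)$) is an HS-isometry preserving separability and mapping the fiber $\rD^{\bsa}(\bbC^2\ot\bbC^m)$ onto $\rD^{\bsO\bsa}(\bbC^2\ot\bbC^m)$, so the two volumes in the ratio are unchanged. Your write-up merely makes explicit the three invariance checks that the paper compresses into ``it is easily seen that.''
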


\begin{proof}
In fact, it is easily seen that
\begin{eqnarray*}
\vol_{\rH\rS}\Pa{(\bsU\ot\I_m)\rD^{\eta}(\bbC^n\ot\bbC^m)(\bsU\ot\I_m)^\dagger}
&=& \vol_{\rH\rS}\Pa{\rD^{\eta}(\bbC^n\ot\bbC^m)}\\
\vol_{\rH\rS}\Pa{(\bsU\ot\I_m)\rD^{\eta}_\sep(\bbC^n\ot\bbC^m)(\bsU\ot\I_m)^\dagger}
&=& \vol_{\rH\rS}\Pa{\rD^{\eta}_\sep(\bbC^n\ot\bbC^m)}
\end{eqnarray*}
holds for all $\bsU\in\SU(2)$, which means that $P^{(2\times
m)}_{\sep}(\bsU\eta\bsU^\dagger)=P^{(2\times m)}_{\sep}(\eta)$. The
desired result is obtained by translating it into the Bloch vector
form.
\end{proof}

Let $a=\abs{\bsa}\in[0,1]$. From the above
Proposition~\ref{prop:blochlengtha}, we see that $P^{(2\times
m)}_{\sep}(\bsa)$ is constant on the sphere
$S_a:=\Set{\bsa\in\bbR^3:\abs{\bsa}=a}$. In view of this reason, we
let $\bsa=(0,0,a)$ and
\begin{eqnarray}
P^{(2\times m)}_{\sep}(a): =
\frac{\vol_{\rH\rS}\Pa{\rD^a_\sep(\bbC^2\ot\bbC^m)}}{\vol_{\rH\rS}\Pa{\rD^a(\bbC^2\ot\bbC^m)}}.
\end{eqnarray}
where
\begin{eqnarray}
\rD^a(\bbC^2\ot\bbC^m):=\Set{\rho\in \rD(\bbC^2\ot\bbC^m):
\Ptr{\bbC^m}{\rho}=\frac12(\I_2+a\sigma_3)}.
\end{eqnarray}
Denote the unit ball by
$B_1:=\Set{\bsa\in\bbR^3:\abs{\bsa}\leqslant1}$. Then
$B_1=\cup_{a\in[0,1]}S_a$, which leads to the following
identifications:
\begin{eqnarray}
\rD(\bbC^2\ot\bbC^2) &=&
\bigcup_{\eta\in\rD(\bbC^2)}\rD^\eta(\bbC^2\ot\bbC^2)=\bigcup_{\bsa \in B_1}\rD^{\bsa}(\bbC^2\ot\bbC^2) \\
&=& \bigcup_{a\in[0,1]}\bigcup_{\bsa\in
S_a}\rD^{\bsa}(\bbC^2\ot\bbC^2).
\end{eqnarray}
Thus we get that
\begin{prop}\label{prop:volHS}
It holds that
\begin{eqnarray}
\vol_{\rH\rS}\Pa{\rD(\bbC^2\ot\bbC^2)} &=&\frac18
\int_{B_1}\vol_{\rH\rS}\Pa{\rD^{\bsa}(\bbC^2\ot\bbC^2)}[\dif\bsa]\\
&=&\frac\pi2\int^1_0a^2 \vol_{\rH\rS}\Pa{\rD^a(\bbC^2\ot\bbC^2)}\dif
a.
\end{eqnarray}
\end{prop}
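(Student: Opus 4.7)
The plan is to disintegrate $\rD(\bbC^2\ot\bbC^2)$ over the partial trace map $\rho\mapsto\Ptr{\bbC^2}{\rho}=\eta$ and apply Fubini's theorem in the Bloch parametrization, carefully tracking the factor of $\tfrac12$ between $d_{\rH\rS}$ and $d_{\mathrm{Euclid}}$. The rotational invariance statement of Proposition~\ref{prop:blochlengtha} then reduces the 3-dimensional integral over the Bloch ball to a 1-dimensional radial integral.

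First, using the parametrization $\rho\leftrightarrow\bsx=(\bsa,\bsb,c_{kl})\in D^{(2\times2)}\subset\bbR^{15}$, the isometry $(\rD(\bbC^2\ot\bbC^2),d_{\rH\rS})\cong(D^{(2\times2)},\tfrac12 d_{\mathrm{Euclid}})$ means that the HS metric tensor equals $\tfrac14$ times the Euclidean metric on $\bbR^{15}$, so the HS volume form equals $(1/2)^{15}$ times the Euclidean volume form. Analogously, fixing the $3$ coordinates $\bsa$ and letting $(\bsb,c_{kl})$ vary yields a $12$-dimensional slice whose HS volume equals $(1/2)^{12}$ times its Euclidean volume.

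Next, because the product coordinates $\bsa\in\bbR^3$ and $(\bsb,c_{kl})\in\bbR^{12}$ are orthogonal in the Euclidean structure, Fubini's theorem gives
\begin{eqnarray*}
\vol_{\mathrm{Euclid}}(D^{(2\times2)}) = \int_{B_1}\vol_{\mathrm{Euclid}}\Pa{D^{(2\times2)}\big|_{\bsa}}[\dif\bsa],
\end{eqnarray*}
where the range of valid $\bsa$ is precisely the Bloch ball $B_1$ because positivity of $\Ptr{\bbC^2}{\rho}=\frac12(\I_2+\bsa\cdot\boldsymbol{\sigma})$ forces $\abs{\bsa}\leqslant1$, and the slice $D^{(2\times2)}\big|_{\bsa}$ is the Euclidean image of $\rD^{\bsa}(\bbC^2\ot\bbC^2)$. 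Substituting the two volume-scaling identities from the first step then yields
\begin{eqnarray*}
\vol_{\rH\rS}(\rD(\bbC^2\ot\bbC^2))=2^{-15}\int_{B_1}2^{12}\vol_{\rH\rS}(\rD^{\bsa}(\bbC^2\ot\bbC^2))[\dif\bsa]=\tfrac18\int_{B_1}\vol_{\rH\rS}(\rD^{\bsa}(\bbC^2\ot\bbC^2))[\dif\bsa],
\end{eqnarray*}
which establishes the first equality.

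For the second equality, I invoke the rotational invariance already proven inside Proposition~\ref{prop:blochlengtha}: for any $\bsU\in\SU(2)$, the map $\bsU\ot\I_2$ is an HS isometry on $\rD(\bbC^2\ot\bbC^2)$ sending $\rD^{\bsa}$ bijectively onto $\rD^{\bsO\bsa}$ with $\bsO\in\SO(3)$ the image of $\bsU$ under the double cover. Hence $\vol_{\rH\rS}(\rD^{\bsa}(\bbC^2\ot\bbC^2))$ depends only on $a=\abs{\bsa}$, and passing to spherical coordinates via $\int_{B_1}f(\abs{\bsa})[\dif\bsa]=4\pi\int^1_0 a^2 f(a)\dif a$ converts the first displayed identity into the claimed $\tfrac{\pi}{2}\int^1_0 a^2\vol_{\rH\rS}(\rD^a(\bbC^2\ot\bbC^2))\dif a$. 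The only real pitfall is the bookkeeping of the $2^{-15}$ vs.\ $2^{-12}$ scale factors; the conceptual ingredient, rotational invariance of the conditioned HS volume, is already in hand.
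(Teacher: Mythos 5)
Your proof is correct and follows essentially the same route as the paper: both exploit the flat Hilbert--Schmidt metric $d_{\rH\rS}=\tfrac12 d_{\mathrm{Euclid}}$ in the Bloch-type parametrization, disintegrate over the Bloch ball via Fubini (your explicit $2^{-15}$ versus $2^{-12}$ bookkeeping is just an unpacked version of the paper's statement that the three $\bsa$-directions contribute the factor $2^{-3}$), and then use the $\SO(3)$-invariance of $\vol_{\rH\rS}(\rD^{\bsa}(\bbC^2\ot\bbC^2))$ to pass to the radial integral $4\pi\int_0^1 a^2(\cdot)\,\dif a$. No gaps; the argument matches the paper's in substance.
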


\begin{proof}
Recall that
$d_{\rH\rS}(\rho',\rho)=\frac12d_{\mathrm{Euclid}}(\bsx',\bsx)$,
where $\bsx'=(\bsa',\bsb',c'_{kl})$ and $\bsx=(\bsa,\bsb,c_{kl})$.
Thus three parameters in $\bsa$ determine Euclid volume $[\dif
\bsa]$, and thus contribute a scaling factor $\Pa{\frac12}^3$
multiple of Euclid volume $[\dif \bsa]$, i.e., in the HS volume
element for $\rD(\bbC^2\ot\bbC^2)$. Therefore we get that
\begin{eqnarray*}
&&\vol_{\rH\rS}\Pa{\rD(\bbC^2\ot\bbC^2)} =
\int_{B_1}\vol_{\rH\rS}\Pa{\rD^{\bsa}(\bbC^2\ot\bbC^2)}\Pa{2^{-3}[\dif\bsa]}\\
&&=\frac18\int^1_0\dif a
\int_{S_a}\vol_{\rH\rS}\Pa{\rD^{\bsa}(\bbC^2\ot\bbC^2)}[\dif\bsa]\\
&&=\frac18\int^1_0\dif a
\int\delta(a-\abs{\bsa})\vol_{\rH\rS}\Pa{\rD^{\bsa}(\bbC^2\ot\bbC^2)}[\dif\bsa]\\
&&=\frac18\int^1_0\dif a
\vol_{\rH\rS}\Pa{\rD^a(\bbC^2\ot\bbC^2)}\int\delta(a-\abs{\bsa})[\dif\bsa]\\
&&=\frac18\int^1_0(4\pi
a^2)\vol_{\rH\rS}\Pa{\rD^a(\bbC^2\ot\bbC^2)}\dif a\\
&&=\frac\pi2\int^1_0a^2\vol_{\rH\rS}\Pa{\rD^a(\bbC^2\ot\bbC^2)}\dif
a.
\end{eqnarray*}
We have done it.
\end{proof}
Next, we calculate the HS volume
$\vol_{\rH\rS}\Pa{\rD^a(\bbC^2\ot\bbC^2)}$. Indeed,
\begin{prop}[\cite{Lovas2017}]
It holds that
\begin{eqnarray}\label{eq:a0}
\vol_{\rH\rS}\Pa{\rD^a(\bbC^2\ot\bbC^2)} =
\vol_{\rH\rS}\Pa{\rD^0(\bbC^2\ot\bbC^2)}(1-a^2)^6,
\end{eqnarray}
where $a\in[0,1)$. Moreover,
\begin{eqnarray}\label{eq:zerovol}
\vol_{\rH\rS}\Pa{\rD^0(\bbC^2\ot\bbC^2)}=\frac{\pi^5}{9676800}.
\end{eqnarray}
\end{prop}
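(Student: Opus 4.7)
The plan is to reduce the scaling identity \eqref{eq:a0} to a single Jacobian calculation for an explicit conjugation map, and then to use the global formulas already in the paper to pin down $\vol_{\rH\rS}(\rD^0(\bbC^2\ot\bbC^2))$ from \eqref{eq:zerovol}. First I would build an affine diffeomorphism between $\rD^0(\bbC^2\ot\bbC^2)$ and $\rD^a(\bbC^2\ot\bbC^2)$. Set $S:=\diag(\sqrt{1+a},\sqrt{1-a})$, which is positive definite for $a\in[0,1)$, and define
\[
\Phi_a(\rho_0):=(S\ot\I_2)\,\rho_0\,(S\ot\I_2).
\]
Because $\ptr{2}{\rho_0}=\tfrac12\I_2$, one checks that $\Phi_a(\rho_0)$ has unit trace and first marginal $\tfrac12 S^2=\tfrac12(\I_2+a\sigma_3)$; positivity is preserved because $S\ot\I_2$ is Hermitian positive, and $\Phi_a^{-1}$ is conjugation by $S^{-1}\ot\I_2$.

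Next I would compute the Hilbert-Schmidt Jacobian of $\Phi_a$. The common tangent space along both $\rD^0$ and $\rD^a$ is
\[
W:=\Set{M\in\Herm(\bbC^4):\ptr{2}{M}=0},\qquad \dim_\bbR W=12,
\]
and its HS-orthogonal complement inside $\Herm(\bbC^4)$ is $W^\perp=\Set{N\ot\I_2:N\in\Herm(\bbC^2)}$. The differential of $\Phi_a$ is the linear map $T:M\mapsto(S\ot\I_2)\,M\,(S\ot\I_2)$. Using the identity $\ptr{2}{(S\ot\I_2)M(S\ot\I_2)}=S\,\ptr{2}{M}\,S$, the map $T$ preserves both $W$ and $W^\perp$, so
\[
\det\nolimits_{\rH\rS}(T|_W)=\det\nolimits_{\rH\rS}(T|_{\Herm(\bbC^4)})\,\big/\,\det\nolimits_{\rH\rS}(T|_{W^\perp}).
\]
The eigenvalues of $S\ot\I_2$ on $\bbC^4$ are $\sqrt{1+a}$ (twice) and $\sqrt{1-a}$ (twice), so those of $T$ on $\Herm(\bbC^4)$ are the products $t_it_j$ (with multiplicity one for $i=j$ and two for $i<j$). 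A short tally of the six off-diagonal pairs gives $\det T=(1-a^2)^2\cdot(1-a^2)^6=(1-a^2)^8$, while the analogous computation on $\Herm(\bbC^2)\cong W^\perp$ yields $(1+a)(1-a)(\sqrt{1-a^2})^2=(1-a^2)^2$. The quotient is $(1-a^2)^6$, which is the required Jacobian, and hence \eqref{eq:a0}.

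For the numerical value I would combine \eqref{eq:a0} with Proposition~\ref{prop:volHS} and Theorem~\ref{th:HSVolume}. Substituting the scaling law into Proposition~\ref{prop:volHS} and evaluating through the Beta integral
\[
\int_0^1 a^2(1-a^2)^6\,\dif a=\tfrac12 B(3/2,7)=\frac{\Gamma(3/2)\,\Gamma(7)}{2\,\Gamma(17/2)}
\]
yields $\vol_{\rH\rS}(\rD^0(\bbC^2\ot\bbC^2))=\frac{4\,\Gamma(17/2)}{\pi\,\Gamma(3/2)\,\Gamma(7)}\vol_{\rH\rS}(\rD(\bbC^4))$. Plugging in $\vol_{\rH\rS}(\rD(\bbC^4))=2(2\pi)^6\prod_{k=1}^4\Gamma(k)/\Gamma(16)$ from Theorem~\ref{th:HSVolume}, together with $\Gamma(17/2)=15!!\sqrt{\pi}/2^8$, $\Gamma(3/2)=\sqrt{\pi}/2$, $\Gamma(7)=720$, and the factorization $15!=15!!\cdot 14!!=15!!\cdot 2^7\cdot 7!$, collapses everything to $\pi^5/9676800$, giving \eqref{eq:zerovol}.

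The main obstacle is the linear-algebra step: checking that $T$ genuinely preserves the orthogonal splitting $\Herm(\bbC^4)=W\oplus W^\perp$ and then organising the sixteen eigenvalues $t_it_j$ so that the factor $(1-a^2)^6$ appears cleanly after dividing out the $W^\perp$-contribution. Once that bookkeeping is in place, Proposition~\ref{prop:volHS} and Theorem~\ref{th:HSVolume} reduce the second claim to a routine Beta-function and factorial simplification.
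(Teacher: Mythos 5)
Your proposal is correct, and for the scaling law \eqref{eq:a0} it takes a genuinely different route from the paper: the paper never proves \eqref{eq:a0} at all --- it quotes it as a Milz--Strunz conjecture settled by Lovas--Andai and only derives \eqref{eq:zerovol} from it --- whereas you give a short self-contained proof. Your steps check out. The map $\Phi_a(\rho)=(S\ot\I_2)\rho(S\ot\I_2)$ with $S=\diag(\sqrt{1+a},\sqrt{1-a})$ is indeed a bijection of $\rD^0(\bbC^2\ot\bbC^2)$ onto $\rD^a(\bbC^2\ot\bbC^2)$, because $\ptr{2}{(S\ot\I_2)\rho(S\ot\I_2)}=S\,\ptr{2}{\rho}\,S$ fixes the marginal (and hence the trace automatically, which is what makes $\Phi_a$ affine rather than merely projective), while congruence by the invertible $S\ot\I_2$ preserves positivity in both directions; it is affine between translates of $W=\Set{M\in\Herm(\bbC^4):\ptr{2}{M}=0}$, so the volume ratio is the constant $\abs{\det(T|_W)}$. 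The determinant bookkeeping is also right: $\det T=(1-a^2)^8$ on $\Herm(\bbC^4)$, $\det(T|_{W^\perp})=(1-a^2)^2$ on $W^\perp=\Set{N\ot\I_2:N\in\Herm(\bbC^2)}$, hence $\det(T|_W)=(1-a^2)^6$; as a cross-check, in the Pauli basis of $W$ the map $T$ acts on each of the three blocks $\spn\Set{\I_2\ot\sigma_j,\sigma_1\ot\sigma_j,\sigma_2\ot\sigma_j,\sigma_3\ot\sigma_j}$ by $\I_2\mapsto\I_2+a\sigma_3$, $\sigma_3\mapsto a\I_2+\sigma_3$, $\sigma_{1,2}\mapsto\sqrt{1-a^2}\,\sigma_{1,2}$ in the first factor, contributing $(1-a^2)^2$ per block. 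Your derivation of \eqref{eq:zerovol} then coincides with the paper's own route (Proposition~\ref{prop:volHS}, Theorem~\ref{th:HSVolume}, and $\frac{\pi}{2}\int_0^1a^2(1-a^2)^6\dif a=\frac{2^9\pi}{45045}$), and the final arithmetic is correct. Beyond self-containedness, your route buys two things the citation does not: first, $\Phi_a$ and $\Phi_a^{-1}$ are invertible local filterings, so they also map $\rD^0_\sep(\bbC^2\ot\bbC^2)$ bijectively onto $\rD^a_\sep(\bbC^2\ot\bbC^2)$ with the same constant Jacobian, and dividing the two scaling identities immediately yields Theorem~\ref{eq:Lovas2017} (constancy of $P^{(2\times2)}_\sep(a)$), which the paper likewise only cites; second, the argument generalizes verbatim to any positive-definite marginal, giving $\vol_{\rH\rS}\Pa{\rD^\eta(\bbC^n\ot\bbC^m)}=\det(n\eta)^{n(m^2-1)}\,\vol_{\rH\rS}\Pa{\rD^{\I_n/n}(\bbC^n\ot\bbC^m)}$ with $S=\sqrt{n\eta}$.
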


\begin{proof}
This expression in Eq.~\eqref{eq:a0} is conjectured by Milz and
Strunz \cite{Milz2015}, and proved by Lovas and Andai
\cite{Lovas2017}. From the above formula, we get that
\begin{eqnarray}
\vol_{\rH\rS}\Pa{\rD(\bbC^2\ot\bbC^2)} &=&
\vol_{\rH\rS}\Pa{\rD^0(\bbC^2\ot\bbC^2)}\times
\frac\pi2\int^1_0a^2(1-a^2)^6\dif a\\
&=& \frac{2^9\pi}{45045} \vol_{\rH\rS}\Pa{\rD^0(\bbC^2\ot\bbC^2)}.
\end{eqnarray}
Note that, from Theorem~\ref{th:HSVolume}, we can see that
\begin{eqnarray}
\vol_{\rH\rS}\Pa{\rD(\bbC^2\ot\bbC^2)} &=&
\vol_{\rH\rS}\Pa{\rD(\bbC^4)} =
\sqrt{4}(2\pi)^{\binom{4}{2}}\frac{\prod^4_{k=1}\Gamma(k)}{\Gamma(4^2)}\\
&=& \frac{2(2\pi)^6(2!)(3!)}{15!}.
\end{eqnarray}
By combining the above formulas, we can derive that
$\vol_{\rH\rS}\Pa{\rD^0(\bbC^2\ot\bbC^2)}=\frac{\pi^5}{9676800}$.
\end{proof}

The following result was conjectured in \cite{Milz2015}, and proven
in \cite{Lovas2017}, which is described as a theorem without proof:
\begin{thrm}[\cite{Lovas2017}]\label{eq:Lovas2017}
It holds that $P^{(2\times 2)}_{\sep}(a)$ is constant for all
$a\in[0,1)$.
\end{thrm}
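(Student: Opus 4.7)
The plan is to construct an explicit \emph{local filter} linear map $\Phi_a:\rD^0(\bbC^2\ot\bbC^2)\to\rD^a(\bbC^2\ot\bbC^2)$ that (i) is a bijection of density matrices respecting the marginal constraints, (ii) restricts to a bijection between separable sets, and (iii) has a constant Jacobian $(1-a^2)^6$. Both volumes in the ratio defining $P^{(2\times 2)}_\sep(a)$ then scale by the same factor, yielding independence from $a$; as a bonus this reproves Eq.~\eqref{eq:a0}.

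First I would set $F_a:=\sqrt{\I_2+a\sigma_3}=\diag(\sqrt{1+a},\sqrt{1-a})$, so that $F_aF_a^\dagger=\I_2+a\sigma_3$, and define
\begin{eqnarray*}
\Phi_a(\rho):=(F_a\ot\I_2)\rho(F_a^\dagger\ot\I_2).
\end{eqnarray*}
Since $\rho\succeq0$ implies $\Phi_a(\rho)\succeq0$, and a direct computation gives $\Tr{\Phi_a(\rho)}=\Tr{F_aF_a^\dagger\ptr{2}{\rho}}=1$ together with $\ptr{2}{\Phi_a(\rho)}=F_a\ptr{2}{\rho}F_a^\dagger=\frac12(\I_2+a\sigma_3)$, the map $\Phi_a$ sends $\rD^0(\bbC^2\ot\bbC^2)$ into $\rD^a(\bbC^2\ot\bbC^2)$; invertibility of $F_a$ for $a\in[0,1)$ makes it a linear bijection with inverse $\sigma\mapsto(F_a^{-1}\ot\I_2)\sigma(F_a^{-\dagger}\ot\I_2)$.

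For separability preservation, if $\rho=\sum_ip_i\rho_i^{(1)}\ot\rho_i^{(2)}$ then
\begin{eqnarray*}
\Phi_a(\rho)=\sum_iq_i\,\frac{F_a\rho_i^{(1)}F_a^\dagger}{\Tr{F_aF_a^\dagger\rho_i^{(1)}}}\ot\rho_i^{(2)},\qquad q_i:=p_i\Tr{F_aF_a^\dagger\rho_i^{(1)}},
\end{eqnarray*}
and $\sum_iq_i=\Tr{F_aF_a^\dagger\ptr{2}{\rho}}=1$, so $\Phi_a(\rho)\in\rD^a_\sep(\bbC^2\ot\bbC^2)$; the same argument with $F_a^{-1}$ in place of $F_a$ gives the reverse inclusion. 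For the Jacobian I would use the Bloch parametrization~\eqref{eq:parametrization}: on $\rD^0(\bbC^2\ot\bbC^2)$ one has $\bsa=\zero$, with free coordinates $\bsb\in\bbR^3$ and $\bsc_k:=(c_{k1},c_{k2},c_{k3})^\t\in\bbR^3$ ($k=1,2,3$); the identities $F_a\sigma_0F_a^\dagger=\sigma_0+a\sigma_3$, $F_a\sigma_iF_a^\dagger=\sqrt{1-a^2}\,\sigma_i$ for $i\in\{1,2\}$, and $F_a\sigma_3F_a^\dagger=a\sigma_0+\sigma_3$ convert the decomposition $\rho=\frac14\sum_{k=0}^{3}\sigma_k\ot M_k$ (with $M_0=\I+\bsb\cdot\boldsymbol\sigma$ and $M_k=\bsc_k\cdot\boldsymbol\sigma$ for $k\geqslant1$) into the induced linear map
\begin{eqnarray*}
\bsb'=\bsb+a\bsc_3,\quad \bsc_1'=\sqrt{1-a^2}\,\bsc_1,\quad \bsc_2'=\sqrt{1-a^2}\,\bsc_2,\quad \bsc_3'=a\bsb+\bsc_3,
\end{eqnarray*}
whose Jacobian factorizes as $(1-a^2)^3$ from the $(\bsc_1,\bsc_2)$ block times $\det\begin{pmatrix}\I_3 & a\I_3\\ a\I_3 & \I_3\end{pmatrix}=(1-a^2)^3$ from the $(\bsb,\bsc_3)$ block, i.e.\ $(1-a^2)^6$.

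Since $\Phi_a$ bijects $\rD^0\leftrightarrow\rD^a$ and $\rD^0_\sep\leftrightarrow\rD^a_\sep$ with the same Jacobian,
\begin{eqnarray*}
P^{(2\times 2)}_\sep(a)=\frac{(1-a^2)^6\vol_{\rH\rS}(\rD^0_\sep(\bbC^2\ot\bbC^2))}{(1-a^2)^6\vol_{\rH\rS}(\rD^0(\bbC^2\ot\bbC^2))}=P^{(2\times 2)}_\sep(0),
\end{eqnarray*}
as claimed. The main obstacle is the separability-preservation step for $\Phi_a^{-1}$: one must rule out that the non-trace-preserving local filter $F_a^{-1}\ot\I_2$ creates entanglement from a separable state. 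The above argument succeeds only because $F_a$ is invertible throughout $a\in[0,1)$ and the weights $q_i$ automatically sum to $1$ thanks to the pinned target marginal $\frac12(\I_2+a\sigma_3)$; at $a=1$ the filter becomes singular and the argument collapses, consistent with the half-open range of $a$ in the statement.
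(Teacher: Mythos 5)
Your proof is correct, and it takes a genuinely different route from the paper: for this theorem the paper supplies no argument at all, deferring entirely to \cite[Corollary 2]{Lovas2017}, whose published proof is an extended explicit-integration argument (Peres--Horodecki criterion combined with volume integrals in a block parametrization of the fiber). Your SLOCC-filtering map $\Phi_a(\rho)=(F_a\ot\I_2)\rho(F_a\ot\I_2)$ with $F_a=\sqrt{\I_2+a\sigma_3}$ replaces all of that: I checked that it is an affine bijection of $\rD^0(\bbC^2\ot\bbC^2)$ onto $\rD^a(\bbC^2\ot\bbC^2)$, that invertibility of $F_a$ makes the separability preservation work in both directions, and that the induced coordinate map $\bsb'=\bsb+a\bsc_3$, $\bsc_{1}'=\sqrt{1-a^2}\,\bsc_{1}$, $\bsc_{2}'=\sqrt{1-a^2}\,\bsc_{2}$, $\bsc_3'=a\bsb+\bsc_3$ has constant Jacobian $(1-a^2)^6$; a block-form cross-check ($\rho_{11}\mapsto(1+a)\rho_{11}$ on $3$ parameters, $\rho_{22}\mapsto(1-a)\rho_{22}$ on $3$, $\rho_{12}\mapsto\sqrt{1-a^2}\,\rho_{12}$ on $6$) reproduces $(1+a)^3(1-a)^3(1-a^2)^3=(1-a^2)^6$. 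The one step worth making explicit is the passage from the coordinate Jacobian to Hilbert--Schmidt volumes: all fibers $\rD^{\bsa}$ lie in parallel affine subspaces of $\Herm(\bbC^4)$ carrying one and the same flat induced metric (the flatness observation of Section~\ref{sect:2}), so $\vol_{\rH\rS}$ is a single $a$-independent constant times Lebesgue measure in the $(\bsb,\bsc)$ coordinates, and both volumes in the ratio scale by $(1-a^2)^6$. What your route buys: it is short, self-contained, recovers Eq.~\eqref{eq:a0} (which the paper also imports from \cite{Lovas2017}) as a byproduct, and generalizes verbatim to an arbitrary full-rank marginal $\eta$ via $F=\sqrt{2\eta}$, giving the scaling factor $(4\det\eta)^6$ and thereby subsuming Proposition~\ref{prop:blochlengtha} as well. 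What the cited route buys is finer information (Lovas--Andai's separability-function framework, covering also the rebit case), none of which is needed for the $8/33$ derivation in this paper.
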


\begin{proof}
Please see the proof in \cite[Corollary 2]{Lovas2017}.
\end{proof}

\subsection{Separability probability of two-qubit states}

The separability probability is given by
\begin{eqnarray}
P^{(n\times m)}_{\sep}
:=\frac{\vol_{\rH\rS}(\rD_\sep(\bbC^n\ot\bbC^m))}{\vol_{\rH\rS}(\rD(\bbC^n\ot\bbC^m))}.
\end{eqnarray}
Here $\rD_\sep(\bbC^n\ot\bbC^m)$ is the set of separable states in
$\rD(\bbC^n\ot\bbC^m)$. In what follows, we focus on the case where
$n=m=2$.

For $\rho\in\rD(\bbC^2\ot\bbC^2)$, we can view $\rho$ as a $4\times
4$ matrix acting on $\bbC^4$ and then its partial transpose w.r.t.
1st subsystem, are respectively:
\begin{eqnarray}
\rho=\Pa{\begin{array}{cccc}
      \rho_{11} & \rho_{12} & \rho_{13} & \rho_{14} \\
      \rho_{21} & \rho_{22} & \rho_{23} & \rho_{24} \\
      \rho_{31} & \rho_{32} & \rho_{33} & \rho_{34} \\
      \rho_{41} & \rho_{42} & \rho_{43} & \rho_{44}
    \end{array}
}\text{ and }\rho^\Gamma=\Pa{\begin{array}{cccc}
      \rho_{11} & \rho_{21} & \rho_{13} & \rho_{23} \\
      \rho_{12} & \rho_{22} & \rho_{14} & \rho_{24} \\
      \rho_{31} & \rho_{41} & \rho_{33} & \rho_{43} \\
      \rho_{32} & \rho_{42} & \rho_{34} & \rho_{44}
    \end{array}
}.
\end{eqnarray}
Let $\bsE_n(i,j)$ be the elementary matrix, obtained by changing the
$i$-th and $j$-th rows/columns of identity matrix $\I_n$. Now
\begin{eqnarray}
\bsE_4(2,3)\rho \bsE_4(2,3)=\Pa{\begin{array}{cccc}
      \rho_{11} & \rho_{13} & \rho_{12} & \rho_{14} \\
      \rho_{31} & \rho_{33} & \rho_{32} & \rho_{34} \\
      \rho_{21} & \rho_{23} & \rho_{22} & \rho_{24} \\
      \rho_{41} & \rho_{43} & \rho_{42} & \rho_{44}
    \end{array}
}=\Pa{\begin{array}{cc}
        \bsA & \bsB \\
        \bsC & \bsD
      \end{array}
}
\end{eqnarray}
and
\begin{eqnarray}
\bsE_4(2,3)\rho^\Gamma \bsE_4(2,3)=\Pa{\begin{array}{cccc}
      \rho_{11} & \rho_{13} & \rho_{21} & \rho_{23} \\
      \rho_{31} & \rho_{33} & \rho_{41} & \rho_{43} \\
      \rho_{12} & \rho_{14} & \rho_{22} & \rho_{24} \\
      \rho_{32} & \rho_{34} & \rho_{42} & \rho_{44}
    \end{array}
}=\Pa{\begin{array}{cc}
        \bsA & \bsC \\
        \bsB & \bsD
      \end{array}
}.
\end{eqnarray}
Denote
\begin{eqnarray}
\rD_{\mathrm{ss}}(\bbC^2\ot\bbC^2):=\Set{\rD(\bbC^2\ot\bbC^2):\lambda_{\max}(\rho)\leqslant\frac12}.
\end{eqnarray}
\begin{prop}[\cite{Huong2024}]\label{prop:HuongKhoi}
Let $\rho\in \rD^0(\bbC^2\ot\bbC^2)$. It holds that $\rho$ is
separable if and only if
$\rho\in\rD_{\mathrm{ss}}(\bbC^2\ot\bbC^2)$. In other words,
\begin{eqnarray}
\rD^0_\sep(\bbC^2\ot\bbC^2)=
\rD^0(\bbC^2\ot\bbC^2)\cap\rD_{\mathrm{ss}}(\bbC^2\ot\bbC^2).
\end{eqnarray}
\end{prop}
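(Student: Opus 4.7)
The plan is to invoke the Peres-Horodecki criterion, which for two qubits turns the separability question into the positivity of $\rho^{\Gamma}$, and then to exhibit a single explicit $4\times 4$ unitary that conjugates $\rho^{\Gamma}$ into $\tfrac{1}{2}\I_{4}-\rho$. Once such an intertwiner is in hand, the spectrum of $\rho^{\Gamma}$ is the reflection of the spectrum of $\rho$ across $\tfrac{1}{4}$, and $\rho^{\Gamma}\geqslant 0$ becomes manifestly equivalent to $\lambda_{\max}(\rho)\leqslant\tfrac{1}{2}$, which is the defining condition of $\rD_{\mathrm{ss}}(\bbC^{2}\otimes\bbC^{2})$.

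I would work in the reordered basis already set up in the excerpt, in which
$$\bsE_{4}(2,3)\,\rho\,\bsE_{4}(2,3)=\begin{pmatrix}\bsA & \bsB\\ \bsC & \bsD\end{pmatrix},\qquad \bsE_{4}(2,3)\,\rho^{\Gamma}\,\bsE_{4}(2,3)=\begin{pmatrix}\bsA & \bsC\\ \bsB & \bsD\end{pmatrix},$$
with $\bsC=\bsB^{\dagger}$ by Hermiticity of $\rho$. The reordering places the second qubit in the outer slot, so the partial trace over the second factor equals $\bsA+\bsD$; hence the hypothesis $\rho\in\rD^{0}(\bbC^{2}\otimes\bbC^{2})$ reduces to the single algebraic identity $\bsA+\bsD=\tfrac{1}{2}\I_{2}$.

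The crux is the $4\times 4$ unitary $U:=\begin{pmatrix}0 & -\I_{2}\\ \I_{2} & 0\end{pmatrix}$, for which one block multiplication gives
$$U\begin{pmatrix}\bsA & \bsB^{\dagger}\\ \bsB & \bsD\end{pmatrix}U^{\dagger}=\begin{pmatrix}\bsD & -\bsB\\ -\bsB^{\dagger} & \bsA\end{pmatrix}=\begin{pmatrix}\tfrac{1}{2}\I_{2}-\bsA & -\bsB\\ -\bsB^{\dagger} & \tfrac{1}{2}\I_{2}-\bsD\end{pmatrix}=\tfrac{1}{2}\I_{4}-\begin{pmatrix}\bsA & \bsB\\ \bsB^{\dagger} & \bsD\end{pmatrix},$$
where the middle equality is exactly $\bsA+\bsD=\tfrac{1}{2}\I_{2}$. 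Undoing the (unitary) permutation $\bsE_{4}(2,3)$ shows that $\rho^{\Gamma}$ is unitarily similar to $\tfrac{1}{2}\I_{4}-\rho$ in the original basis.

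It follows that $\spec(\rho^{\Gamma})=\{\tfrac{1}{2}-\lambda:\lambda\in\spec(\rho)\}$, so $\rho^{\Gamma}\geqslant 0$ if and only if $\lambda_{\max}(\rho)\leqslant\tfrac{1}{2}$. Combined with the Peres-Horodecki theorem for $2\otimes 2$ systems (separability $\Longleftrightarrow$ $\rho^{\Gamma}\geqslant 0$), this yields the claimed equivalence $\rD^{0}_{\sep}(\bbC^{2}\otimes\bbC^{2})=\rD^{0}(\bbC^{2}\otimes\bbC^{2})\cap\rD_{\mathrm{ss}}(\bbC^{2}\otimes\bbC^{2})$. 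The only non-routine step is locating the unitary $U$; once it is produced, the argument reduces to the one-line block identity above, so no substantive obstacle remains.
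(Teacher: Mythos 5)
Your proof is correct, and it reaches the crucial spectral fact by a genuinely different mechanism than the paper. Both arguments reduce separability to positivity of $\rho^{\Gamma}$ via the Peres--Horodecki criterion, and both rest on showing that $\spec(\rho^{\Gamma})=\{\tfrac12-\lambda:\lambda\in\spec(\rho)\}$ whenever $\Ptr{2}{\rho}=\tfrac12\I_2$. The paper establishes this through characteristic polynomials: it factors the block determinants by Schur complements and proves the identity $f_{\rho^{\Gamma}}(\tfrac12-\lambda)=f_{\rho}(\lambda)$. You instead exhibit an explicit intertwiner: with $\bsC=\bsB^{\dagger}$ and $\bsA+\bsD=\tfrac12\I_2$, your block computation $U\bigl(\bsE_4(2,3)\rho^{\Gamma}\bsE_4(2,3)\bigr)U^{\dagger}=\tfrac12\I_4-\bsE_4(2,3)\rho\,\bsE_4(2,3)$ is correct, and undoing the permutation shows $\rho^{\Gamma}=W^{\dagger}\bigl(\tfrac12\I_4-\rho\bigr)W$ with $W=\bsE_4(2,3)\,U\,\bsE_4(2,3)$ unitary (in the original ordering $W$ is just $-\mathrm{i}\sigma_y$ acting on the second qubit, connecting your identity to the familiar spin-flip symmetry). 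Your route buys two things. First, it is a single exact matrix identity with no genericity caveat, whereas the paper's Schur-complement step tacitly inverts $\bsA-\lambda\I_2$ (resp.\ $\bsD-\lambda\I_2$), which fails at finitely many $\lambda$ and strictly requires a continuity/density remark for the polynomial identity to hold everywhere. Second, it gives the stronger conclusion that $\rho^{\Gamma}$ and $\tfrac12\I_4-\rho$ are unitarily equivalent, not merely isospectral, so the reflection of the spectrum about $\tfrac14$, with multiplicities, is immediate. What the paper's route buys in exchange is that it needs no inspired guess: the determinant manipulation is mechanical, while your argument hinges on producing the unitary $U$.
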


\begin{proof}
Note that the characteristic polynomial of $\rho$ is given by
$f_\rho(\lambda)=\det(\rho-\lambda\I_4)$. Then
\begin{eqnarray*}
&&f_\rho(\lambda) = \det(\bsE_4(2,3)\rho
\bsE_4(2,3)-\lambda\I_4)=\det\Pa{\begin{array}{cc}
        \bsA-\lambda\I_2 & \bsB \\
        \bsC & \bsD-\lambda\I_2
      \end{array}
}\\
&&=\det(\bsA-\lambda\I_2)\det\Pa{(\bsD-\lambda\I_2)-
\bsC(\bsA-\lambda\I_2)^{-1}\bsB}.
\end{eqnarray*}
Similarly, the characteristic polynomial of $\rho^\Gamma$ is given
by
\begin{eqnarray*}
&&f_{\rho^\Gamma}(\lambda) =
\det(\bsE_4(2,3)\rho^\Gamma\bsE_4(2,3)-\lambda\I_4)=\det\Pa{\begin{array}{cc}
        \bsA-\lambda\I_2 & \bsC \\
        \bsB & \bsD-\lambda\I_2
      \end{array}
}\\
&&=\det(\bsA-\lambda\I_2)\det\Pa{(\bsD-\lambda\I_2) -
\bsB(\bsA-\lambda\I_2)^{-1}\bsC}\\
&&=\det(\bsD-\lambda\I_2)\det\Pa{(\bsA-\lambda\I_2) -
\bsC(\bsD-\lambda\I_2)^{-1}\bsB}.
\end{eqnarray*}
Due to the fact that $\rho\in \rD^0(\bbC^2\ot\bbC^2)$, we see that
$\Ptr{2}{\rho}=\frac12\I_2$, which is equivalent to
$\bsA+\bsD=\frac12\I_2$. Now we have shown that
\begin{eqnarray}\label{eq:ptpoly}
f_{\rho^\Gamma}(\lambda) =
\det(\bsD-\lambda\I_2)\det\Pa{(\bsA-\lambda\I_2) -
\bsC(\bsD-\lambda\I_2)^{-1}\bsB}.
\end{eqnarray}
Thus replacing the 2-tuple $(\lambda,\bsD)$ by
$(\tfrac12-\lambda,\tfrac12\I_2-\bsA)$ in the first factor of
Eq.~\eqref{eq:ptpoly}, we get that the first factor
$\det(\bsD-\lambda\I_2)$ is transformed into
\begin{eqnarray*}
\det\Pa{(\tfrac12\I_2-\bsA)-(\tfrac12-\lambda)\I_2} =
\det(\lambda\I_2-\bsA) = \det(\bsA-\lambda\I_2).
\end{eqnarray*}
Thus replacing $\lambda$ by $\tfrac12-\lambda$ and using
$\bsA+\bsD=\tfrac12\I_2$ in the second factor of
Eq.~\eqref{eq:ptpoly}, we get that the second factor
$\det\Pa{(\bsA-\lambda\I_2) - \bsC(\bsD-\lambda\I_2)^{-1}\bsB}$ is
transformed into
\begin{eqnarray*}
\det\Pa{((\tfrac12\I_2-\bsD)-(\tfrac12-\lambda)\I_2) -
\bsC((\tfrac12\I_2-\bsA)-(\tfrac12-\lambda)\I_2)^{-1}\bsB}.
\end{eqnarray*}
That is, the last expression is reduced to the following:
\begin{eqnarray*}
\det\Pa{(-\bsD+\lambda\I_2) -
\bsC(-\bsA+\lambda\I_2)^{-1}\bsB}=\det\Pa{(\bsD-\lambda\I_2) -
\bsC(\bsA-\lambda\I_2)^{-1}\bsB}.
\end{eqnarray*}
Thus
$$
f_{\rho^\Gamma}\Pa{\tfrac12-\lambda}
=\det(\bsA-\lambda\I_2)\det\Pa{(\bsD-\lambda\I_2) -
\bsC(\bsA-\lambda\I_2)^{-1}\bsB}=f_\rho(\lambda).
$$
In summary, we get that
$f_{\rho^\Gamma}\Pa{\tfrac12-\lambda}=f_\rho(\lambda)$. With these
preparations, we can now present the proof below.
\begin{itemize}
\item[($\Longrightarrow$)] Now suppose that $\rho\in\rD^0_\sep(\bbC^2\ot\bbC^2)$. That is, $\rho\in\rD^0(\bbC^2\ot\bbC^2)$ is
separable. Choose any eigenvalue $t\geqslant0$ of the separable
state $\rho$, i.e., $f_\rho(t)=0$, then
$f_{\rho^\Gamma}(\tfrac12-t)=f_\rho(t)=0$, namely, $\tfrac12-t$
satisfying the eigen-equation $f_{\rho^\Gamma}(\lambda)=0$, thus
$\tfrac12-t\geqslant0$ is an eigenvalue of
$\rho^\Gamma\geqslant\zero$ by Peres-Horodecki criterion. Therefore
$t\leqslant\tfrac12$, or $\rho\in
\rD_{\mathrm{ss}}(\bbC^2\ot\bbC^2)$.
\item[($\Longleftarrow$)] If
$\rho\in\rD^0(\bbC^2\ot\bbC^2)\cap\rD_{\mathrm{ss}}(\bbC^2\ot\bbC^2)$,
i.e., any generic eigenvalue $t$ of $\rho$ satisfying $0\leqslant
t\leqslant\frac12$, and moreover
$f_{\rho^\Gamma}(\tfrac12-t)=f_\rho(t)=0$, which means that
$\tfrac12-t\geqslant0$ is an eigenvalue of $\rho^\Gamma$. All
eigenvalues of $\rho^\Gamma$ are nonnegative, that is,
$\rho^\Gamma\geqslant \zero$. By Peres-Horodecki criterion, $\rho$
is separable.
\end{itemize}
This completes the proof.
\end{proof}

Denote
\begin{eqnarray}
f(a) :=
\vol_{\rH\rS}\Pa{\rD^a(\bbC^2\ot\bbC^2)\cap\rD_{\mathrm{ss}}(\bbC^2\ot\bbC^2)},
\end{eqnarray}
where $a\in[0,1)$. Since the partial trace is a linear map, the
function $f(a)$ is continuous. Therefore, we can find
$f(0)=\lim_{a\to0^+}f(a)$. Clearly
$f(0)=\vol_{\rH\rS}\Pa{\rD^0_\sep(\bbC^2\ot\bbC^2)}$ by
Proposition~\ref{prop:HuongKhoi}. So we firstly calculate $f(a)$ in
a small neighborhood of $0$. Denote
\begin{eqnarray}\label{eq:positiveweylchamber}
\cC_4&=&\Set{(x_1,x_2,x_3,x_4)\in\bbR^4:x_1>x_2>x_3>x_4},\\
\widehat C&=&\Br{-\tfrac14,\tfrac14}^4\cap\cC_4.
\end{eqnarray}
In fact, we have the following result:
\begin{prop}[\cite{Huong2024}]\label{prop:f(a)}
It holds that
\begin{eqnarray}
f(a) =\frac{\pi^5}{319334400}(1-a)^9\Pa{33a^3+162a^2+72a+8},\quad
a\in\Pa{0,\frac13}.
\end{eqnarray}
\end{prop}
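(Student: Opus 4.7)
The strategy is to reduce $f(a)$ to an explicit spectral integral via the Duistermaat-Heckman framework developed in Sections~\ref{sect:4}--\ref{sect:5}, and then evaluate the resulting polynomial integral. Combining Theorem~\ref{th:HSVolume} (the orbit partition of $\rD(\bbC^4)$), the HS--symplectic conversion of Proposition~\ref{prop:relationHSSymp}, the Bloch-ball coarea identity of Proposition~\ref{prop:volHS}, and the non-Abelian Duistermaat-Heckman pushforward whose density with respect to Lebesgue is $(2\pi)^{6}\cI(a|\hat\Lambda)$ with $\cI=\cI_1+\cI_2+\cI_3$ as in Eq.~\eqref{eq:RNderivative} (built from the jump-formula density of Proposition~\ref{prop:4density}), one arrives at
\[
f(a) \;=\; \frac{256\,\pi^{5}}{a^{2}}\int_{\widehat{C}} V_4(\hat\Lambda)\,\cI(a|\hat\Lambda)\, \delta\!\Big(\sum_{j=1}^{4}\hat\lambda_j\Big)\, [d\hat\Lambda],
\]
where $\widehat{C}$ is the region from Eq.~\eqref{eq:positiveweylchamber} which precisely encodes the constraint $\lambda_{\max}(\rho)\leqslant\tfrac12$. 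Although each $\cI_k$ is only $O(a)$ as $a\to 0^{+}$, a direct Taylor expansion shows that the leading $a^{1}$-coefficient of the spectrum integral vanishes identically (by the Vandermonde-type identity $(w-v)u+(u-w)v+(v-u)w\equiv 0$ with $u=c_1^{2},v=c_2^{2},w=c_3^{2}$), so $\int V_4\,\cI\,\delta\,[d\hat\Lambda]=O(a^{2})$; the $a^{-2}$ prefactor cancels and $f(a)$ extends to a regular polynomial on $[0,1/3)$.

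The next step is a linear change of variables from $(\hat\lambda_1,\ldots,\hat\lambda_4)$ (with the trace-zero constraint) to $(c_1,c_2,c_3,s)$ where $s=\pm c_1/2$ encodes the sign of $\hat\lambda_1+\hat\lambda_4$. A short computation yields the factorization
\[
V_4(\hat\Lambda) \;=\; \tfrac{1}{64}(c_3^{2}-c_2^{2})(c_2^{2}-c_1^{2})(c_3^{2}-c_1^{2}),
\]
so that each product $V_4(\hat\Lambda)\,\cI_k(a|\hat\Lambda)$ becomes a manifestly polynomial expression in $c_1,c_2,c_3$. Under this change the region $\widehat{C}$ maps to the simplex $\{0<c_1<c_2<c_3<1\}$ together with a sign condition on $s$, and the support conditions $a\leqslant c_k$ from the characteristic functions in $\cI_k$ cut further chamber walls. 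For $a\in(0,1/3)$ all three subregions $\{c_k\geqslant a\}\cap\widehat{C}$ are nonempty and interact combinatorially in the simplest possible way, so the integral splits cleanly as $f=f_1+f_2+f_3$, with each $f_k$ an elementary iterated polynomial integral in $(c_1,c_2,c_3)$ over a cone truncated at $a$.

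The substantive work is the evaluation of $f_1+f_2+f_3$ and the verification that the sum collapses to $\frac{\pi^5}{319334400}(1-a)^9(33a^3+162a^2+72a+8)$. Each individual $f_k$ is a polynomial in $a$, but none alone exhibits the factor $(1-a)^{9}$; that factorization is a nontrivial cancellation that emerges only after the three pieces are combined. Tracking the chamber supports and the sign-$s$ fiber requires careful bookkeeping, and the hypothesis $a<1/3$ is precisely the range in which the combinatorial structure of the active chambers remains uniform --- beyond $a=1/3$, additional boundary pieces of $\widehat{C}$ enter each $f_k$ and change the polynomial form of $f(a)$. As a consistency check, evaluating the resulting polynomial at $a=0$ gives $f(0)=\pi^{5}/39916800$, which together with Eq.~\eqref{eq:zerovol} and Proposition~\ref{prop:HuongKhoi} reproduces the separability probability $P^{(2\times 2)}_{\sep}(0)=8/33$.
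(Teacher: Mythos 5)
Your reduction of $f(a)$ to the spectral integral is exactly the paper's: the same chain (Bloch-ball coarea, orbit partition, $\vol_{\rH\rS}(\cO_{\hat\lambda}\cap\cE_t)=V_4(\hat\lambda)\,\vol_{\mathrm{symp}}(\cO_{\hat\lambda}\cap\cE_t)$, pushforward density $(2\pi)^6\cI(a|\hat\lambda)$, differentiation in $t$) appears there as Eq.~\eqref{eq:expressionf(a)}, and since $\dif m(\hat\lambda)=2\,\delta\Pa{\sum_j\hat\lambda_j}[\dif\hat\lambda]$ your prefactor $256\pi^5/a^2$ agrees with the paper's $\frac2\pi a^{-2}(2\pi)^6\cdot2$. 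Two of your ingredients genuinely improve on the paper's exposition: the identity $(c_3^2-c_2^2)c_1^2+(c_1^2-c_3^2)c_2^2+(c_2^2-c_1^2)c_3^2\equiv0$, which explains why the integrand is $O(a^2)$ and the $a^{-2}$ prefactor is harmless (the paper never addresses this point), and the factorization $V_4(\hat\lambda)=\frac1{64}(c_3^2-c_2^2)(c_2^2-c_1^2)(c_3^2-c_1^2)$, which is correct (with $\sum_j\hat\lambda_j=0$, each difference $\hat\lambda_i-\hat\lambda_j$ is a half-sum $\frac12(c_k\pm c_l)$) and is cleaner than the paper's variables $(t_1,t_2,t_3)$, in which the Vandermonde becomes $t_1t_2t_3(2t_1+t_2)(3t_2+2t_3)(6t_1+3t_2+2t_3)/432$.

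However, two gaps remain, one fixable and one essential. (i) Your transformed domain is wrong: under the trace-zero constraint, $\widehat C$ maps onto $\Set{0<c_1<c_2<c_3,\ c_1+c_2+c_3\leqslant1}$ (doubled by the sign of $\hat\lambda_1+\hat\lambda_4$), not onto $\Set{0<c_1<c_2<c_3<1}$. Indeed $\hat\lambda_1=\frac14(c_2+c_3\pm c_1)$ and $\hat\lambda_4=-\frac14(c_2+c_3\mp c_1)$, so the binding constraints $\hat\lambda_1\leqslant\frac14$ and $\hat\lambda_4\geqslant-\frac14$ combine, on either sign branch, to $c_1+c_2+c_3\leqslant1$; this matches the paper's region $R$, whose two inequalities $t_1+t_2+t_3\leqslant1$ and $3t_1+t_2+\frac{t_3}3\leqslant1$ are precisely $c_1+c_2+c_3\leqslant1$ on the two branches. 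Since the integrand is nonnegative and your stated region is strictly larger, integrating over it yields a different (wrong) polynomial. (ii) More fundamentally, the content of this proposition \emph{is} the explicit polynomial $(1-a)^9(33a^3+162a^2+72a+8)$, and your proposal never computes it: you assert that each $f_k$ is polynomial and that the sum ``collapses'' to the stated form, but that collapse is exactly the step that must be carried out. The paper does this concretely: it specifies the chambers $R_1(x)=R_{1,a}(x)\cup R_{1,b}(x)$, $R_2(x)$, $R_3(x)$, evaluates $M_1,M_2,M_3$ as explicit polynomials in $x$, and exhibits $M_1+M_2+M_3=\frac{(1-x)^9x^2(33x^3+162x^2+72x+8)}{40874803200}$. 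Your $a=0$ consistency check cannot substitute for this, since it tests only a single value and is moreover the value the final theorem is designed to reproduce. To complete your route you must set up and evaluate the iterated integrals over the corrected domain, summing both sign branches with the Jacobian $[\dif\hat\lambda]=\frac1{16}\,\dif c_1\dif c_2\dif c_3$ per branch, and with the truncations $\set{c_k\geqslant a}$ from the indicator functions in $\cI_k$.
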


\begin{proof}
For all $t\in(0,1)$, denote
$B_t:=\set{\bsa\in\bbR^3:\abs{\bsa}\leqslant t}$. Denote
$$
\rD^{B_t}(\bbC^2\ot\bbC^2):=\bigcup_{\bsa\in
B_t}\rD^{\bsa}(\bbC^2\ot\bbC^2)$$

Note that
\begin{eqnarray*}
&&\vol_{\rH\rS}\Pa{\rD^{B_t}(\bbC^2\ot\bbC^2)\bigcap
\rD_{\mathrm{ss}}(\bbC^2\ot\bbC^2)}
=\vol_{\rH\rS}\Pa{\bigcup_{\bsa\in
B_t}\rD^{\bsa}(\bbC^2\ot\bbC^2)\bigcap
\rD_{\mathrm{ss}}(\bbC^2\ot\bbC^2)}\\
&&=\int^t_0\dif a\vol_{\rH\rS}\Pa{\bigcup_{\bsa\in
S_a}\rD^{\bsa}(\bbC^2\ot\bbC^2)\bigcap
\rD_{\mathrm{ss}}(\bbC^2\ot\bbC^2)}\\
&&=\int^t_0\dif a
\Br{\frac18\int_{S_a}[\dif\bsa]\vol_{\rH\rS}\Pa{\rD^{\bsa}(\bbC^2\ot\bbC^2)\bigcap
\rD_{\mathrm{ss}}(\bbC^2\ot\bbC^2)}}\\
&&=\int^t_0\dif a
\Br{\frac18\int[\dif\bsa]\delta(a-\abs{\bsa})\vol_{\rH\rS}\Pa{\rD^{\bsa}(\bbC^2\ot\bbC^2)\bigcap
\rD_{\mathrm{ss}}(\bbC^2\ot\bbC^2)}}\\
&&=\frac18\int^t_0\dif a
\vol_{\rH\rS}\Pa{\rD^a(\bbC^2\ot\bbC^2)\bigcap
\rD_{\mathrm{ss}}(\bbC^2\ot\bbC^2)}\int\delta(a-\abs{\bsa})[\dif\bsa].
\end{eqnarray*}
Since
$f(a)=\vol_{\rH\rS}\Pa{\rD^a(\bbC^2\ot\bbC^2)\cap\rD_{\mathrm{ss}}(\bbC^2\ot\bbC^2)}$,
it follows that
\begin{eqnarray}
\frac\pi2\int^t_0a^2f(a)\dif
a=\vol_{\rH\rS}\Pa{\rD^{B_t}(\bbC^2\ot\bbC^2)\bigcap
\rD_{\mathrm{ss}}(\bbC^2\ot\bbC^2)}.
\end{eqnarray}
Now we partition the set $\rD_{\mathrm{ss}}(\bbC^2\ot\bbC^2)$ by
adjoint $\SU(4)$-orbits as
\begin{eqnarray}
\rD_{\mathrm{ss}}(\bbC^2\ot\bbC^2) = \bigcup_{\lambda\in
[0,\frac12]^4\cap\cC_4}\cU_{\lambda},
\end{eqnarray}
where $\cU_{\lambda}:=\Set{\bsU\Lambda\bsU^\dagger:\bsU\in\SU(4)}$
for $\Lambda=\diag(\lambda_1,\ldots,\lambda_4)$, obtained from
$\lambda=(\lambda_1,\ldots,\lambda_4)$. It is easily seen that the
adjoint orbit $\cU_{\lambda}$ can be identified with the co-adjoint
orbit $\cO_{\hat\lambda}$, where
$\hat\lambda=\lambda-(\frac14,\ldots,\frac14)$. That is,
$$
\cU_{\lambda}\cong \cO_{\hat\lambda},
$$
where $\lambda\in [0,\frac12]^4\cap\cC_4$ iff $\hat\lambda\in
[-\frac14,\frac14]^4\cap\cC_4$. Under the above identification,
$\bsX=\rho-\frac14\I_4$, we get that
\begin{eqnarray*}
&&\rD^{B_t}(\bbC^2\ot\bbC^2)=
\Set{\rho\in\rD(\bbC^2\ot\bbC^2):\Ptr{2}{\rho}=\frac12(\I_2+\bsa\cdot\boldsymbol{\sigma}),\abs{\bsa}\leqslant
t}\\
&&\cong\Set{\bsX\in\Herm(\bbC^4):
\Ptr{2}{\bsX}=\frac{\bsa}2\cdot\boldsymbol{\sigma},\abs{\bsa}\leqslant
t}\\
&&=\Set{\bsX\in\Herm(\bbC^4):\Tr{\bsX}=0,
\abs{\lambda_\pm(\Ptr{2}{\bsX})}\leqslant \frac t2} := \cE_t.
\end{eqnarray*}
That is, the set $\rD^{B_t}(\bbC^2\ot\bbC^2)$ is translated into the
set $\cE_t$. Thus the HS volume is invariant under the translation
and
\begin{eqnarray*}
&&\vol_{\rH\rS}\Pa{\rD^{B_t}(\bbC^2\ot\bbC^2)\bigcap
\rD_{\mathrm{ss}}(\bbC^2\ot\bbC^2)} =
\vol_{\rH\rS}\Pa{\bigcup_{\lambda\in
[0,\frac12]^4\cap\cC_4}\rD^{B_t}(\bbC^2\ot\bbC^2)\bigcap \cU_\lambda}\\
&&=\vol_{\rH\rS}\Pa{\bigcup_{\hat\lambda\in \widehat
C}\cE_t\bigcap\cO_{\hat\lambda}} = \int_{\hat\lambda\in \widehat
C}\vol_{\rH\rS}(\cO_{\hat\lambda}\bigcap\cE_t)\dif m(\hat\lambda),
\end{eqnarray*}
where we used the facts in Eq.~\eqref{eq:mlambda} and
Eq.~\eqref{eq:volHSagain}. Up to now, we have already obtained that
\begin{eqnarray}
&&\frac\pi2\int^t_0a^2f(a)\dif a=\int_{\hat\lambda\in \widehat
C}\vol_{\rH\rS}(\cO_{\hat\lambda}\bigcap\cE_t)\dif m(\hat\lambda)\\
&&=\int_{\hat\lambda\in \widehat
C}V_4(\hat\lambda)\vol_{\mathrm{symp}}(\cO_{\hat\lambda}\bigcap\cE_t)\dif
m(\hat\lambda),
\end{eqnarray}
where we used the relationship \eqref{eq:HSvsSymp} between HS volume
of adjoint orbit and symplectic volume of co-adjoint orbit in the
last equality. Consider the map $F$ in
Eq.~\eqref{eq:pushforwardbyF},
\begin{eqnarray*}
F:(\cO_{\hat\lambda},\mu_{\cO_{\hat\lambda}})\to (\bbR_{>0},\dif a),
\end{eqnarray*}
where $F(\bsX) =
\lambda_+(\Ptr{2}{\bsX})-\lambda_-(\Ptr{2}{\bsX})=2\lambda_+(\Ptr{2}{\bsX})$
for any $\bsX\in\cO_{\hat\lambda}$, i.e., twice the positive
eigenvalue of $\Ptr{2}{\bsX}$. By the definition of $F$, we see that
\begin{eqnarray*}
F^{-1}([0,t]) = \cO_{\hat\lambda}\bigcap\cE_t,
\end{eqnarray*}
implying that
\begin{eqnarray}
&&\vol_{\mathrm{symp}}\Pa{\cE_t\cap\cO_{\hat\lambda}} =
\vol_{\mathrm{symp}}\Pa{F^{-1}([0,t])} =
\mu_{\cO_{\hat\lambda}}(F^{-1}([0,t]))\\
&&=(F_*\mu_{\cO_{\hat\lambda}})([0,t]) = \int^t_0\frac{\dif
F_*\mu_{\cO_{\hat\lambda}}}{\dif a}\dif a,
\end{eqnarray}
where $\frac{\dif F_*\mu_{\cO_{\hat\lambda}}}{\dif
a}:=(2\pi)^6\cI(a|\hat\lambda)$ is the Radon-Nikodym derivative of
push-forward measure $F_*\mu_{\cO_{\hat\lambda}}$ with respect to
Lebesgue measure on the ray $\bbR_{>0}$. Moreover, the analytical
expression of such Radon-Nikodym derivative $\frac{\dif
F_*\mu_{\cO_{\hat\lambda}}}{\dif a}=(2\pi)^6\cI(a|\hat\lambda)$ is
given by Eq.~\eqref{eq:RNderivative}. Once again, we have obtained
that
\begin{eqnarray*}
&&\frac\pi2\int^t_0a^2f(a)\dif a=\int_{\hat\lambda\in \widehat
C}V_4(\hat\lambda)\vol_{\mathrm{symp}}(\cE_t\bigcap\cO_{\hat\lambda})\dif
m(\hat\lambda)\\
&&=\int_{\hat\lambda\in \widehat
C}V_4(\hat\lambda)\Pa{\int^t_0\frac{\dif
F_*\mu_{\cO_{\hat\lambda}}}{\dif a}\dif a}\dif
m(\hat\lambda)=\int_{\hat\lambda\in \widehat
C}V_4(\hat\lambda)\Pa{\int^t_0(2\pi)^6\cI(a|\hat\lambda)\dif a}\dif
m(\hat\lambda)\\
&&=(2\pi)^6\int^t_0\Pa{\int_{\hat\lambda\in \widehat
C}V_4(\hat\lambda)\cI(a|\hat\lambda)\dif m(\hat\lambda)}\dif a,
\end{eqnarray*}
implying that
\begin{eqnarray}
\frac\pi2\int^t_0a^2f(a)\dif
a=(2\pi)^6\int^t_0\Pa{\int_{\hat\lambda\in \widehat
C}V_4(\hat\lambda)\cI(a|\hat\lambda)\dif m(\hat\lambda)}\dif a.
\end{eqnarray}
By taking the derivative with respect to $t$, and then replacing $t$
by $a$, we get that
\begin{eqnarray}
\frac\pi2a^2f(a)&=&(2\pi)^6\int_{\hat\lambda\in \widehat
C}V_4(\hat\lambda)\cI(a|\hat\lambda)\dif
m(\hat\lambda)\Longrightarrow\\
f(a)&=&\frac2\pi a^{-2}(2\pi)^6\int_{\hat\lambda\in \widehat
C}V_4(\hat\lambda)\cI(a|\hat\lambda)\dif
m(\hat\lambda),\label{eq:expressionf(a)}
\end{eqnarray}
where $a\in(0,\frac13)$. Now we compute the following integral
\begin{eqnarray}\label{eq:lastint}
\int_{\hat\lambda\in \widehat{C}}
V_4(\hat\lambda)\cI(a|\hat\lambda)\dif m(\hat\lambda) =
2\int_{[-\frac14,\frac14]^4\cap\cC_4} \prod_{1\leqslant
i<j\leqslant4}(\hat\lambda_i-\hat\lambda_j)\cI(a|\hat\lambda)\delta\Pa{\sum^4_{k=1}\hat\lambda_k}\prod^4_{l=1}\dif\hat\lambda_l
\end{eqnarray}
Let the change of variables be given below:
\begin{eqnarray}
\begin{cases}
\hat\lambda_1=t_1+\frac{t_2}2+\frac{t_3}3+\frac{t_4}4-\frac14,\\
\hat\lambda_2=\frac{t_2}2+\frac{t_3}3+\frac{t_4}4-\frac14,\\
\hat\lambda_3=\frac{t_3}3+\frac{t_4}4-\frac14,\\
\hat\lambda_4=\frac{t_4}4-\frac14.
\end{cases}
\end{eqnarray}
Its Jacobian is given by
$$
J = \det\Pa{\frac{\partial
(\hat\lambda_1,\hat\lambda_2,\hat\lambda_3)}{\partial(t_1,t_2,t_3)}}=\Abs{\begin{array}{ccc}
                                                                                  \frac34 & \frac14 & \frac1{12} \\
                                                                                  -\frac14 & \frac14 & \frac1{12} \\
                                                                                  -\frac14 & -\frac14 & \frac1{12}
                                                                                \end{array}
}=\frac1{4!}
$$
We note that
\begin{enumerate}[(i)]
\item $t_k\geqslant0(k=1,\ldots,4)$ due to the fact that
$\hat\lambda_1>\hat\lambda_2>\hat\lambda_3>\hat\lambda_4$.
\item $t_1+t_2+t_3\leqslant1$ because $t_4\geqslant0$ and $\sum^4_{k=1}t_k=1$ due to the fact that
$\sum^4_{k=1}\hat\lambda_4=0$.
\item $3t_1+t_2+\frac {t_3}3\leqslant1$ because
$t_1+\frac{t_2}2+\frac{t_3}3+\frac{t_4}4-\frac14\leqslant\frac14$
and $\sum^4_{k=1}t_k=1$.
\end{enumerate}
In summary, $\widehat C=[-\frac14,\frac14]^4\cap\cC_4$ is
transformed into the following form:
\begin{eqnarray}
R=\Set{(t_1,t_2,t_3)\in\bbR^3_{\geqslant0}:
t_1+t_2+t_3\leqslant1,3t_1+t_2+\frac {t_3}3\leqslant1}.
\end{eqnarray}
According to such change of variables, we get that
\begin{eqnarray}
\begin{cases}
c_3=2(\hat\lambda_1+\hat\lambda_2)=t_1+t_2+\frac{t_3}{3}\\
c_2=2(\hat\lambda_1+\hat\lambda_3)=t_1+\frac{t_3}{3}\\
c_1=2\abs{\hat\lambda_1+\hat\lambda_4}=\Abs{t_1-\frac{t_3}3}\\
V_4(\hat{\boldsymbol{\lambda}}) =
\frac{t_1t_2t_3(2t_1+t_2)(3t_2+2t_3)(6t_1+3t_2+2t_3)}{432}=:V(t_1,t_2,t_3)
\end{cases}
\end{eqnarray}
Therefore we get that via change of variables
\begin{eqnarray}
\tilde\cI_1(x|\bst) &=&
\frac{t_2\Pa{t_1+\tfrac{t_2}2+\tfrac{t_3}3}}{32}\Pa{\Abs{t_1-\tfrac{t_3}3}-x}^2x,\quad
\Abs{t_1-\tfrac{t_3}3}\geqslant x \\
\tilde\cI_2(x|\bst) &=&
\frac{(t_1+\tfrac{t_2}2)(\tfrac{t_2}2+\tfrac{t_3}3)}{16}\Pa{t_1+\tfrac{t_3}3
- x}^2x,\quad  t_1+\tfrac{t_3}3\geqslant x\\
\tilde\cI_3(x|\bst) &=& \frac{t_1t_3}{48}\Pa{t_1+t_2+\tfrac{t_3}3 -
x}^2x,\quad t_1+t_2+\tfrac{t_3}3\geqslant x
\end{eqnarray}
With these preparations, we can transform the integral in
Eq.~\eqref{eq:lastint} into the following form:
\begin{eqnarray*}
\int_{\hat\lambda\in \widehat{C}}
V_4(\hat\lambda)\cI(x|\hat\lambda)\dif m(\hat\lambda) &=&
\frac2{4!}\int_R
V(t_1,t_2,t_3)(\tilde\cI_1(x|\bst)+\tilde\cI_2(x|\bst)+\tilde\cI_3(x|\bst))\dif
t_1\dif t_2\dif t_3\\
&=&\frac2{4!}\int_R V(t_1,t_2,t_3)\tilde\cI_1(x|\bst)\dif t_1\dif
t_2\dif t_3 \\
&&+\frac2{4!}\int_R V(t_1,t_2,t_3)\tilde\cI_2(x|\bst)\dif t_1\dif
t_2\dif t_3\\
&&+\frac2{4!}\int_R V(t_1,t_2,t_3)\tilde\cI_3(x|\bst)\dif t_1\dif
t_2\dif t_3\\
&=& M_1+M_2+M_3.
\end{eqnarray*}
\begin{itemize}
\item Note that the following three planes
$\set{t_1+t_2+t_3=1,3t_1+t_2+\frac{t_3}3=1,t_1-\frac{t_3}3=0}$
intersects at the line segment in $\bbR^3_{\geqslant0}$:
\begin{eqnarray}
\begin{cases}
t_1=t\\
t_2=1-4t\\
t_3=3t
\end{cases}\quad \Pa{t\in[0,\tfrac14]}.
\end{eqnarray}
\item The following two planes
$\set{3t_1+t_2+\frac{t_3}3=1,t_1-\frac{t_3}3=x}$ intersects at the
line segment in $\bbR^3_{\geqslant0}$:
\begin{eqnarray}
\begin{cases}
t_1=t\\
t_2=1+x-4t\\
t_3=-3x+3t
\end{cases}\quad \Pa{t\in[x,\tfrac{1+x}4]}.
\end{eqnarray}
\item The following two planes
$\set{t_1+t_2+t_3=1,t_1-\frac{t_3}3=-x}$ intersects at the line
segment in $\bbR^3_{\geqslant0}$:
\begin{eqnarray}
\begin{cases}
t_1=t\\
t_2=1-3x-4t\\
t_3=3x+3t
\end{cases}\quad \Pa{t\in[0,\tfrac{1-3x}4]}.
\end{eqnarray}
\end{itemize}
In summary, the plane $\abs{t_1-\frac{t_3}3}=x$ intersects with $R$
iff $x\in[0,\frac13]$. For $m\geqslant2$, denote
$$
\Delta_m:=\Set{(p_1,\ldots,p_m)\in\bbR^m_{\geqslant0}:
p_1+\cdots+p_m\leqslant1}.
$$
Now for $x\in[0,\tfrac13]$, we get that
\begin{enumerate}[(1)]
\item For the domain of the first integral for a given $x$, we get
that
\begin{eqnarray*}
R_1(x)&:=&\Set{(t_1,t_2,t_3)\in\bbR^3_{\geqslant0}:t_1+t_2+t_3\leqslant1,3t_1+t_2+\frac{t_3}3\leqslant1,\Abs{t_1-\tfrac{t_3}3}\geqslant
x}\\
&=&R_{1,a}(x)\cup R_{1,b}(x),
\end{eqnarray*}
where {\small\begin{eqnarray*}
R_{1,a}(x)&:=&\Set{(t_1,t_2,t_3)\in\bbR^3_{\geqslant0}:t_1+t_2+t_3\leqslant1,3t_1+t_2+\frac{t_3}3\leqslant1,t_1-\tfrac{t_3}3\geqslant
x}\\
&=&\Set{(t_1,t_2,t_3)\in\bbR^3_{\geqslant0}:0\leqslant
t_3\leqslant\frac{3(1-3x)}4,0\leqslant
t_2\leqslant1-3x-\frac{4t_3}3, x+\frac{t_3}3\leqslant t_1\leqslant
\frac13-\frac{t_2}3-\frac{t_3}9}\\
R_{1,b}(x)&:=&\Set{(t_1,t_2,t_3)\in\bbR^3_{\geqslant0}:t_1+t_2+t_3\leqslant1,3t_1+t_2+\frac{t_3}3\leqslant1,t_1-\tfrac{t_3}3\leqslant
-x}\\
&=& \Set{(t_1,t_2,t_3)\in\bbR^3_{\geqslant0}:0\leqslant t_2\leqslant
1-3x,0\leqslant t_1\leqslant \frac{1-3x-t_2}4,3(x+t_1)\leqslant
t_3\leqslant 1-t_1-t_2}.
\end{eqnarray*}}
In such case, we can evaluate $M_1$ as follows: For
$x\in(0,\frac13)$,
\begin{eqnarray*}
M_1 &=& \frac2{4!}\int_R V(t_1,t_2,t_3)\tilde\cI_1(x|\bst)\dif
t_1\dif t_2\dif t_3= \frac2{4!}\int_{R_1(x)}
V(t_1,t_2,t_3)\tilde\cI_1(x|\bst)\dif t_1\dif t_2\dif t_3\\
&=&\frac2{4!}\int_{R_{1,a}(x)} V(t_1,t_2,t_3)\tilde\cI_1(x|\bst)\dif
t_1\dif t_2\dif t_3+\frac2{4!}\int_{R_{1,b}(x)}
V(t_1,t_2,t_3)\tilde\cI_1(x|\bst)\dif t_1\dif t_2\dif t_3\\
&=& \frac{x (3 x-1)^9(567 x^4-3564 x^3+5526 x^2-3152
x-1345)}{774741019852800}\\
&&+\frac{x (3 x-1)^9(567 x^4-3564 x^3+5526 x^2-3152
x-1345)}{774741019852800}\\
&=&\frac{x (3 x-1)^9 (567 x^4-3564 x^3+5526 x^2-3152
x-1345)}{387370509926400},
\end{eqnarray*}
thus
\begin{eqnarray*}
M_1 =\frac{x (3 x-1)^9 (567 x^4-3564 x^3+5526 x^2-3152
x-1345)}{387370509926400}.
\end{eqnarray*}
\item For the domain of the second integral for a given $x$, we get
that
\begin{eqnarray*}
R_2(x)&:=&\Set{(t_1,t_2,t_3)\in\bbR^3_{\geqslant0}:t_1+t_2+t_3\leqslant1,3t_1+t_2+\frac{t_3}3\leqslant1,t_1+\tfrac{t_3}3\geqslant
x} \\
&=& \Pa{\Delta_3\cup R_{2,ab}(x)}\backslash\Pa{R_{2,a}(x)\cup
R_{2,b}(x)},
\end{eqnarray*}
where
\begin{eqnarray*}
\begin{cases}
R_{2,a}(x)&:= \Set{(t_1,t_2,t_3)\in\bbR^3_{\geqslant0}:t_1+t_2+t_3\leqslant1,3t_1+t_2+\frac{t_3}3>1}\\
R_{2,b}(x)&:= \Set{(t_1,t_2,t_3)\in\bbR^3_{\geqslant0}:t_1+t_2+t_3\leqslant1,t_1+\frac{t_3}3<x}\\
R_{2,ab}(x)&:=R_{2,a}(x)\cap R_{2,b}(x).
\end{cases}
\end{eqnarray*}
Note that {\small\begin{eqnarray*} R_{2,a}(x)&=&
\Set{(t_1,t_2,t_3)\in\bbR^3: 0\leqslant t_3<\frac34,0\leqslant
t_2<1-\frac{4t_3}3,\frac13-\frac{t_2}3-\frac{t_3}9<t_1\leqslant
1-t_2-t_3},\\
R_{2,b}(x)&=& \Set{(t_1,t_2,t_3)\in\bbR^3: 0\leqslant t_3<3x, 0\leqslant t_1<x-\frac{t_3}3,0\leqslant t_2\leqslant 1-t_1-t_3},\\
R_{2,ab}(x)&=& \Set{(t_1,t_2,t_3)\in\bbR^3: 0\leqslant
t_3<\frac{3x}2,\frac{t_3}3<t_1<x-\frac{t_3}3,
1-3t_1-\frac{t_3}3<t_2\leqslant 1-t_1-t_3}.
\end{eqnarray*}}
In such case, we can evaluate $M_2$ as follows: For
$x\in(0,\frac13)$, {\small\begin{eqnarray*} M_2 &=& \frac2{4!}\int_R
V(t_1,t_2,t_3)\tilde\cI_2(x|\bst)\dif t_1\dif t_2\dif t_3=
\frac2{4!}\int_{R_2(x)}
V(t_1,t_2,t_3)\tilde\cI_2(x|\bst)\dif t_1\dif t_2\dif t_3\\
&=& \frac2{4!}\int_{\Delta_3}V(t_1,t_2,t_3)\tilde\cI_2(x|\bst)\dif
t_1\dif t_2\dif t_3 +
\frac2{4!}\int_{R_{2,ab}(x)}V(t_1,t_2,t_3)\tilde\cI_2(x|\bst)\dif
t_1\dif t_2\dif t_3\\
&&-\frac2{4!}\int_{R_{2,a}(x)} V(t_1,t_2,t_3)\tilde\cI_2(x|\bst)\dif
t_1\dif t_2\dif t_3-\frac2{4!}\int_{R_{2,b}(x)}
V(t_1,t_2,t_3)\tilde\cI_2(x|\bst)\dif t_1\dif t_2\dif t_3\\
&=&-\frac{x (725985 x^2-560326 x+123355)}{96842627481600}\\
&&-\frac{x^8(2894 x^6-24570 x^5+81900 x^4-150150 x^3+160875
x^2-96525 x+25740)}{177124147200}\\
&&+\frac{x(637485030 x^2-525965687
x+120181250)}{99166850541158400}\\
&&-\frac{x^7(1572 x^7-17550 x^6+62712 x^5-120835 x^4+141570
x^3-106821 x^2+51480 x-12870)}{132843110400},
\end{eqnarray*}}
implying that {\small\begin{eqnarray*} M_2 &=& -\frac{499
x^{14}}{17712414720}+\frac{41 x^{13}}{151388160}-\frac{1061
x^{12}}{1135411200}+\frac{653 x^{11}}{371589120}-\frac{163
x^{10}}{82575360}+\frac{557
x^9}{412876800}\\
&&-\frac{11 x^8}{20643840}+\frac{x^7}{10321920}-\frac{90533
x^3}{84757991915520}+\frac{3677549
x^2}{7628219272396800}-\frac{613427 x}{9916685054115840}.
\end{eqnarray*}}
\item For the domain of the third integral for a given $x$, we get
that \begin{eqnarray*}
R_3(x)&:=&\Set{(t_1,t_2,t_3)\in\bbR^3_{\geqslant0}:t_1+t_2+t_3\leqslant1,3t_1+t_2+\frac{t_3}3\leqslant1,t_1+t_2+\tfrac{t_3}3\geqslant
x}\\
&=& \Pa{\Delta_3\cup R_{3,ab}(x)}\backslash (R_{3,a}(x)\cup
R_{3,b}(x)),
\end{eqnarray*}
where
\begin{eqnarray*}
\begin{cases}
R_{3,a}(x)&:=\Set{(t_1,t_2,t_3)\in\bbR^3_{\geqslant0}:t_1+t_2+t_3\leqslant1,
3t_1+t_2+\frac{t_3}3>1},\\
R_{3,b}(x)&:=\Set{(t_1,t_2,t_3)\in\bbR^3_{\geqslant0}:t_1+t_2+\frac{t_3}3<x},\\
R_{3,ab}(x)&:=R_{3,a}(x)\cap R_{3,b}(x),
\end{cases}
\end{eqnarray*}
Note that {\small\begin{eqnarray*} R_{3,a}(x)&=&
\Set{(t_1,t_2,t_3)\in\bbR^3: 0\leqslant t_3<\frac34,0\leqslant
t_2<1-\frac{4t_3}3,\frac13-\frac{t_2}3-\frac{t_3}9<t_1\leqslant
1-t_2-t_3},\\
R_{3,b}(x)&=& \Set{(t_1,t_2,t_3)\in\bbR^3: 0\leqslant
t_3<3x,0\leqslant t_2<x-\frac{t_3}3,0\leqslant t_1\leqslant
x-t_2-\frac{t_3}3},\\
R_{3,ab}(x) &=&R_{3,a}(x)\cap R_{3,b}(x)=\emptyset.
\end{eqnarray*}}
In such case, we can evaluate $M_3$ as follows: For
$x\in(0,\frac13)$,
\begin{eqnarray*}
M_3 &=& \frac2{4!}\int_R V(t_1,t_2,t_3)\tilde\cI_3(x|\bst)\dif
t_1\dif t_2\dif t_3= \frac2{4!}\int_{R_3(x)}
V(t_1,t_2,t_3)\tilde\cI_2(x|\bst)\dif t_1\dif t_2\dif t_3\\
&=& \frac2{4!}\int_{\Delta_3}V(t_1,t_2,t_3)\tilde\cI_3(x|\bst)\dif
t_1\dif t_2\dif t_3\\
&&-\frac2{4!}\int_{R_{3,a}(x)} V(t_1,t_2,t_3)\tilde\cI_3(x|\bst)\dif
t_1\dif t_2\dif t_3-\frac2{4!}\int_{R_{3,b}(x)}
V(t_1,t_2,t_3)\tilde\cI_3(x|\bst)\dif t_1\dif t_2\dif t_3\\
&=&\frac{x (2340 x^2-3341 x+1220)}{1513166054400}-\frac{x (135789030
x^2-199046263
x+74163970)}{99166850541158400}-\frac{x^{14}}{691891200},
\end{eqnarray*}
implying that {\small\begin{eqnarray*} M_3
&=&-\frac{x^{14}}{691891200}+\frac{15013
x^3}{84757991915520}-\frac{1531501
x^2}{7628219272396800}+\frac{115799 x}{1983337010823168}.
\end{eqnarray*}}
\end{enumerate}
In summary,
\begin{eqnarray}
M_1+M_2+M_3 = \frac{(1-x)^9 x^2 (33 x^3+162 x^2+72
x+8)}{40874803200}.
\end{eqnarray}
This indicates, together with Eq.~\eqref{eq:expressionf(a)}, that
\begin{eqnarray}
f(a) = \frac{\pi^5}{319334400}(1-a)^9\Pa{33a^3+162a^2+72a+8},
\end{eqnarray}
where $a\in(0,\frac13)$.
\end{proof}

\begin{remark}
We see that the parameter $x$ is restricted to the open interval
$(0,\frac13)$ due to fact that the plane $\abs{t_1-\frac{t_3}3}=x$
intersects with $R$ iff $x\in[0,\frac13]$. Note that the
integrations involved are performed by the mathematical software
\textsc{Mathematica}.
\end{remark}

In the recent paper \cite{Huong2024}, Huong and Khoi published a
proof of the exact separability probability $\frac{8}{33}$ for the
two-qubit system, resolving a long-standing conjecture originally
proposed by Slater. Their result demonstrates that within the
Hilbert-Schmidt measure, the proportion of separable states among
all two-qubit density matrices is precisely $\frac{8}{33}$. This
breakthrough provides a definitive answer to a fundamental question
in quantum information theory concerning the geometric prevalence of
entanglement in simple quantum systems, achieved through a novel
combination of advanced geometric probability techniques and
symmetry arguments applied to the structure of the two-qubit state
space.
\begin{thrm}
The separability probability of all two-qubit states is given by
\begin{eqnarray}
P^{(2\times2)}_\sep =
\frac{\vol_{\rH\rS}(\rD_{\sep}(\bbC^2\ot\bbC^2))}{\vol_{\rH\rS}(\rD(\bbC^2\ot\bbC^2))}=\frac8{33}.
\end{eqnarray}
\end{thrm}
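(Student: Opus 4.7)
The plan is to assemble the result from the building blocks that have already been established in the preceding sections. All of the hard analytical work (computing $\vol_{\rH\rS}(\rD^0(\bbC^2\ot\bbC^2))$, computing $f(a)$ on the interval $(0,\tfrac13)$ via the Duistermaat-Heckman density, and the Lovas--Andai constancy theorem) is already in place, so the proof proposal is essentially a bookkeeping argument that reduces the global probability to a point evaluation at $a=0$.

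First, I would exploit the continuity of $f(a)=\vol_{\rH\rS}(\rD^a(\bbC^2\ot\bbC^2)\cap \rD_{\mathrm{ss}}(\bbC^2\ot\bbC^2))$, which was noted in the paragraph preceding Proposition~\ref{prop:f(a)} as a consequence of the linearity of the partial trace. Taking $a\to 0^+$ in the closed-form expression of Proposition~\ref{prop:f(a)} gives $f(0)=\tfrac{\pi^5}{319334400}\cdot 8 = \tfrac{8\pi^5}{319334400}$. By Proposition~\ref{prop:HuongKhoi}, a state $\rho\in\rD^0(\bbC^2\ot\bbC^2)$ is separable if and only if $\rho\in\rD_{\mathrm{ss}}(\bbC^2\ot\bbC^2)$, so $f(0)=\vol_{\rH\rS}(\rD^0_\sep(\bbC^2\ot\bbC^2))$.

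Next, I would compute the conditioned separability probability at $a=0$:
\begin{equation*}
P^{(2\times2)}_{\sep}(0) \;=\; \frac{\vol_{\rH\rS}(\rD^0_\sep(\bbC^2\ot\bbC^2))}{\vol_{\rH\rS}(\rD^0(\bbC^2\ot\bbC^2))} \;=\; \frac{8\pi^5/319334400}{\pi^5/9676800} \;=\; \frac{8\cdot 9676800}{319334400} \;=\; \frac{8}{33},
\end{equation*}
using Eq.~\eqref{eq:zerovol} in the denominator. Invoking Theorem~\ref{eq:Lovas2017} (the Lovas--Andai constancy theorem), $P^{(2\times 2)}_{\sep}(a)=8/33$ for \emph{every} $a\in[0,1)$, and by Proposition~\ref{prop:blochlengtha} this extends to every $\bsa$ in the open Bloch ball $B_1$.

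Finally, I would lift this pointwise constancy to the global ratio via Proposition~\ref{prop:volHS}. Writing
\begin{equation*}
\vol_{\rH\rS}(\rD_{\sep}(\bbC^2\ot\bbC^2)) \;=\; \tfrac{\pi}{2}\!\int_0^1 a^2\,\vol_{\rH\rS}(\rD^a_\sep(\bbC^2\ot\bbC^2))\,\dif a \;=\; \tfrac{8}{33}\cdot \tfrac{\pi}{2}\!\int_0^1 a^2\,\vol_{\rH\rS}(\rD^a(\bbC^2\ot\bbC^2))\,\dif a,
\end{equation*}
and recognizing the right-hand integral as $\vol_{\rH\rS}(\rD(\bbC^2\ot\bbC^2))$ (again by Proposition~\ref{prop:volHS}), one obtains $P^{(2\times 2)}_{\sep}=8/33$. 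The main conceptual obstacle is not in this final assembly but was already overcome in the proof of Proposition~\ref{prop:f(a)}, where the non-Abelian Duistermaat-Heckman density had to be convolved against nine Dirac masses and integrated piecewise over three chambers of the positive Weyl chamber of $\SU(2)\times\SU(2)$; here the only subtlety is to confirm that the factor $33$ appearing in the denominator $319334400=33\cdot 9676800$ is indeed the rational prefactor of the cubic $33a^3+162a^2+72a+8$ evaluated at $a=0$, so that the two normalizations cancel to yield the clean fraction $8/33$.
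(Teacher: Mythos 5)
Your proposal is correct and follows essentially the same route as the paper: both rest on Proposition~\ref{prop:HuongKhoi}, the evaluation $f(0)=\frac{\pi^5}{39916800}=\frac{8\pi^5}{319334400}$ from Proposition~\ref{prop:f(a)}, the value $\vol_{\rH\rS}(\rD^0(\bbC^2\ot\bbC^2))=\frac{\pi^5}{9676800}$, the Lovas--Andai constancy theorem, and the radial integration of Proposition~\ref{prop:volHS}. The only difference is ordering --- you compute $P^{(2\times2)}_{\sep}(0)=\frac{8}{33}$ first and then lift it to the global ratio, whereas the paper first shows $P^{(2\times2)}_{\sep}=P^{(2\times2)}_{\sep}(0)$ and then evaluates --- which is immaterial.
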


\begin{proof}
From Theorem~\ref{eq:Lovas2017}, we have seen that
$P^{(2\times2)}_\sep(a)$ is independent of $a\in[0,1)$, which means
that $P^{(2\times2)}_\sep(a)=P^{(2\times2)}_\sep(0)$ for all
$a\in[0,1)$. Then
\begin{eqnarray}\label{eq:zeroseprob}
\vol_{\rH\rS}\Pa{\rD^a_\sep(\bbC^2\ot\bbC^2)}
=P^{(2\times2)}_\sep(0) \vol_{\rH\rS}\Pa{\rD^a(\bbC^2\ot\bbC^2)}.
\end{eqnarray}
By multiplying $\frac\pi2 a^2$ on both sides above, and taking the
integration over $[0,1)$, we see that
\begin{eqnarray*}
\frac\pi2\int^1_0a^2\vol_{\rH\rS}\Pa{\rD^a_\sep(\bbC^2\ot\bbC^2)}\dif
a &=& P^{(2\times2)}_\sep(0)\frac\pi2\int^1_0a^2
\vol_{\rH\rS}\Pa{\rD^a(\bbC^2\ot\bbC^2)}\dif a\\
&=& P^{(2\times2)}_\sep(0) \vol_{\rH\rS} \Pa{\rD(\bbC^2\ot\bbC^2)},
\end{eqnarray*}
where we used the result in Proposition~\ref{prop:volHS}.
Analogously,
\begin{eqnarray*}
\frac\pi2\int^1_0a^2\vol_{\rH\rS}\Pa{\rD^a_\sep(\bbC^2\ot\bbC^2)}\dif
a = \vol_{\rH\rS} \Pa{\rD_\sep(\bbC^2\ot\bbC^2)}.
\end{eqnarray*}
These formulas indicate that
$P^{(2\times2)}_\sep=P^{(2\times2)}_\sep(0)$. With
Eq.~\eqref{eq:zeroseprob}, it suffices to calculate
\begin{eqnarray}
P^{(2\times2)}_\sep(0) =\frac{\vol_{\rH\rS}
\Pa{\rD^0_\sep(\bbC^2\ot\bbC^2)}}{\vol_{\rH\rS}
\Pa{\rD^0(\bbC^2\ot\bbC^2)}}.
\end{eqnarray}
To this end, it suffices to calculate
$\vol_{\rH\rS}\Pa{\rD^0_\sep(\bbC^2\ot\bbC^2)}$ since we have
already know the denominator $\vol_{\rH\rS}
\Pa{\rD^0(\bbC^2\ot\bbC^2)}=\frac{\pi^5}{9676800}$ from
Eq.~\eqref{eq:zerovol}. Clearly, from Proposition~\ref{prop:f(a)},
$$
f(0)=\vol_{\rH\rS}\Pa{\rD^0_\sep(\bbC^2\ot\bbC^2)} =
\frac{\pi^5}{39916800},
$$
implying that
\begin{eqnarray}
P^{(2\times2)}_\sep(0) =\frac{\vol_{\rH\rS}
\Pa{\rD^0_\sep(\bbC^2\ot\bbC^2)}}{\vol_{\rH\rS}
\Pa{\rD^0(\bbC^2\ot\bbC^2)}} =
\frac{\frac{\pi^5}{39916800}}{\frac{\pi^5}{9676800}} = \frac8{33}.
\end{eqnarray}
This completes the proof.
\end{proof}

\section{Concluding remarks}

This study establishes a comprehensive geometric derivation of the
exact separability probability 8/33 for two-qubit states under the
Hilbert-Schmidt measure --- a rigorously confirmed result previously
inaccessible. By leveraging the intrinsic connection between
Hilbert-Schmidt geometry and symplectic geometry formalized through
the Duistermaat-Heckman (DH) measure, and explicitly computing the
Hilbert-Schmidt volume of the full state space, volumes of critical
substructures (flag manifolds and regular adjoint orbits), and the
symplectic volumes of corresponding regular co-adjoint orbits, we
integrate these perspectives to isolate the volume of separable
states and rigorously confirm the 8/33 ratio.

This work achieves pedagogical clarity through a self-contained
derivation of a fundamental quantum information constant and
conceptual synthesis by unifying Hilbert-Schmidt geometry,
symplectic mechanics, and representation-theoretic tools within a
coherent probabilistic framework. Beyond providing an alternative
pathway to this constant, our results elucidate the geometric
structure governing the transition from classical correlation to
quantum entanglement. Future work may extend this framework to
higher-dimensional systems, alternative metrics (e.g., Bures), and
generalized entanglement witnesses, underscoring its potential as a
universal paradigm for probing the classical-quantum correlation
boundary.

\subsection*{Acknowledgement}
This research is supported by Zhejiang Provincial Natural Science
Foundation of China under Grants No. LZ23A010005, and by NSFC under
Grants No.11971140.


\end{document}